\documentclass{article}

\usepackage{amsmath,amssymb,amsthm,fullpage,mathrsfs,pgf,tikz,caption,subcaption,mathtools,mathabx}
\usepackage[ruled,vlined,noresetcount]{algorithm2e}
\usepackage{amsmath,amssymb,amsthm,mathtools}
\usepackage{thmtools}
\usepackage[utf8]{inputenc} 
\usepackage[T1]{fontenc}    
\usepackage{hyperref}       
\usepackage{url}            
\usepackage{booktabs}       
\usepackage{amsfonts}       
\usepackage{nicefrac}       
\usepackage{microtype}      
\usepackage{times}
\usepackage{bbm}
\usepackage{enumitem}
\usepackage{xcolor}
\usepackage{cleveref}
\usepackage{bm}
\usepackage{float}
\usepackage{tikz-cd}
\usepackage[margin=1in]{geometry}
\usetikzlibrary{trees}

\newtheorem{theorem}{Theorem}[section]
\newtheorem{corollary}[theorem]{Corollary}

\newtheorem{lemma}[theorem]{Lemma}
\newtheorem{definition}[theorem]{Definition}

\newtheorem{remark}[theorem]{Remark}


\newcommand{\EE}{\mathbb{E}}
\newcommand{\FF}{\mathbb{F}}
\newcommand{\NN}{\mathbb{N}}
\newcommand{\RR}{\mathbb{R}}
\newcommand{\ZZ}{\mathbb{Z}}

\newcommand{\cC}{\mathcal{C}}

\newcommand{\cS}{\mathcal{S}}

\newcommand{\cX}{\mathcal{X}}


\DeclarePairedDelimiter{\ang}{\langle}{\rangle}

\newcommand{\interior}[1]{%
  {\kern0pt#1}^{\mathrm{o}}%
}

\let\ker\relax
\let\im\relax
\DeclareMathOperator{\ker}{ker}
\DeclareMathOperator{\im}{im}

\DeclareMathOperator{\cl}{cl}

\DeclareMathOperator{\poly}{poly}

\newcommand{\red}[1]{{\color{red} #1}}

\newcommand{\blue}[1]{{\color{blue} #1}}

\def\F2{\mathbb{F}_2}

\title{Simplified Quantum Weight Reduction with Optimal Bounds}
\author{ Min-Hsiu Hsieh\thanks{Foxconn Research, Taipei, Taiwan. Email: \texttt{min-hsiu.hsieh@foxconn.com}.} \and Xingjian Li\thanks{ Department of Computer Science and Technology, Tsinghua University, Beijing, China. Email: \texttt{lxj22@mails.tsinghua.edu.cn}.} \and Ting-Chun Lin\thanks{Department of Physics, University of California San Diego, CA, and Foxconn Research, Taipei, Taiwan. Email: \texttt{til022@ucsd.edu}.} }

\begin{document}

\sloppy

\maketitle

\begin{abstract}
  Quantum weight reduction is the task of transforming a quantum code
    with large check weight
    into one with small check weight.
  Low-weight codes are essential for implementing quantum error correction
    on physical hardware,
    since high-weight measurements cannot be executed reliably.
  Weight reduction also serves as a critical theoretical tool,
    which may be relevant to the quantum PCP conjecture.

  We introduce a new procedure for quantum weight reduction
    that combines geometric insights with coning techniques,
    which simplifies Hastings' previous approach
    while achieving better parameters.
  Given an arbitrary $[[n,k,d]]$ quantum code with weight $w$,
    our method produces a code with parameters
    $[[O(n w^2 \log w), k, \Omega(d w)]]$
    with check weight $5$ and qubit weight $6$.

  When applied to random dense CSS codes,
    our procedure yields explicit quantum codes
    that surpass the square-root distance barrier,
    achieving parameters $[[n, \tilde O(n^{1/3}), \tilde \Omega(n^{2/3})]]$.
  Furthermore, these codes admit a three-dimensional embedding
    that saturates the Bravyi-Poulin-Terhal (BPT) bound.

  As a further application,
    our weight reduction technique improves
    fault-tolerant logical operator measurements
    by reducing the number of ancilla qubits.
\end{abstract}

\section{Introduction}

A quantum low-density parity-check (qLDPC) code
  is a quantum stabilizer code
  where every check acts on $O(1)$ qubits
  and every qubit participates in $O(1)$ checks.
Finding qLDPC codes with both large rate and large distance
  was an important open question,
  recently resolved in \cite{panteleevAsymptoticallyGoodQuantum2022}.
The main strategy behind these asymptotic families of qLDPC codes
  is a one-shot construction,
  which combines a bounded-degree complex
    and suitable local codes.
All known constructions of quantum LDPC codes
  follow this approach~\cite{panteleevAsymptoticallyGoodQuantum2022,leverrierQuantumTannerCodes2022,dinurGoodQuantumLDPC2023}.
A key advantage of this approach
  is that it yields explicit constructions of qLDPC codes.

However, there is another route to constructing qLDPC codes---an iterative strategy.
The idea is to alternate between two complementary procedures:
  one that increases the distance at the cost of higher check weight,
  and another that reduces the check weight at the cost of lower (relative) distance.
By iterating these two procedures,
  one may hope to obtain a qLDPC code with good parameters.
This type of approach has been explored recently in \cite{golowichQuantumLDPCCodes2025}.

One important theoretical motivation for studying the iterative approach
  is its appearance in many classical results,
  including the construction of spectral expanders via the zig-zag product \cite{reingold2002entropy}
  and Dinur's proof of the PCP theorem~\cite{dinurPCPTheoremGap2007}.
A deeper understanding of such iterative strategies may ultimately
  lead to new insights towards the qPCP conjecture.

These techniques may also find applications in practical error correction.
In particular, some constructions, such as the asymptotically good qLDPC codes,
  may have check weights on the order of 100
  and could benefit from weight reduction to achieve lower check weights.
In addition, weight reduction has proven useful for measuring logical operators \cite{williamson2024low,ide2025fault,cowtanParallelLogicalMeasurements2025}.
By viewing logical operators as high-weight stabilizers,
  one can apply the same reduction techniques
  to perform fault-tolerant measurements.

Along this line of research, the major result is the weight reduction scheme by Hastings~\cite{hastingsQuantumWeightReduction2023}\footnote{Most of the ideas are proposed in 2016~\cite{hastingsWeightReductionQuantum2017}, but an error was spotted in that paper, thus we will follow the 2021 version.},
which describes a sequence of procedures that transforms a $[[n,k,d]]$ quantum code with weight $w$ to a constant-weight quantum code with $\poly(w)$ increase in number of logical qubits and $\poly(w)$ decrease in the distance $d$.
Following this line of work, there has been a series of works that study the effect of Hastings' weight reduction procedure, for example, its effect on code's soundness~\cite{willsTradeoffConstructionsQuantum2024}, its behaviour on explicit qLDPC codes~\cite{saboWeightReducedStabilizerCodes2024}, and the fault tolerance of the weight reduction process~\cite{tanEffectiveDistanceHigher2024}.

Hastings’ result provides a general method for performing weight reduction on quantum codes, but the procedure has certain limitations.
The four steps involved in the weight reduction process are technically intricate. Subsequent works~\cite{willsTradeoffConstructionsQuantum2024,saboWeightReducedStabilizerCodes2024} give detailed explanations of the procedure. However, since each step can increase the number of qubits and potentially decrease the code distance, the cumulative effect of the entire process is difficult to characterize.
According to the analysis in~\cite{saboWeightReducedStabilizerCodes2024}, the constant overhead associated with Hastings’s scheme is also significant.
Such overhead presents challenges for near-term implementations.
In addition, the weight reduction process is not symmetric with respect to $X$ and $Z$ stabilizers, whereas many CSS codes are constructed symmetrically.
These observations suggest that there may be room for developing more efficient weight reduction methods.

\subsection{Our results}

In this paper, we propose a new approach for weight reduction.

\begin{theorem}[Main theorem]\label{thm:main}
  Given a $[[n,k,d]]$ quantum code with weight $\leq w$,
  there exists a weight reduction procedure that generates
  a $[[O(nw^2\log w),k,\Omega(dw)]]$ quantum code
  with check weight $\le 5$ and qubit weight $\le 6$.
\end{theorem}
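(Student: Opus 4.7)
The plan is to view the input code as a length-three chain complex $C_2 \xrightarrow{\partial_Z} C_1 \xrightarrow{\partial_X} C_0$, with qubits indexed by $C_1$, $Z$-checks by $C_2$, and $X$-checks by $C_0$; if the input is not CSS, I would first apply the standard doubling trick to obtain a CSS code with comparable parameters, so that the weight reduction can be carried out symmetrically in $X$ and $Z$. In this language, the check-weight bound says that every column of $\partial_Z$ and every row of $\partial_X$ has weight at most $w$, while the qubit-weight bound corresponds to each row of $\partial_Z$ and each column of $\partial_X$ having weight at most $w$. Working throughout in this topological setting is where the geometric insights should enter, since each local modification of the code corresponds to a controlled modification of the underlying chain complex.

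The heart of the construction is a coning operation that replaces each high-weight check by a balanced tree of ancilla qubits together with a set of small auxiliary checks whose product recovers the original stabilizer. For a check of weight $w$, I would place its $w$ original qubits at the leaves of a depth-$O(\log w)$ balanced binary tree and introduce a new ancilla qubit at each internal node; each internal edge becomes a local check of weight at most $5$. The product of all tree-checks telescopes to the original stabilizer, so the codespace is preserved, and the constants $5$ and $6$ come out from careful local accounting at the internal nodes. Doing this symmetrically for $X$- and $Z$-checks and then iterating the dual procedure to reduce qubit weight to $\leq 6$ accounts for the final $O(nw^2\log w)$ qubit count: each of the $O(n)$ original checks contributes $O(w\log w)$ new qubits for check reduction, and reducing qubit weight on top of that introduces another factor of $w$.

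The remaining, and main, obstacle is establishing the distance bound $d' = \Omega(dw)$. I would argue by a cleaning-type lemma: any minimum-weight representative of a nontrivial logical in the new code can be decomposed into a contribution on original qubits and a contribution on ancilla qubits. If its support on the original qubits already has weight at least $d$, we are done up to losing a constant. Otherwise, one uses the tree structure of each cone to push ancilla support back onto the original qubits by multiplying by tree-checks; the key estimate is that any nontrivial chain passing through a cone of width $w$ must traverse $\Omega(w)$ tree edges, so every original qubit the logical touches forces $\Omega(w)$ ancilla qubits into its support.

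The hard part will be making this cleaning/tree-traversal argument tight enough to yield the full factor of $w$ in the distance rather than only a constant or a $\sqrt{w}$. This is precisely where I expect the geometric viewpoint to pay off: by treating each cone as a genuine topological object (a cone over a small complex) rather than as an ad hoc gadget, one can show that short chains in the new complex cannot coexist with the new checks, so any nontrivial homology class is forced to be globally expanded by $\Omega(w)$. All the other pieces---the CSS reduction, the coning itself, the dualization to control qubit weight, and the qubit-count accounting---are technically delicate but follow from standard manipulations of chain complexes; the distance amplification is the genuinely new content.
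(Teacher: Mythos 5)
Your proposal correctly identifies the chain-complex framing, the CSS reduction, the coning idea, and the cleaning strategy for distance, but the central gadget and the distance argument both diverge from what actually works, and the divergence is where the theorem lives.

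First, a balanced binary tree of ancillas over the $w$ qubits in a check is essentially Hastings' gauging/copying step. It does reduce check weight, but it cannot by itself give a distance of $\Omega(dw)$: replacing a high-weight check by a tree of low-weight checks never amplifies distance, because the trivial lift of a weight-$d$ logical operator of the original code (set the ancillas in each tree according to the partial sums of the leaves) still has weight $O(d\cdot\text{depth}) = O(d\log w)$. The factor $w$ in the paper's distance bound does not come from the check gadgets at all; it comes from replacing each \emph{qubit} by a $2$-dimensional gadget $\cC^q$ (a product of two comb graphs, essentially a surface code of linear size $\Theta(w)$), which has local $X$- and $Z$-distance $\Theta(w)$. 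Your proposal has no analogue of this thickening at qubits, so it cannot produce the claimed distance.

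Second, the specific estimate you lean on---``any nontrivial chain passing through a cone of width $w$ must traverse $\Omega(w)$ tree edges''---is false for a balanced binary tree: its diameter is $O(\log w)$, so a chain can pass through the whole gadget while touching only $O(\log w)$ edges. To make the cleaning argument work the paper needs two structural properties you do not supply: (i) the boundary graph $\partial\cX^x$ of each check cone must be a bounded-degree \emph{expander} (this requires adding dummy faces and superimposing a degree-$3$ expander, not a tree), so that the Cheeger constant in the isoperimetric hypothesis of the cleaning lemma is bounded away from zero; and (ii) each qubit gadget must carry a local code of distance $\Theta(w)$ so that the minimum over canonical representatives on the qubit layers is $\ge d_x^q \cdot d_x(C) = \Omega(w d)$. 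Only the combination of expansion on check cones and thickening on qubit cones yields $\Omega(dw)$. Without (i), the cleaning step can collapse the logical to small weight on a non-expanding boundary; without (ii), even a perfectly cleaned logical has weight only $O(d \log w)$.

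Third, the $O(nw^2\log w)$ count does not come from ``check gadget $\times$ dual pass'': it comes from the square complex structure (each qubit participates in up to $w^2$ squares, one per incident $X$-check/$Z$-check pair), so a qubit gadget already has size $\Theta(w^2)$, and the coning of each check's link graph contributes the extra $\log w$. There is no second ``dual'' iteration: the paper's construction is symmetric in $X$ and $Z$ in a single pass, and the commutation relations are enforced automatically by the face structure of the square complex, which your tree gadget does not provide. You would need to explain how $X$ and $Z$ tree-checks sharing ancillas commute---this is not automatic and is precisely why the paper works with the $2$-dimensional square complex rather than two independent $1$-dimensional Tanner graphs.
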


Unlike Hastings' construction,
  our method relies solely on coning and treat $X$and $Z$-checks in a symmetric way.
In addition, our construction improves the code parameters
  and is believed to be optimal within the current framework (see~\Cref{sec:optimal} on optimality).
Besides the asymptotic improvement,
  we expect our construction to have a smaller qubit blowup
  when applied to practical quantum codes.

We also study another weight reduction procedure for dense quantum codes
  inspired by the layer code construction~\cite{williamsonLayerCodes2023}.
\begin{theorem}\label{thm:dense}
  Given a $[[n,k,d]]$ quantum code with weight $w=\Omega(n)$,
  there exists a weight reduction procedure that generates
  a $[[O(n^3),k,\Omega(d n/\log n)]]$ quantum code with stabilizer weight $\leq 6$ and qubit weight $\leq 6$.
  Moreover, the new code has a geometrically local embedding in $\mathbb{R}^3$.
\end{theorem}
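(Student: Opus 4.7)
The plan is to adapt the \emph{layer code} construction of \cite{williamsonLayerCodes2023} to the dense CSS setting. Viewing each high-weight check of the original code as a 2D surface-code patch whose boundary operator equals the check, we arrange these patches orthogonally in $\mathbb{R}^3$; the density hypothesis $w=\Omega(n)$ ensures that only $O(n)$ patches of each type are needed, which is exactly what yields the target volume $O(n^3)$.

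More concretely, I would first standardize the input as a CSS code with $m_X, m_Z \leq n$ checks of each type, each of weight $O(n)$. Place the $n$ original qubits along the $z$-axis of $\mathbb{R}^3$. For each $X$-check $c^X_i$, attach a 2D surface-code patch $P^X_i$ lying in the plane $y=i$, of linear dimensions $\Theta(n)\times\Theta(n)$; its rough boundary is glued to the qubits participating in $c^X_i$ via short ``pipes'' running along the $z$-axis. Symmetrically, for each $Z$-check $c^Z_j$ attach a patch $P^Z_j$ in the plane $x=j$, with rough boundary reversed so that $Z$-boundary operators are checks. The new qubits are the original qubits together with the bulk qubits of all patches. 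The new stabilizers consist of (i) weight-$4$ bulk surface-code stabilizers of each patch and (ii) weight-$\leq 6$ boundary and gluing stabilizers arranged so that the patch-boundary $X$-operator of $P^X_i$ equals $c^X_i$ exactly, and analogously for $Z$. Since there are $O(n)$ patches and each has $O(n^2)$ qubits, the total is $O(n^3)$, and the layout is 3D local by design.

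For the distance, a weight-minimal logical $X$-operator $L$ of the new code projects, by restriction to the $z$-axis, to a logical $X$-operator of the original code, and hence touches at least $d$ original qubits. For each such qubit $q$, the portion of $L$ lying in the transverse $Z$-patches at $q$ must form a string from $q$ to a matching rough boundary; by the standard surface-code distance argument on a patch of linear size $\Theta(n)$, any such string has weight $\Omega(n/\log n)$, where the logarithmic loss accounts for rerouting through regions shared by multiple patches. Summing over the $\Omega(d)$ touched qubits yields $|L|=\Omega(dn/\log n)$.

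The main obstacle is making the distance argument rigorous. One must establish the per-qubit $\Omega(n/\log n)$ lower bound modulo the combined stabilizer group of \emph{all} intersecting patches, which calls for a careful $\mathbb{F}_2$-homological cut argument on the full layered 3D complex rather than on a single patch in isolation. A secondary technical point is specifying the boundary and gluing stabilizers so that the patch-boundary operator matches the original check exactly while preserving both the weight-$6$ bound and 3D locality; this should be handled by inserting small ``collar'' regions along the $z$-axis as in standard layer-code constructions. Once these pieces are in place, the weight and locality bounds follow from the weight-$4$ bulk surface-code stabilizers together with the bounded-valence gluing, and $k$ is preserved since the new stabilizer group is logically equivalent to the original one.
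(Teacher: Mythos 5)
The central gap in your proposal is that you place the original qubits as physical qubits along the $z$-axis, while only the checks become 2D patches. In the layer code construction that both you and the paper are trying to adapt, \emph{all three} vertex types of the Tanner graph — $X$-checks, qubits, and $Z$-checks — become planes in $\RR^3$. The paper makes this explicit: following \Cref{thm:weight-reduction-simplicial-complex-dense-new}, each original qubit is replaced by a surface-code-like patch of linear size $\Theta(n)$, and a logical operator of the original code corresponds to a family of strings running across those qubit patches, which is precisely what gives the factor of $n$ in the distance. If, as in your layout, the original qubits remain physical qubits of the new code, then any weight-$d$ logical operator of the original code is directly realizable on those $d$ qubits, so the distance of the new code cannot exceed $d$. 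Your argument that ``the portion of $L$ lying in the transverse $Z$-patches at $q$ must form a string'' is exactly where this fails: $L$ need not have any support on the patches at all, since flipping the original qubits already commutes with all checks.

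Two secondary discrepancies. First, even with the correction that qubits become planes, the $\log n$ loss in the paper does not come from ``rerouting through regions shared by multiple patches.'' The paper's route is to use the homotopy-equivalence / cleaning framework: contract each $X$-layer to a point, each qubit layer to a line (a repetition code), and reduce to the distance of an intermediate code obtained by splitting the $Z$-checks into lines; for a random dense CSS code the $\log n$ loss then arises from a union bound over strings of consecutive $1$s attached to the split $Z$-check lines (\Cref{sec:dense-reduce-layer-code}). Second, the commutation of the new $X$ and $Z$ stabilizers is not automatic from ``pipes along the $z$-axis''; the paper handles this by pairing the shared qubits of each $(x,z)$ pair via the $y$-defect gadget of \Cref{fig:defect-struct}, which is an essential and non-obvious ingredient that your sketch leaves entirely to ``collar regions.'' I would recommend reworking the construction so that the qubits also become $\Theta(n)\times\Theta(n)$ patches, then reprove the distance using the cleaning argument of \Cref{sec:distance} (or the contraction argument of \Cref{sec:dense-reduce-layer-code}) rather than a patch-by-patch surface-code bound.
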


The $d'$ in~\Cref{thm:dense} follows the same discussion as in~\Cref{thm:main}.
The construction takes inspiration from the layer code construction~\cite{williamsonLayerCodes2023}.
In the original construction of layer codes, they need to apply their construction to a good LDPC code, that is, a quantum LDPC code with linear distance and dimension, to obtain an optimal geometrically local code in 3D.
However, with our new distance definition and analysis, we show that by weight reducing a dense random CSS code with $[[n,\Omega(n),\Omega(n)]]$, we are also able to obtain an optimal geometrically local code with $[[n,\Omega(n^{1/3}),\Omega(n^{2/3})]]$ in 3D.

As an application of these results,
  we can break the square-root distance barrier
  by applying our weight reduction procedure to a random dense CSS code.
\begin{corollary}
  By weight reducing a random dense CSS code,
    we obtain an almost optimal geometrically local code in $\mathbb{R}^3$ that saturates the bounds in~\cite{bravyiNogoTheoremTwodimensional2009,bravyiTradeoffsReliableQuantum2010}
    up to polylogarithmic factors.
\end{corollary}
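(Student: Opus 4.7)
The plan is to feed a random dense CSS code into the weight-reduction procedure of \Cref{thm:dense}, and then verify by direct arithmetic that the resulting 3D geometrically local code matches both the Bravyi-Terhal distance bound and the Bravyi-Poulin-Terhal rate-distance trade-off up to polylogarithmic factors.

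First I would invoke the existence of a random dense CSS code with parameters $[[n,\Theta(n),\Theta(n)]]$ and check weight $w=\Theta(n)$. Such a code arises, for instance, by sampling $H_X$ as a uniformly random binary matrix of size $\Theta(n)\times n$ and then sampling $H_Z$ uniformly among matrices of comparable shape whose rows lie in $\ker H_X^{\top}$. The rank of a random binary matrix is tightly concentrated, pinning down $k=\Theta(n)$; a Gilbert-Varshamov-type union bound shows that the minimum weight of a non-trivial element in $\ker H_X\setminus\mathrm{rowspan}(H_Z)$, and of its dual, is $\Theta(n)$ with overwhelming probability; and a Chernoff bound ensures every row has weight $\Theta(n)$, meeting the hypothesis $w=\Omega(n)$ of \Cref{thm:dense}.

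Next, I apply \Cref{thm:dense} to this code. The output is a quantum code with parameters $[[O(n^{3}),\,\Omega(n),\,\Omega(n^{2}/\log n)]]$, stabilizer and qubit weights at most $6$, and a geometrically local embedding in $\mathbb{R}^{3}$. Setting $N := O(n^{3})$, so that $n=\Theta(N^{1/3})$, the parameters transform into
\[
  \bigl[[\,N,\;\Omega(N^{1/3}),\;\Omega(N^{2/3}/\log N)\,]\bigr].
\]
The Bravyi-Terhal bound in 3D gives $d\le O(N^{2/3})$, which we match up to a $\log N$ factor. The Bravyi-Poulin-Terhal trade-off gives $kd\le O(N)$, and our construction achieves $kd=\Omega(N/\log N)$, again matching up to a logarithmic factor. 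Hence both bounds are saturated simultaneously up to polylogarithmic slack.

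The only non-trivial ingredient beyond \Cref{thm:dense} is the first step, namely verifying that a random dense CSS code simultaneously achieves linear rate, linear distance, and dense rows with high probability; this is standard but does require a careful joint union bound over rank, minimum distance, and row weights. Everything else is arithmetic on the output of \Cref{thm:dense}.
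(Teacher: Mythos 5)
Your proposal is correct and follows essentially the same route as the paper: instantiate a random dense CSS code with parameters $[[n,\Theta(n),\Theta(n)]]$ and check weight $\Theta(n)$, feed it through \Cref{thm:dense} to obtain a 3D-geometrically-local $[[O(n^3),\Theta(n),\Omega(n^2/\log n)]]$ code, then check by arithmetic with $N=\Theta(n^3)$ that both $d\le O(N^{2/3})$ and $kd\le O(N)$ are met up to a $\log N$ factor. The paper likewise treats the existence of the random dense CSS code as a standard high-probability fact rather than proving it in detail, so your extra sketch of the rank/GV/Chernoff arguments is harmless additional detail, not a deviation.
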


Finally, our constructions rely on results on the weight reduction of simplicial complexes.
Although for quantum codes,
  we only utilized results of 2-dimensional complexes,
  we also developed a process for general $t$-dimensonal complexes.
The construction can find potential applications in weight reducing quantum locally testable codes,
  as it can be viewed as a five term chain complex~\cite{dinurExpansionHighDimensionalCubical2024}.

\subsection{History of weight reduction}
\label{sec:history}

The main prior study of weight reduction
  is the paper by Hastings~\cite{hastingsQuantumWeightReduction2023},
  though some of these ideas have already been expressed
  as early as~\cite[Section VI]{bravyiHomologicalProductCodes2014}.
The main inspiration of Hastings's weight reduction routine is turning a code into a manifold,
which is developed in~\cite{freedmanBuildingManifoldsQuantum2021}.
Loosely speaking, the construction takes the chain complex corresponding to the code and reversely constructs a cell complex according to the boundary operators.
The cell complex can be viewed as cellulation of a certain manifold.
For non LDPC codes, there could be cells with $\Omega(n)$ neighbors.
However, it is hoped that through some local operations as cell decompositions, we can refine the cellulation and obtain a homological equivalent cellulation where each cell has bounded $O(1)$ neighbors.
The refined cell complex naturally defines a bounded weight quantum code as our target.

 However, to turn the above idea into an actual proof, there still exist several barriers.
 To transform a code into a manifold as shown in~\cite{freedmanBuildingManifoldsQuantum2021}, it requires a ``lifting'' operation that takes the code to a chain complex over $\ZZ$, and the effect of such operation on the weight of boundary operators is unclear.
 Also, in order to keep the weight reduction process as efficient as possible, we prefer to keep the process of weight reduction restricted to codes, rather than with an additional step of translating to manifolds.

 In~\cite{hastingsQuantumWeightReduction2023}, Hastings described a four step process on codes that implements weight reduction.
 Using $w_X,w_Z$ for the weight of $X/Z$ stabilizers, and $q_X,q_Z$ for the number of $X/Z$ stabilizer acting on the qubits, we describe the effect of each operation as follows:

 \begin{enumerate}
     \item Copying: Reduces $q_X$ to a constant.
     \item Gauging: Reduces $w_X$ to a constant, while preserving $q_X$ being constant.
     \item Thickening and Choosing Heights: Reduces $q_Z$ to a constant, while preserving $w_X,q_X$ being constant.
     \item Coning: Reduces $w_Z$ to a constant, while preserving $w_X,q_X,q_Z$ being constant.
 \end{enumerate}

These four steps combined give a full weight reduction process, and readers can refer to~\cite{willsTradeoffConstructionsQuantum2024,saboWeightReducedStabilizerCodes2024} for detailed expositions of the four steps.
Each operation constructs a homology equivalence between two chain complexes, as indicated by previous intuitions.

For applications of the weight reduction process,
Hastings has indeed applied this method to construct an LDPC code with parameter $[[N,\tilde{\Omega}(N^{1/3}),\tilde{\Omega}(N^{2/3})]]$,
by only applying the weight reduction process on a random CSS code with low weight $X$ stabilizers and linear distance.
This construction breaks the famous square root distance barrier alone of quantum LDPC codes, using weight reduction techniques only.
Combined with distance balancing techniques from~\cite{evraDecodableQuantumLDPC2024}, the result can be improved to $[[N,O(N^{\alpha}),O(N^{1-\alpha/2})]]$ for $\alpha\in [2/3,1]$. In~\cite{saboWeightReducedStabilizerCodes2024}, the authors studied the behaviour of Hastings's weight reduction on explicit small codes, and showed how to combine weight reduction for classical codes to optimize the overhead of Hastings's process on product codes.

\subsection{Construction of weight reduced code}

Though inspired by the intuition above, our construction takes a slightly different approach. In~\cite{liTransformArbitraryGood2024}, the authors proposed a way to build square complexes from quantum codes.
Inspired by their construction, our weight reduction can be decomposed into two steps:
we first perform weight reduction on the obtained square complex and obtain a bounded weight complex,
then we place back the qubits and checks onto the new complex to obtain the weight reduced code.
However, as we need to guarantee the new $X$ and $Z$ stabilizers commute, not every weight reduction of the square complex will suffice for our purposes.

The weight reduction process on the square complex can be described in a local to global order.
For each square in the complex, we can divide it into four subsquares, with each attached to one vertex.
We first describe the local weight reduction gadget around each vertex of the square complex, and we will ``glue'' the boundaries of the gadgets together to obtain our global weight reduced complex.

For the complex around the qubit vertices $X'_q$, by the construction of the square complexes, its link structure is a product of two star graphs.
For the complex around the stabilizers $X'_x/X'_z$, the local link structure can be more general as a connected graph.
We perform different weight reduction operations on the complexes. For $X_q'$, we will first reduce the star graphs and take their products.
For $X_z'$ and $X_x'$, we need to find a two-dimensional cone with boundary that is isometric with the link graph.
For the readers familiar with Hastings's weight reduction scheme, our operation around the stabilizer complexes largely inherits the coning step~\cite[Section III]{hastingsQuantumWeightReduction2023}, but we provided a more efficient and systematic construction for coning.
To improve the distance of our weight reduced code, we will additionally introduce an expander structure on the check cones, which will be explained in detail in~\Cref{sec:dense-code-reduce}.
After performing local weight reduction operations on the complex around each vertex, we connect back the different parts together to form a complete code.

\subsection{Analysis of distance}

Our analysis of the distance follows from the cleaning approach that is shown in~\cite{linGeometricallyLocalQuantum2023}, which has a nice summarization by a  in~\cite{yuan2025unified}.
As shown in the construction section, our code consists of cones of the checks and qubits. For the codeword $c$ of the original code, we can define a canonical codeword $\tilde{c}$ in the new code, and the canonnical codeword will have size $|\tilde{c}|\geq w|c|$.
The canonical codeword would be majorly supported on the qubit cones, and for other codewords $c'$ in the new code, we will follow a cleaning process to find its equivalent canonical codeword $\tilde{c}'$.
For example, if the codeword $c'$ is a $Z$ codeword, we will first clean its support on the $Z$-check cones, by flipping the $Z$ stabilizers.
Furthermore, we will clean the codeword on the qubit and $X$-check cones to obtain the canonical codeword $\tilde{c}'$.
As we will see, the distance lower bound will be dominated by the cleaning process on the $Z$-check cones.
If the underlying graph for $Z$-checks has constant expansion, we can show that the distance will be lower bounded by $\Omega(dw)$.

We also analyzed a special case of the layer code. When applying the layer code construction to a random dense CSS code with parameter $[[n,\Theta(n),\Theta(n)]]$, we proposed an improved analysis of the distance, giving a $\Omega(n^2/\log n)$ lower bound.

\section{Preliminaries}
In this section, we will introduce the necessary preliminaries.

\subsection{Notation and useful lemmas}

We use standard asymptotic notations $O, \Omega, \Theta$.
We use subscripts to indicate which parameters the hidden constants may depend on.
  Specifically, $f(n) = O_t(g(n))$
  if $f(n) \le c(t) g(n)$
  for a constant $c(t)$ that depends only on $t$.
We use ``soft-O'' notation to hides polylogarithmic factors.
  Specifically, $f(n) = \tilde O(g(n))$
  if $f(n) = O(g(n) \log^k n)$ for some constant $k$.

We use $[n]$ to denote the set $\{1, 2, 3, ..., n\}$.
We use $\Delta_n$ to denote the standard $n$-simplex.

We use superscript $X^t$ to indicate the dimension of the simplicial complex.

Dense code means the check weight and qubit weight are both of order $\Theta(n)$.
Sparse code means anything that is not dense, i.e. weight are $o(n)$.

\begin{lemma}[Hoeffding's inequality]\label{lem:ineq-hoeffding}
    Suppose $X_1, \dots,X_n$ are independent random variables taking values in $[a,b]$. Let $X$ denote their sum and let $\mu = \EE [X]$ denote the expectation of their sum. Then for any $t \geq 0$,
    \begin{align*}
        \Pr [ |X-\mu| \geq t ] < 2e^{-2t^2/(n(b-a)^2)}.
    \end{align*}
\end{lemma}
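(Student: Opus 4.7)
The plan is to prove Hoeffding's inequality via the standard Chernoff–Cramér exponential moment method, combined with Hoeffding's lemma for bounded random variables. The main technical ingredient is a sub-Gaussian type bound on the moment generating function of each centered summand, which is the only step that really uses the boundedness hypothesis $X_i \in [a,b]$.

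First I would reduce the two-sided bound to a one-sided bound. By symmetry between $X$ and $-X$ (applied to the variables $-X_1,\dots,-X_n$, which also take values in an interval of length $b-a$), it suffices to prove $\Pr[X - \mu \geq t] < e^{-2t^2/(n(b-a)^2)}$ and then union bound over $\{X-\mu \geq t\}$ and $\{X-\mu \leq -t\}$, picking up the factor of $2$. For the one-sided bound, for any $\lambda > 0$ Markov's inequality applied to $e^{\lambda(X-\mu)}$ gives
\begin{equation*}
  \Pr[X - \mu \geq t] \;\leq\; e^{-\lambda t}\, \EE\!\left[ e^{\lambda (X-\mu)} \right] \;=\; e^{-\lambda t}\, \prod_{i=1}^n \EE\!\left[ e^{\lambda (X_i - \EE X_i)} \right],
\end{equation*}
where the factorization uses independence of the $X_i$.

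The key step is Hoeffding's lemma: if $Y$ is a random variable with $\EE Y = 0$ taking values in an interval $[a',b']$, then $\EE[e^{\lambda Y}] \leq e^{\lambda^2 (b'-a')^2/8}$. I would prove this by writing $Y$ as a convex combination of the endpoints, $Y = \frac{b'-Y}{b'-a'} a' + \frac{Y-a'}{b'-a'} b'$, using convexity of the exponential to bound $e^{\lambda Y}$ above by the corresponding convex combination of $e^{\lambda a'}$ and $e^{\lambda b'}$, taking expectations (this is where $\EE Y = 0$ enters), and then showing the resulting function of $\lambda$ has second derivative bounded by $(b'-a')^2/4$; a Taylor expansion at $\lambda=0$ with this Hessian bound yields the claimed Gaussian-type bound. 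Applying this to each centered summand $X_i - \EE X_i \in [a - \EE X_i,\, b - \EE X_i]$, whose interval has length $b-a$, gives
\begin{equation*}
  \Pr[X - \mu \geq t] \;\leq\; \exp\!\paren{-\lambda t + \tfrac{n \lambda^2 (b-a)^2}{8}}.
\end{equation*}

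The remaining step is routine optimization: minimizing the exponent over $\lambda > 0$ by setting $\lambda = 4t/(n(b-a)^2)$ yields the bound $e^{-2 t^2 / (n(b-a)^2)}$, and combining with the symmetric lower-tail argument produces the stated two-sided inequality with the factor $2$. The only nontrivial obstacle is Hoeffding's lemma itself; everything else is bookkeeping.
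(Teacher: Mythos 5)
Your proof is correct and is the standard Chernoff--Cramér argument with Hoeffding's lemma as the key ingredient; the paper itself states this as a known preliminary fact without proof, so there is nothing in the paper to compare against. One small cosmetic note: the argument as you present it yields the non-strict bound $\Pr[|X-\mu| \geq t] \leq 2e^{-2t^2/(n(b-a)^2)}$, whereas the statement as written in the paper uses a strict inequality; this discrepancy is immaterial for how the lemma is used, but if you want the strict inequality you would need to observe that the MGF bound from Hoeffding's lemma is not tight for nondegenerate variables (or simply weaken the claim to $\leq$, which is the standard formulation).
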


\subsection{Chain complex and CSS codes}
Chain complexes offer an intuitive structure for studying quantum CSS codes. Within this framework, we can express the properties of the CSS code using the language of chain complexes, covering aspects such as dimension, distance, and energy barrier of a given code. We mainly consider chain complexes over the finite field $\F2$.

\begin{definition}[Chain complex]
  A chain complex $C$ consists of a sequence of vector spaces $C_i$ along with linear maps $\partial_i\colon C_i\to C_{i-1}$,
  known as boundary operators,
  where the boundary operators satisfy
  \begin{align*}
    \partial_{i-1}\partial_i=0.
  \end{align*}
\end{definition}

In our setting, each vector space is equipped with a chosen basis,
  which allows us to discuss Hamming weights.
With this basis,
  we can define the dual chain complex,
  whose coboundary operators
  $\delta_i\colon C_i\to C_{i+1}$,
  are given by the transposes of the boundary operators, $\delta_i=\partial_{i+1}^T$.
The coboundary operators satisfy
\begin{align*}
  \delta_{i+1}\delta_i=0.
\end{align*}

We introduce some standard definitions. Elements in the kernel of the (co)boundary operators are called (co)cycles:
\begin{align*}
    Z_i \coloneq \ker \partial_i=\{c_i\in C_i:\partial_ic_i=0\},\quad
    Z^i\coloneq\ker \delta_i=\{c_i\in C_i:\delta_ic_i=0\}.
\end{align*}
Elements in the image of the (co)boundary operators are called (co)boundaries:
\begin{align*}
    B_i\coloneq \im \partial_{i+1}=\{\partial_{i+1}c_{i+1}:c_{i+1}\in C_{i+1}\},\quad   B^i\coloneq \im \delta_{i-1}=\{\delta_{i-1}c_{i-1}:c_{i-1}\in C_{i-1}\}.
\end{align*}

A chain is called exact if $Z_i=B_i$ for all $i$. We can also define an exact cochain similarly.

A quantum CSS code $Q$ is defined by two classical codes $C_z, C_x$ represented by their parity check matrices $H_x\colon \F2^n\to \F2^{m_x}$ and $H_z\colon \F2^n\to\F2^{m_z}$ that satisfies $H_z H_x^T =0$. Here $n,m_x,m_z$ corresponds to the number of qubits, $X$ checks, and $Z$ checks respectively. It is well known that the CSS code naturally corresponds to a chain complex as follows:
\[
\begin{tikzcd}
    \F2^{m_x}\arrow[r,"\partial_2=H_x^T"]&\F2^n\arrow[r,"\partial_1=H_z"]&\F2^{m_z}.
\end{tikzcd}
\]

The $X$ and $Z$ logical operators correspond to the code $C_x$ and $C_z$, and $X$ and $Z$ stabilizers correspond to the code $C_z^{\perp}$ and $C_x^{\perp}$. The code dimension is defined by $k=\dim C_x-\dim C_z^{\perp}=\dim C_z-\dim C_x^{\perp}$. The code distance $d=\min(d_x,d_z)$ where
\begin{align*}
    d_x=\min_{c_x\in C_x-C_z^\perp}|c_x|,\quad d_z=\min_{c_z\in C_z-C_x^\perp}|c_z|.
\end{align*}

We say a quantum code is a low-density parity-check (LDPC) code if each check acts with a constant number of qubits, and each qubit is acted by a constant number of checks. We call a quantum LDPC code good if it has asymptotic linear dimension and distance.

We denote the given code by the 3-term chain complex
\begin{equation}
  C: C_2 \xrightarrow{\partial_2} C_1 \xrightarrow{\partial_1} C_0.
\end{equation}
The corresponding weight reduced code is denoted using calligraphic letters
\begin{equation}
  \cC: \cC_2 \xrightarrow{\partial_2} \cC_1 \xrightarrow{\partial_1} \cC_0.
\end{equation}

\subsection{Square complex from quantum code}
\label{sec:square-complex}

One of the key insights from \cite{liTransformArbitraryGood2024},
  is that a CSS code naturally carries the structure of a 2D square complex.
Intuitively,
  the squares encodes the fact that the $X$ and $Z$ checks commute.
It was further observed that several geometric operations on the 2D structure, such as subdivision,
  have direct counterparts for quantum codes.
As we will explain later,
  we build on this observation to perform weight reduction.

We review the construction of the 2D square complex $(V, E, F)$.
The vertices $V = V_2 \cup V_1 \cup V_0$
  correspond to the $X$ checks, qubits, and $Z$ checks.
The edges $E = E_{21} \cup E_{10}$
  correspond to the non-zero entries in the parity check matrices $H_x$ and $H_z$.
In particular, the graph $(V, E)$ is precisely the Tanner graph of the CSS code.

The main observation in \cite{liTransformArbitraryGood2024}
  is the additional face structure $F$.
Because the $X$ and $Z$ checks commute,
  each such pair shares an even number of qubits.
This allows the shared qubits to be paired up
  and form squares.
In particular,
  for a given $x \in V_2$ and $z \in V_0$,
  if two shared qubits $q, q' \in V_1$ form a pair,
  we introduce a square with vertices $(x, q, q', z)$.
In general, this pairing is not unique,
  and different pairing choices lead to different square complexes.
Nevertheless,
  the existence of any square complex
  guarantees that the corresponding code
  satisfies the commutation relation.

\subsection{Framework of quantum code embedding}
\label{sec:quantum-code-embedding}

To relate the weight-reduced quantum codes back to the original codes,
  we utilize the framework of quantum code embedding introduced in \cite{yuan2025unified},
  which can also be understood in terms of iterated mapping cones.
We have slightly modified the notations in \cite{yuan2025unified}
  to align with the other notations used in this work.
We have also specialized the framework to our setting.

Given a chain complex
  $\cC: \cC_2 \xrightarrow{\partial_2} \cC_1 \xrightarrow{\partial_1} \cC_0$,
  constructed from a direct sum of local chain complexes
    $\cC^x: \cC^x_2 \to \cC^x_1 \to \cC^x_0$,
    $\cC^q: \cC^q_2 \to \cC^q_1 \to \cC^q_0$,
    $\cC^z: \cC^z_2 \to \cC^z_1 \to \cC^z_0$,
    for $x \in V_2$, $q \in V_1$, $z \in V_0$,
  we have
  \begin{equation}
    \cC: \bigoplus_{x \in V_2} \cC^x_2 \oplus \bigoplus_{q \in V_1} \cC^q_2 \oplus \bigoplus_{z \in V_0} \cC^z_2
      \xrightarrow{\partial_2} \bigoplus_{x \in V_2} \cC^x_1 \oplus \bigoplus_{q \in V_1} \cC^q_1 \oplus \bigoplus_{z \in V_0} \cC^z_1
      \xrightarrow{\partial_1} \bigoplus_{x \in V_2} \cC^x_0 \oplus \bigoplus_{q \in V_1} \cC^q_0 \oplus \bigoplus_{z \in V_0} \cC^z_0.
  \end{equation}
We denote by
\begin{equation}
  \cC^X = \bigoplus_{x \in V_2} \cC^x,\quad
  \cC^Q = \bigoplus_{q \in V_1} \cC^q,\quad
  \cC^Z = \bigoplus_{z \in V_0} \cC^z.
\end{equation}
  $\cC^X$ the direct sum of $\cC^x$,
  $\cC^Q$ the direct sum of $\cC^q$,
  $\cC^Z$ the direct sum of $\cC^z$.

The boundary maps of $\cC$ are constructed from
  the boundary maps of $\cC^x, \cC^q, \cC^z$ for the individual local complexes,
  together with the connecting maps $g^{xq}$, $g^{qz}$, and $p^{xz}$,
\begin{equation}
  g^{xq}\colon \cC^x \to \cC^q,\quad
  g^{qz}\colon \cC^q \to \cC^z,\quad
  p^{xz}\colon \cC^x \to \cC^z.
\end{equation}
We assume there are no direct map from $\cC^z$ to $\cC^q$ or $\cC^x$,
  nor from $\cC^q$ to $\cC^x$.
This hierarchical structure commonly appears
  in lattice surgery constructions
  and in the geometric methods used in this work.
In fact, our construction is even more restrictive:
  there is no direct map from $\cC^x$ to $\cC^z$,
  i.e. $p^{xz} = 0$.

\begin{equation} \label{eq:height-2-diagram}
  \begin{tikzpicture}[baseline]
    \matrix(a)[matrix of math nodes, nodes in empty cells, nodes={minimum size=25pt},
    row sep=2em, column sep=2em,
    text height=1.25ex, text depth=0.25ex]
    {\blue{\cC^{X}_{2}}  & \blue{\cC^{X}_{1}} & \blue{\cC^{X}_{0}} \\
    {\cC^{Q}_{2}}  & {\cC^{Q}_{1}}  & {\cC^{Q}_{0}} \\
    \red{\cC^{Z}_{2}} & \red{\cC^{Z}_{1}} & \red{\cC^{Z}_{0}} \\};
    \path[->,blue,font=\scriptsize]
    (a-1-1) edge node[above]{$\partial^{X}_2$} (a-1-2)
    (a-1-2) edge node[above]{$\partial^{X}_1$} (a-1-3);
    \path[->,font=\scriptsize]
    (a-2-1) edge node[above]{$\partial^{Q}_2$} (a-2-2)
    (a-2-2) edge node[above]{$\partial^{Q}_1$} (a-2-3);
    \path[->,red,font=\scriptsize]
    (a-3-1) edge node[above]{$\partial^{Z}_2$} (a-3-2)
    (a-3-2) edge node[above]{$\partial^{Z}_1$} (a-3-3);
    \path[->,font=\scriptsize]
    (a-1-1) edge[bend right=65, dashed] node[left]{$p^{XZ}_2$} (a-3-2)
    (a-1-2) edge[bend left=65, dashed] node[right]{$p^{XZ}_1$} (a-3-3);
    \path[->,font=\scriptsize]
    (a-1-1) edge node[above right]{$g^{XQ}_2$} (a-2-2)
    (a-1-2) edge node[above right]{$g^{XQ}_1$} (a-2-3)
    (a-2-1) edge node[above right]{$g^{QZ}_2$} (a-3-2)
    (a-2-2) edge node[above right]{$g^{QZ}_1$} (a-3-3);
  \end{tikzpicture}
\end{equation}

Because $\partial^2 = 0$,
  the maps $g^{xq}, g^{qz}, p^{xz}$ must satisfy the compatibility conditions,
  such as $\partial^q_1 \circ g^{xq}_2 = g^{xq}_1 \circ \partial^x_2$
  for any $x \in V_2, q \in V_1$.
These conditions imply that
  $g^{xq}$, $g^{qz}$ are chain maps,
  which induce maps on homology
\begin{equation}
  [g^{xq}]: H_i(\cC^x) \to H_{i-1}(\cC^q),\quad
  [g^{qz}]: H_i(\cC^q) \to H_{i-1}(\cC^z).
\end{equation}

Our construction further has the property that each local complex
  $\cC^x, \cC^q, \cC^z$
  has nontrivial homology in exactly one dimension:
\begin{itemize}
  \item Each $\cC^x$ has homology $H_2(\cC^x) = \FF_2, H_1(\cC^x) = 0, H_0(\cC^x) = 0$.
  \item Each $\cC^q$ has homology $H_2(\cC^q) = 0, H_1(\cC^q) = \FF_2, H_0(\cC^q) = 0$.
  \item Each $\cC^z$ has homology $H_2(\cC^z) = 0, H_1(\cC^z) = 0, H_0(\cC^z) = \FF_2$.
\end{itemize}

These properties allow us to define the induced maps on homology:
\begin{equation}
  C: \bigoplus_{x \in V_2} H_2(\cC^x) \cong \FF_2^{V_2}
    \xrightarrow{[g^{XQ}]} \bigoplus_{q \in V_1} H_1(\cC^q) \cong \FF_2^{V_1}
    \xrightarrow{[g^{QZ}]} \bigoplus_{z \in V_0} H_0(\cC^z) \cong \FF_2^{V_0}.
\end{equation}
One can show that $C$ forms a chain complex
  and is homotopy equivalent to the original chain complex $\cC$.
In other words, $C$ serves as an effective chain complex representing $\cC$.

This is precisely the relation between the original quantum code $C$
  and the weight-reduced quantum code $\cC$.
All these homological features are telling us that
  the two codes are closely related.
This is how we can related the code parameters of the two codes.

\section{Weight reduction of a simplicial complex}
\label{sec:weight-reduction-simplicial-complex}

We review some known results on reducing the weight of a simplicial complex,
  as studied in depth by \cite{berdnikov2022parsimonious}.
We also present variants and improvements.

\subsection{Weight reduction of a dense simplicial complex}\label{sec:dense-complex}

We first study the regime when the simplicial complex is somewhat dense,
  where the degree of a vertex is large, say polynomial in $N$.

Below is a known result from \cite{berdnikov2022parsimonious}
  that applies to an arbitrary simplicial complex.
\begin{theorem}[{\cite[Theorem 3.1]{berdnikov2022parsimonious}}]
  \label{thm:weight-reduction-simplicial-complex-dense}
  Let $X^t$ be a simplicial complex with $N$ $t$-simplices.
  Then there exists a homotopy equivalent simplicial complex $\cX$
    with bounded degrees $O_t(1)$,
    and $|\cX| = O_t(N \log^{t-1} N)$.
\end{theorem}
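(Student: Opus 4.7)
The plan is to prove the theorem by induction on the dimension $t$, using a local ``split-and-bridge'' operation that replaces each high-degree vertex with a tree-like arrangement of bounded-degree vertices while preserving homotopy type. For the base case $t=1$, the complex is a graph with $N$ edges, and one can replace each vertex $v$ of degree $d_v$ by a balanced binary tree $T_v$ of $O(d_v)$ internal vertices of degree $\le 3$ whose leaves are identified with the endpoints of the $d_v$ incident edges. Since each $T_v$ is contractible, the replacement is a homotopy equivalence; summing over vertices yields $|\cX| = O(N) = O(N \log^0 N)$, matching the claimed bound.

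For the inductive step, assume the theorem in dimension $t-1$. For a vertex $v$ of $X^t$ contained in $d$ top-simplices, the link $L = \mathrm{lk}(v)$ is a $(t-1)$-complex with $d$ top-simplices. First I would apply the inductive hypothesis to $L$ to obtain a homotopy equivalent bounded-degree complex $\widetilde L$ with $|\widetilde L| = O_t(d \log^{t-2} d)$, and replace the star $\mathrm{st}(v) = v * L$ by $v * \widetilde L$; since joining with $v$ is a homotopy equivalence (both sides are cones over the link), this extends to a global homotopy equivalence via the identity on the rest of $X$. Next, pick a low-complexity separator $S \subseteq \widetilde L$, a small $(t-2)$-subcomplex cutting $\widetilde L$ into pieces $L_1, L_2$ of roughly balanced top-simplex count, whose size is controlled by the bounded degree of $\widetilde L$. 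Replace $v$ by two copies $v_1, v_2$ joined by a collar $I * S$, with $v_i$ carrying the $t$-simplices of $v * L_i$. Each split at least halves the maximum degree, so $O(\log d)$ iterations bring the vertex degree below a constant. The multiplicative size blowup per high-degree vertex is therefore $O(\log d \cdot \log^{t-2} d) = O(\log^{t-1} N)$, yielding $|\cX| = O_t(N \log^{t-1} N)$ after summing over all vertices.

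The main obstacle will be producing a balanced $(t-2)$-separator $S$ of low complexity whose neighborhood can be realized simplicially as a collar. Using the star of a single vertex only balances up to a bounded ratio; genuine balanced separators require exploiting the bounded-degree structure of $\widetilde L$ inherited from the inductive step, and their existence with the right size bound is the substantive geometric content of the proof. One also has to verify that each local split inflates the degree of other vertices by at most $O(1)$ -- in particular, the vertices of $S$ acquire new simplices from the collar -- and that the final object remains a genuine simplicial complex (rather than a $\Delta$-complex with repeated vertices in a simplex), which may require a single barycentric subdivision at each level of the induction.
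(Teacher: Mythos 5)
Your proposal takes a genuinely different route from the one the paper (and the cited reference \cite{berdnikov2022parsimonious}) actually uses, and the route has a gap at exactly the point you flag, but the gap is more fatal than you suggest. The paper's construction, sketched in the proof of \Cref{thm:weight-reduction-2D-simplicial-complex} and in \Cref{lemma:cone-simplicial-complex,lemma:cone-cell-complex}, does not split vertices via separators at all. It performs a barycentric subdivision, decomposes $X$ into dual cells $X'_x$ (one per cell of $X$), and then sparsifies each dual cell -- which is a cone over its boundary -- using a \emph{sorting-network} construction: the boundary graph is written as a sequence of $O(\log N)$ levels of local tree moves (leaf swaps, tracked by the ``priority strings'' of Berdnikov), and the levels are stacked to form a bounded-degree cone. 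The $\log^{t-1}$ comes from iterating this coning through dimensions, not from any divide-and-conquer on degree.

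The specific step in your argument that fails is ``pick a small balanced $(t-2)$-separator $S$ of $\widetilde L$ \dots\ each split at least halves the maximum degree.'' After invoking the inductive hypothesis, $\widetilde L$ is an \emph{arbitrary} bounded-degree $(t-1)$-complex; the inductive hypothesis gives you no control over its global combinatorics. Already for $t=2$ the link $\widetilde L$ is an arbitrary bounded-degree graph, which can be an expander, and expanders admit no balanced vertex separator of size $o(|\widetilde L|)$. Meanwhile, after the split, $v_1$ is incident to the $t$-simplices coming from both $L_1$ and the collar $I*S$, so its degree is on the order of $|L_1(t-1)| + |S(t-2)|$; for this to drop to $\le d/2$ you need $|S| = o(d)$. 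When $\widetilde L$ is an expander you cannot have both a balanced split and a separator of that size, so the $O(\log d)$ iteration count is unattainable. (If you instead allow large separators, the per-vertex blowup from the collars is no longer $O(\log^{t-1}d)$.) To rescue the approach one would need to strengthen the inductive hypothesis so that the output of dimension $t-1$ comes with a certified hierarchical separator structure and then show this structure survives your splitting and collaring operations -- a substantial additional invariant that you have not supplied, and one that is sidestepped entirely by the coning/priority-string mechanism the paper actually relies on. Your base case $t=1$ and the initial link-sparsification step are fine, but the core of the proof is missing.
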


The new result below, motivated by the layer code construction \cite{williamsonLayerCodes2023},
  removes the logarithmic factors.
The result yields a better bound in the case of a dense simplicial complex,
  where a constant fraction of the $t$-simplices are present.

\begin{theorem}\label{thm:weight-reduction-simplicial-complex-dense-new}
  Let $X^t$ be a simplicial complex with $N_v$ vertices,
  Then there exists a homotopy equivalent simplicial complex $\cX$
    with bounded degrees $O_t(1)$,
    and $|\cX| = O_t(N_v^{t+1})$.

  Furthermore, $\cX$ has a geometrically local embedding in
    $(t+1)$-dimension, $\RR^{t+1}$,
    within $[0,N_v]^{t+1}$.
\end{theorem}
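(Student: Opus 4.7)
The plan is to generalize the three-dimensional layer-code construction of \cite{williamsonLayerCodes2023} to $(t+1)$ dimensions, realizing $\cX$ as a union of axis-aligned boxes inside $[0, N_v]^{t+1}$, with one box-family per simplex of $X^t$. Assume (after a symmetrization step discussed below) that the vertices of $X^t$ are $(t+1)$-colored so that every simplex has vertices of distinct colors, and place the vertices of each color $c \in \{0, \ldots, t\}$ at integer positions along the $c$-th coordinate axis. For a vertex $v$ of color $c$, introduce the truncated hyperplane
\[
H_v = \{\vec x \in [0, N_v]^{t+1} : x_c = \mathrm{pos}(v)\},
\]
restricted to the axis-aligned bounding box of the top-dimensional intersection points in which $v$ participates. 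For each $k$-simplex $\sigma = \{v_0, \ldots, v_k\}$ of $X^t$, include the $(t-k)$-dimensional box $H_{v_0} \cap \cdots \cap H_{v_k}$, and let $\cX$ be the union of all these boxes, cell-structured by the unit integer grid (and triangulated by a standard hypercube subdivision if a strictly simplicial model is required).

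The size, degree, and embedding bounds are direct. Each $(t-k)$-dimensional box contains $O_t(N_v^{t-k})$ unit cells, so
\[
|\cX| \;\le\; \sum_{k=0}^{t} |X_k|\cdot O_t(N_v^{t-k}) \;\le\; \sum_{k=0}^{t} \binom{N_v}{k+1}\cdot O_t(N_v^{t-k}) \;=\; O_t(N_v^{t+1}).
\]
Bounded degree is automatic because every cell sits in the uniform unit grid of $[0, N_v]^{t+1}$ and has $O_t(1)$ neighbors in $\cX$, and the geometric embedding inside $[0, N_v]^{t+1}$ is built into the construction.

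The homotopy equivalence $\cX \simeq X^t$ is the crux, and I would apply the nerve lemma to the cover of $\cX$ by the truncated hyperplanes $\{H_v\}$. Each $H_v$ is an axis-aligned box and hence convex and contractible; by the color restriction, an intersection $H_{v_0} \cap \cdots \cap H_{v_k}$ is nonempty precisely when $\{v_0, \ldots, v_k\}$ is a simplex of $X^t$, in which case it is again an axis-aligned box and therefore contractible. The nerve of the cover then coincides with $X^t$, so the nerve lemma yields $\cX \simeq X^t$.

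The main obstacle is twofold. First, a general $X^t$ need not admit a $(t+1)$-coloring; to handle this I would introduce a symmetrization replacing each $k$-simplex by its $\tfrac{(t+1)!}{(t-k)!}$ ordered copies indexed by injections $\iota: \{0, \ldots, k\} \hookrightarrow \{0, \ldots, t\}$, which keeps the vertex count at $N_v$ and absorbs only an $O_t(1)$ factor into the size bound, but requires a separate argument that the ordered-copy complex is still homotopy equivalent to $X^t$. Second, the truncations of the $H_v$'s must be chosen so that every nonempty intersection remains a single convex box and no spurious intersections are created, which I would handle by setting each $H_v$'s truncation to be the common axis-aligned bounding box of its participating intersections. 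A fallback, if the nerve-lemma route proves delicate, is to build $\cX$ inductively along the skeletal filtration of $X^t$, attaching the boxes for one new simplex at a time and verifying via an explicit local deformation retraction that the homotopy type is preserved at each step.
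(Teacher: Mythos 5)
Your construction takes a genuinely different route from the paper's: you propose $\cX$ as the \emph{literal union} of truncated axis-aligned boxes $H_v$ inside $[0,N_v]^{t+1}$ and argue homotopy equivalence via the nerve lemma. The paper instead builds $\cX$ as the \emph{disjoint} union $\bigsqcup_v P_v$ of the hyperplanes — they never geometrically intersect — and then, for each $k$-simplex $\sigma$, glues on a separate piece $\Delta_k \times P_\sigma$ attached along $\partial\Delta_k \times P_\sigma$. Homotopy equivalence with $X^t$ then follows from the collapsing map $\cX \to X^t$ that projects each $\Delta_k \times P_\sigma$ onto $\Delta_k$, not from the nerve lemma.

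This difference is not cosmetic: your version has a real gap that you flag as an ``obstacle'' but do not resolve. Axis-aligned boxes satisfy the Helly property — if each coordinate projection is an interval, then pairwise-nonempty intersections force a common point. Consequently, for \emph{any} choice of convex box-truncations of the $H_v$, the nerve of the cover $\{H_v\}$ is automatically a flag complex, determined by its $1$-skeleton. But $X^t$ need not be flag. Concretely, take $t=2$ with two vertices per color placed at positions $1,2$, and let the triangles of $X$ be $\{a_1,b_1,c_1\}$, $\{a_1,b_2,c_2\}$, $\{a_2,b_1,c_2\}$, $\{a_2,b_2,c_1\}$. The edges $\{a_1,b_1\}$, $\{a_1,c_2\}$, $\{b_1,c_2\}$ all exist (as faces of the listed triangles), but $\{a_1,b_1,c_2\}$ is not a simplex. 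Yet the bounding-box truncations give $H_{a_1} = \{1\}\times[1,2]^2$, $H_{b_1} = [1,2]\times\{1\}\times[1,2]$, $H_{c_2} = [1,2]^2\times\{2\}$, whose triple intersection is the nonempty point $(1,1,2)$. So the nerve acquires a spurious $2$-simplex, and $\cX \not\simeq X^t$. Shrinking the truncations to avoid such points destroys either the cover property (some actual edge is no longer realized) or the convexity/contractibility of the pieces, so the nerve lemma route does not close. The paper's disjoint-union-plus-tubes construction sidesteps Helly entirely: a tube $\Delta_k \times P_\sigma$ is inserted precisely when $\sigma$ is a simplex of $X$, and never otherwise.

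The secondary issue is the non-partite case. Your symmetrization by ordered copies would require a separate proof that the ordered-simplex complex is homotopy equivalent to $X^t$, and it also forces a single vertex to live on multiple coordinate axes (one per injection), so the clean hyperplane picture breaks down. The paper handles this more directly: it works inside the simplex $T = \{x \in B : x_0 \le x_1 \le \cdots \le x_t\}$ and defines $P_v = T \cap \bigcup_{i=0}^{t} \{x : x_i = v\}$, with $P_\sigma = \bigcap_{v \in \sigma} P_v$; each such $P_\sigma$ is still contractible and the tube-gluing argument goes through unchanged. Your fallback plan of an inductive skeletal attachment with explicit deformation retractions is essentially the paper's strategy, but only works once you abandon literal intersections in favor of abstract gluing.
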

As we will discuss in \Cref{sec:application},
  this result allows us to construct quantum LDPC codes
  with distance $\Omega(n^{2/3})$, breaking the square root barrier.
Furthermore, it can be embedded into $\RR^3$, which saturates the BPT bound in 3D.
This provides a new construction, adding to the previously known constructions \cite{portnoyLocalQuantumCodes2023,linGeometricallyLocalQuantum2023,williamsonLayerCodes2023,liTransformArbitraryGood2024}.

The basic idea is to embed each vertex into a $t$-dimensional plane inside $\RR^{t+1}$,
  and to arrange these $N_v$ planes in generic position.
We place these planes so that the gaps between them are at least $O(1)$,
  which accounts for why the embedding has width $N_v$.
Notice that the configuration of these $N_v$ planes
  is homotopically equivalent to the $0$-skeleton of $X^t$.
We then attach additional pieces to glues these planes together,
  thereby forming the higher skeleta: $1$-skeleton, $2$-skeleton, and so on.

\begin{proof}
  We first consider the simpler case where $X^t$ is $(t+1)$-partite.
  This means the vertices are partitioned into $t+1$ subsets,
    which we denote these subsets as $V_0, V_1, ..., V_t$.

  We label the vertices in each part $V_i$ with integers in $[|V_i|]$,
    so that each vertex in $V_i$ carries a unique integer label
      between $1$ and $|V_i|$.
  This labeling induces a tuple of integers for every simplex,
    determined by the labels of its vertices.
  Specifically, each simplex $\sigma$
    defines a map $f_{\sigma}: \{0, 1, ..., t\} \to \NN \cup \{\star\}$,
    where $f_{\sigma}(i) = \star$ if $\sigma$ has no vertex in $V_i$,
      and otherwise $f_{\sigma}(i)$ is the integer label of the unique vertex of $\sigma$ in $V_i$.

  Such a map allows us to associate a $k$-simplex
    $\sigma$ with $(t-k)$-dimensional plane
    within the integer grid structure of $\RR^{t+1}$.
  Consider the box $B = \prod_{i=0}^t [1, |V_i|] \subset \RR^{t+1}$.
  The structure we will construct fit within this box.
  Using the label of the $k$-simplex $\sigma$,
    we identify it with the plane $P_\sigma$ consists of points in $B$,
    whose $i$-th coordinate is equal to $f_{\sigma}(i)$ if $f_{\sigma}(i) \ne \star$,
      and can take arbitrary values otherwise.
  For example,
    a vertex $v \in V_0$ corresponds to the codimension-$1$ hyperplane
    whose $0$-th coordinate equals the integer label of the vertex.
  As another example,
    a $t$-simplex corresponds to a single point with coordinate $f_{\sigma}(i)$.
  It is clear that if $\sigma \prec \tau$,
    there is an embedding $\varphi_{\tau, \sigma}: P_\tau \hookrightarrow P_\sigma$.
  See \Cref{fig:dense-simplex-embedding-1} for an illustration.


    \begin{figure}
        \centering
        \begin{tikzpicture}[line cap=round,line join=round,>=Latex,scale=1.1]

        \node[circle,fill=yellow,inner sep=2pt,label=left:{}] (A) at (-3,0) {};
        \node[circle,fill=red,inner sep=2pt,label=below:{}] (B) at (-1.5,-0.3) {};
        \node[circle,fill=blue,inner sep=2pt,label=above:{}] (C) at (-2,1.5) {};

        \draw[orange,thick] (A)--(B);
        \draw[green,thick] (B)--(C);
        \draw[violet,thick] (A)--(C);
         \fill[brown,opacity=0.3] (-3,0)--(-1.5,-0.3)--(-2,1.5)--cycle;

        \draw[->,thick] (0,0.7)--(1.5,0.7);

        \fill[blue,opacity=0.4] (4,0,0)--(4,2,0)--(4,2,2)--(4,0,2)--cycle;
        \fill[yellow,opacity=0.4] (3,1,0)--(5,1,0)--(5,1,2)--(3,1,2)--cycle;
        \fill[red,opacity=0.4] (3,0,1)--(5,0,1)--(5,2,1)--(3,2,1)--cycle;


        \draw[orange,thick] (3,1,1)--(5,1,1);   
        \draw[green,thick] (4,0,1)--(4,2,1); 
        \draw[purple,thick] (4,1,0)--(4,1,2); 

        \node[circle,fill=brown,inner sep=2pt] (O) at (4,1,1) {};
        \end{tikzpicture}
        \caption{An example of our map that takes a 2-simplicial simplex and its boundaries in $X$ to planes in $\cX$. The corresponding objects on both sides use the same color.}
        \label{fig:dense-simplex-embedding-1}
    \end{figure}
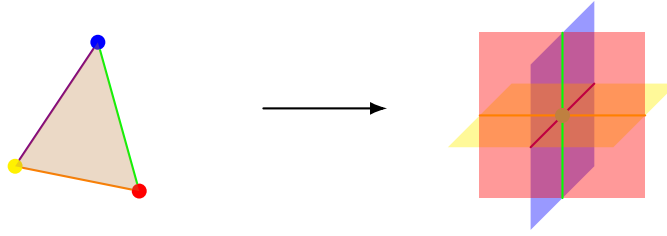
  With the scaffold in place, we are ready to construct the desired bounded-degree complex.
  We build the structure iteratively,
    starting with the vertices.
  Recall that for each vertex $v$, we have an associated hyperplane $P_v$.
  Let $\cX_0 = \bigsqcup_{v \in X(0)} P_v$ be the disjoint union of these hyperplanes,
    each naturally cellulated into unit hypercubes.

  For each edge $e = (v_1, v_2)$\footnote{
    Since the complex has a partite structure,
      the vertices of a simplex have a natural ordering determined by the parts they belong to.},
    we have the associate plane $P_e$
    and embeddings $\varphi_{e, v_1}: P_e \hookrightarrow P_{v_1}$ and $\varphi_{e, v_2}: P_e \hookrightarrow P_{v_2}$.
  We glue $P_{v_1}$ and $P_{v_2}$ via $P_e$
    by forming $[0,1] \times P_e$,
    identifying $\{0\} \times P_e$ with the image of $\varphi_{e, v_1}$,
    and $\{1\} \times P_e$ with the image of $\varphi_{e, v_2}$.
  (Note that $[0,1]$ is the standard 1-simplex $\Delta_1$.)
  Repeating this for all edges defines $\cX_1$.
  Concretely, $\cX_1 = \left(\cX_0 \sqcup \bigsqcup_{e \in X(1)} ([0,1] \times P_e)\right) \big/ \sim$,
    where the identification is given by $\varphi_{e, v_1}(x) \sim (0,x)$
      and $\varphi_{e, v_2}(x) \sim (1,x)$
      for $x \in P_e$
      and $e = (v_1, v_2)$.
  See \Cref{fig:dense-simplex-embedding-2} for an illustration.

  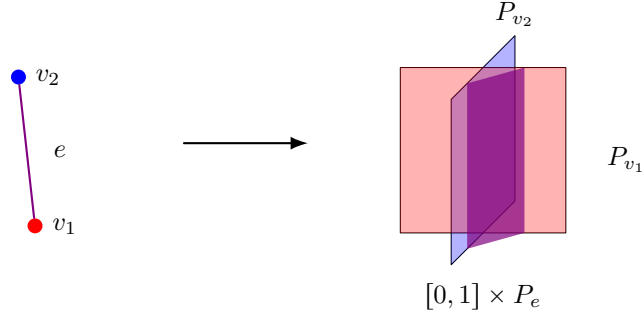
\begin{figure}
        \centering
        \begin{tikzpicture}[line cap=round,line join=round,>=Latex,scale=1.1]

        \node[circle,fill=red,inner sep=2pt,label=right:{$v_1$}] (B) at (-1.8,-0.3) {};
        \node[circle,fill=blue,inner sep=2pt,label=right:{$v_2$}] (C) at (-2,1.5) {};

        \draw[violet,thick] (B)--(C);
        \node at (-1.5,0.6) {$e$};

        \draw[->,thick] (0,0.7)--(1.5,0.7);

        \draw (4,0,0)--(4,2,0)--(4,2,2)--(4,0,2)--cycle;
        \draw (3,0,1)--(5,0,1)--(5,2,1)--(3,2,1)--cycle;
        \fill[blue,opacity=0.3] (4,0,0)--(4,2,0)--(4,2,2)--(4,0,2)--cycle;
        \fill[red,opacity=0.3] (3,0,1)--(5,0,1)--(5,2,1)--(3,2,1)--cycle;
        \fill[violet,opacity=0.7] (4.5,0,1)--(4,0,1.5)--(4,2,1.5)--(4.5,2,1)--cycle;
        \node[right] at (5,0.5,0) {$P_{v_1}$};
        \node[above] at (4,2,0) {$P_{v_2}$};
        \node[below] at (4,-0.5,1) {$[0,1]\times P_{e}$};


        \end{tikzpicture}
        \caption{Figure showing the gluing procedure between $P_{v_1}$ and $P_{v_2}$. The deep violent region is $[0,1]\times P_{e}$.}
        \label{fig:dense-simplex-embedding-2}
    \end{figure}

  We repeat this process for higher-dimensional simplices.
  Let $\cX_j$ be obtained from $\cX_{j-1}$
    by attaching $\Delta_{j} \times P_\sigma$
    via the map $\partial \Delta_{j} \times P_\sigma \to \cX_{j-1}$.
  This map is defined as follows:
    Each face $\Delta_i \subset \partial \Delta_{j}$
    corresponds to a face $\tau \prec \sigma$,
    This correspondence induces a map
      $\Delta_{i} \times P_\sigma \hookrightarrow \Delta_i \times P_\tau \subset \cX_{j-1}$,
      via the embedding $\varphi_{\sigma, \tau}$.
    It is straightforward to verify that these maps, for different $\tau \prec \sigma$, are compatible.
  Hence, $\cX_j = \left( \cX_{j-1} \sqcup \bigsqcup_{\sigma \in X(j)} (\Delta_j \times P_\sigma) \right) \big / \sim$,
    where the identification is as described above.

  In this construction, the cells of $\cX_t$
    are products of simplices.
  To obtain a simplical complex,
    we further subdivide these cells into simplices.
  Let $\cX$ denote the resulting cellulation of $\cX_t$.
  We do not describe this subdivision explicitly,
    but it is well understood and
    increases the degree by at most $t!$.
  This completes the construction of $\cX$.

  We now verify that $\cX$ has the desired properties.
  First, each vertex in $\cX_t$ has degree $2t + t$.
  Specifically, $2t$ neighbors come from the structure of $P_v$,
    and $t$ from the structure of $\Delta_t$.
    Thus, the degree of $\cX$ is at most $3t \cdot t! = O_t(1)$.
  Second, we show that $\cX$ is homotopy equivalent to $X$.
  Consider the map $\cX \to X$,
    induced by the projections $\Delta_j \times P_\sigma \to \Delta_j$
      for every $j$-simplex $\sigma$ with $j = 0, ..., t$.
  It is straightforward to check that these projections are compatible,
    and the resulting map induces a homotopy equivalence.
  Third, it is clear that $\cX$ is geometrically local in $\RR^{t+1}$.

  For the more general case where $X^t$ is not $(t+1)$-partite,
    the proof remains largely the same.
  The only difference is that, without the partite structure,
    we must redefine the ``$t$-dimensional planes'' $P_v \subset \RR^{t+1}$ associated with the vertices $v$.
  The structure $P_\sigma$ for a simplex $\sigma$
    is then defined as the intersection of the corresponding $P_v$,
    where $P_\sigma = \bigcap_{v \in \sigma} P_v$.
  For example, for any $t+1$ vertices,
    the intersection of their $P_v$ is a single point,
    corresponding to the unique $t$-simplex they form.

  To define $P_v$ in the absence of a partite structure,
    we again order the vertices and label them with integers from $[|V|]$.
  We may occasionally abuse notation by identifying a vertex $v$ with its corresponding integer label.
  Now the box is $B = \prod_{i=0}^t [1, |V|] \subset \RR^{t+1}$.
  We work within a subregion $T\subset B$ where the coordinates are non-decreasing,
    $T = \{x \in B: x_0 \le x_1 \le ... \le x_t\}$.
  For each vertex $v$, define $P_v$ as the intersection of $T$ with
    the union of the $t+1$ coordinate hyperplanes
    where the $i$-th coordinate is fixed to $v$,
    $P_v = T \cap \bigcup_{i=0}^t \{x \in \RR^{t+1}: x_i = v\}$.
  Using these sets $P_v$,
    we then define $P_\sigma = \bigcap_{v \in \sigma} P_v$.
  These structures admit natural subdivisions into constant-size cells.
  We then construct $\cX$ by repeating the previous construction, replacing the earlier definition of $P_\sigma$ with this one.


  Finally, we verify that $\cX$ still satisfies the desired properties.
  First, the degree remains at most $3t \cdot t!$ by the same argument.
  Second, $\cX$ is homotopy equivalent to $X$,
    since each $P_\sigma$ is contractible,
    as in the partite case.
  Third, it is clear that $\cX$ is geometrically local in $B \subset \RR^{t+1}$.
  This completes the proof.
\end{proof}

\subsection{Weight reduction of a sparse simplicial complex}\label{sec:sparse-simplex-reduce}

We are also interested in reducing the weight of quantum codes
  when the weight is significantly smaller than the code size.
In such cases, we rely on a different result,
  from which we extract only the part relevant to our setting.
\begin{lemma}[{\cite[A variant of Theorem 0.1]{berdnikov2022parsimonious}}]
  \label{lemma:cone-simplicial-complex}
  Let $X^1$ be a graph with $N$ edges with degree $3$.
  Then, there exists a pair of simplicial complexes $(\cC^2, X^1)$
    with bounded degrees so that
    $(\cC, X)$ is homotopy equivalent to $(C(X), X)$
    and $|\cC| = O(N \log N)$.

  Furthermore,
    with the standard metric on $\cC$ and $C(X)$,
    there is a map $f: \cC \to C(X)$,
    such that $f$ is a contraction,
    i.e. $d(f(x), f(y)) \le d(x,y)$.
\end{lemma}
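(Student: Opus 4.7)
The plan is to build $\cC$ as a bounded-degree version of the cone $C(X)$ in which the single apex is ``spread out'' along a balanced binary tree. First, take a binary tree $T$ of depth $O(\log N)$ whose leaves are indexed by $V(X)$, and attach each leaf $\ell_v$ to $v \in X$ by a unit edge. Together with $X$, this gives a bounded-degree $1$-complex of size $O(N)$ containing $X$ as a subcomplex, which simulates the $1$-skeleton of the cone.

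Next, for each edge $e = (u,v) \in E(X)$, I need to realize the cone triangle $(a,u,v)$ of $C(X)$. The natural candidate cycle in $T \cup X$ consists of the tree path between $\ell_u$ and $\ell_v$ together with the two leaf edges and $(u,v)$, and has length $O(\log N)$. A thin-strip triangulation fills this cycle with $O(\log N)$ triangles and $O(\log N)$ fresh interior vertices. Doing this naively would make a tree vertex near the root belong to $\Omega(N)$ such fills and blow up its degree. To avoid this I would thicken the tree: for each tree vertex $w$ and each $e \in E(X)$ whose fill passes through $w$, introduce a fresh copy $w_e$, route the fill of $e$ through its own copies, and glue neighboring copies together level-by-level so that the thickened tree still collapses onto $T$. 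Since $\sum_{e \in E(X)} O(\log N) = O(N \log N)$, the total number of copies, fresh edges, and fresh triangles is $O(N \log N)$; by construction each vertex participates in only $O(1)$ fills, so the degree stays bounded.

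With the combinatorics in place the remaining claims are routine. Each fill is a contractible disk that collapses onto the triangle $(a,u,v) \subset C(X)$; the thickened tree collapses onto $T$, which collapses onto the apex; these collapses assemble into a homotopy equivalence $(\cC, X) \simeq (C(X), X)$ that is the identity on $X$. For the contracting map $f \colon \cC \to C(X)$, send every tree vertex and every copy to the apex $a$, send each $v \in V(X)$ to itself, and extend simplicially. Each simplex of $\cC$ then lands in a (possibly degenerate) simplex of $C(X)$, and on vertices the map is non-expanding, since tree vertices and their copies sit at $\cC$-distance $\Theta(\log N)$ from $V(X)$ but are sent to $a$ which is at $C(X)$-distance $1$ from $V(X)$. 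Hence the induced piecewise-linear simplicial map is $1$-Lipschitz.

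The main obstacle I anticipate is the thickening step: arranging the duplicated tree vertices so that the whole construction remains a genuine bounded-degree simplicial complex, while each edge-fill is still coherent and $\cC$ is contractible with a clean collapse onto $T \cup X$. Once this bookkeeping is clean, the count $|\cC| = O(N \log N)$, the pair homotopy equivalence, and the contraction estimate all follow immediately.
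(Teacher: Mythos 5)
Your proposal takes a genuinely different route from the paper. The paper (following Berdnikov–Sah) first makes $X$ bipartite, cuts it in half, builds a ``cone tree'' over each half, and realizes the cone as a stack of $O(\log N)$ time slices interpolating between the two trees via local leaf-swap operations; the $O(\log N)$ depth comes from a sorting argument on the binary labels of the endpoints. You instead take a single binary spine tree $T$ over $V(X)$, try to fill each cone triangle by a thin strip along the tree path from $\ell_u$ to $\ell_v$, and then ``thicken'' $T$ by making one copy of each tree vertex per routed path.

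The thickening step is exactly where the real difficulty lives, and it is left as a black box. To glue the copies $\{w_e\}$ of a tree vertex $w$ together you must choose a linear ordering of them, and for the result to remain a bounded-degree contractible complex the orderings at $w$ and at its children must be compatible. Otherwise the parallel tree edges $w_e$–$w'_e$ cross, creating a family of unwanted cycles among the copies; killing those cycles requires additional 2-cells, and in general the count of such cells and the degree of the copies can blow up. Choosing consistent orderings along the whole tree is precisely a routing/sorting problem, and solving it so that every vertex touches $O(1)$ crossings is what the leaf-swap ladder in the Berdnikov–Sah construction is designed to do. So the proposal does not sidestep the sorting-network combinatorics; it only postpones it to the unproven ``bookkeeping,'' which you yourself flag as the main obstacle. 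Until that step is made precise — a concrete rule for ordering the copies, a verification that the thickened tree collapses onto $T$, and a degree count including the extra 2-cells needed to fill the copy-crossing squares — the argument has a genuine gap.

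A smaller point: your contraction map (all tree vertices and copies to the apex, $V(X)$ to itself, extended simplicially) is reasonable, but the justification that vertex-wise non-expansion implies $1$-Lipschitzness of the piecewise-linear extension should be spelled out by observing that each simplex of $\cC$ maps affinely onto a face of a simplex of $C(X)$ of equal or smaller diameter in the standard metric. This is fine, but it is doing work that should be stated rather than asserted.
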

The contraction map on the cone is naturally induced by homotopy mappings from $\cC$ to $C(X)$. As we will see in our construction, at each local operation, we are merging cells of $\cC$ to obtain $C(X)$, thus the map $f$ is a contraction.

To minimize the weight as much as possible,
  it is necessary to work with general cell complexes
  rather than restricting ourselves to simplicial complexes.
The reason is that in a simplicial complex,
  vertices typically have a large degree,
  whereas allowing non-triangular faces in a more general cell structure
  can significantly reduce vertex degree.

In particular, we introduce the following variant.
\begin{lemma}
  \label{lemma:cone-cell-complex}
  Let $X^1$ be a graph with $N$ edges with degree $3$.
  Then, there exists a pair of cell complexes $(\cC^2, X^1)$
    such that
    $(\cC, X)$ is homotopy equivalent to $(C(X), X)$
    and $|\cC| = O(N \log N)$.
  Additionally,
  \begin{itemize}
    \item Each vertex is incident to $\le 5$ edges.
    \item Each edge is incident to $\le 4$ faces.
    \item Each face has $\le 5$ edges.
  \end{itemize}
  Furthermore, for the cell contained in $X$,
  \begin{itemize}
    \item Each vertex is incident to $\le 4$ edges ($3$ edges within $X$ and $\le 1$ in $\cC$).
    \item Each edge is incident to $\le 1$ face.
  \end{itemize}

  Furthermore,
    there is a map $f: \cC \to C(X)$,
    such that $f$ is a contraction,
    i.e. $d(f(x), f(y)) \le d(x,y)$.
\end{lemma}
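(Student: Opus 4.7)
The plan is to build $\cC$ in two stages. The first stage attaches a ``prism'' layer directly above $X$, designed to meet the stringent boundary constraints on $X$; the second stage applies an adapted version of Lemma~\ref{lemma:cone-simplicial-complex} to the upper boundary of the prism, replacing the simplicial output by pentagonal cells so that the explicit degree and incidence bounds are satisfied.

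For the prism, I would introduce, for each vertex $v \in X$, a single lifted vertex $v^*$ with one new edge $(v, v^*)$, and for each edge $(u,v) \in X$ a quadrilateral face with boundary $(u, v, v^*, u^*)$, which implicitly contributes the new edge $(u^*, v^*)$. A direct count yields: each $v \in X$ has degree $3+1=4$ with exactly one edge leaving $X$; each $v^*$ has degree $1+3=4$; each $X$-edge lies in exactly one face; each edge $(v, v^*)$ lies in the three quadrilaterals around $v$; each edge $(u^*, v^*)$ lies in one quadrilateral. All constraints on $X$ are thus satisfied, and the prism contributes $O(N)$ cells.

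The upper boundary $X^*$ is combinatorially isomorphic to $X$, and I would cone it using the construction behind Lemma~\ref{lemma:cone-simplicial-complex}, obtaining a contractible complex of size $O(N \log N)$. To upgrade the simplicial output to a cell complex meeting the $\le 5$ vertex degree and $\le 5$ face-size bounds, I would replace each high-valent ``merge vertex'' arising in Berdnikov's recursive binary-tree merging by a small cycle of auxiliary vertices joined by pentagonal faces; each such replacement adds $O(1)$ cells, preserving the overall size of $O(N \log N)$. Crucially, this coning must not over-saturate the prism edges: at most three additional faces may be attached to any edge $(u^*, v^*)$ and at most one additional edge to any vertex $v^*$, which is arranged by processing $X^*$ with the same ``one new edge per vertex, one face per edge'' discipline used in the prism.

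The contraction map $f\colon \cC \to C(X)$ is then defined layerwise: the prism maps radially so that $v$ is fixed and $v^*$ lands at the midpoint of the ray from $v$ to the apex, with quadrilaterals folding onto the corresponding cone triangles; the upper coning maps onto the short radial segment between that midpoint and the apex, collapsing radially. Since radial projection is $1$-Lipschitz and the upper coning deformation retracts onto a point, $f$ is a contraction. The hard part will be the combinatorial bookkeeping for the explicit constants: the bounds $5$, $4$, $5$ (and $4$, $1$ along $X$) leave very little slack, so each local gadget in the recursive merging must be explicitly designed and verified against its neighbors, including the prism cells already attached, rather than relying on the generic $O_t(1)$ bounds provided by Lemma~\ref{lemma:cone-simplicial-complex}.
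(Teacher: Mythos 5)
Your approach is genuinely different from the paper's. The paper does not introduce a prism layer; instead it modifies the internal local operation in Berdnikov's construction directly. Concretely, the paper makes $X$ bipartite by subdividing each edge, views the cone as a stack of time-sliced trees transformed from the $V_0$-tree into the $V_1$-tree, and then replaces the ``leaf swap'' gadget (which contracts a binary subtree to a degree-4 node and re-expands, giving vertex degree up to $9$) by a ``tree rotation'' gadget that moves a single leaf across one internal edge. This rotation gadget is what gives vertex degree $\le 5$, edge incidence $\le 3$, and face size $\le 5$ directly, and the boundary conditions ($\le 1$ cone edge per $X$-vertex, $\le 1$ cone face per $X$-edge) fall out of the fact that each boundary vertex is a tree node and each boundary edge borders exactly one time-sliced face.

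Your prism idea is a sound trick for decoupling the boundary constraints on $X$ from whatever the coning of $X^*$ does, and your homotopy argument and contraction sketch are fine. However, the step that actually carries the weight of the lemma --- turning the Berdnikov simplicial output into a cell complex with vertex degree $\le 5$ and face size $\le 5$ --- is left as a vague plan (``replace each high-valent merge vertex by a small cycle of auxiliary vertices joined by pentagonal faces''). This is precisely the content the paper supplies with its explicit tree-rotation gadget, and without an explicit local gadget and a count of the degrees it produces, the claim that your replacement stays within the stated bounds is unverified. Note also that you invoke Lemma~\ref{lemma:cone-simplicial-complex} to get ``one new edge per vertex, one face per edge'' discipline on the boundary $X^*$, but that lemma, as stated, only promises bounded degrees and does not certify these tighter boundary constants --- those constants are exactly what the present lemma proves, so citing the earlier lemma for them is circular. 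Finally, the prism adds an extra layer the paper's proof avoids entirely: the paper's construction already lands inside the $\le 4$-edge, $\le 1$-face budget on $X$ by design, so the prism, while harmless asymptotically, is redundant overhead if you build the gadget correctly in the first place.
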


\begin{proof}
    The proof of~\Cref{lemma:cone-cell-complex} is based on the proof of~\Cref{lemma:cone-simplicial-complex} in~\cite{berdnikov2022parsimonious}.
    We will first briefly review the construction of~\Cref{lemma:cone-simplicial-complex}, and discuss the modifications we applied to the proof.

     For any graph $X^1$, we can create a homotopy equivalent bipartite graph $X'^1$ by adding a vertex in the middle of each edge.
     Thus we can assume that the graph $X^1$ is bipartite.
     To explain our construction of a valid cone for $X^1=(V_0\sqcup V_1,E)$, we cut the graph in the middle between $V_0$ and $V_1$.
     The two parts can be viewed as each vertex with its neighboring edges forming a star graph.
     We can sparsify the star graph by building a bipartite tree of neighboring edges at each vertex.
     We start the construction of the cone by building cones for $V_0$ and $V_1$.
     Such cones can be viewed as a bipartite tree connecting all vertices of $V_0/V_1$.
     Combining the vertex cone tree of $V_0/V_1$ with the edge trees at each vertex, we obtained a tree with leaves as edges of $X^1$.

     To build a cone for $X^1$, imagine the following process:
     We need to transform the tree at $V_0$ to the tree at $V_1$ through a series of local operations on the trees.
     The transformation naturally gives us a way to build a cone for $X^1$ as follows:
     imagine each tree is located on some plane, and we stack the planes vertically,
     following the chronological order of the local operations that transform the tree at $V_0$ to $V_1$, and label the time slice by $t$.
     We now recover $X^1$ by connecting the leaves of these trees with the same edge label.
     For the rest of the cone, we create the edges in a ``following the time step'' manner.
     In other words, we add the edges between layers that connect the vertices untouched by the local operations on trees.
     We will also add some local gadgets that connects the part that is acted by the local operations.
     We finally add an appropriate number of faces between the edges to make the cone contractible.

     To be specific, we will consider the local operation on trees that swaps leaves between branches, as shown in~\Cref{fig:tree-swap}.
     It is shown in~\cite{berdnikov2022parsimonious} that we can transform a tree at $V_0$ to another tree at $V_1$ in $O(\log N)$ depth under the leaf swap operations.

     We briefly summarize the process, as we can tag each vertex $v_0\in V_0$ with a binary string $s(v_0)$ indicating its position in the vertex cone tree,
     using 0 for the left branch, 1 for the right branch.
     Similarly, we can tag the vertices $v_1\in V_1$.
     Therefore, for the tree at $V_0$, each leaf $e=(v_0,v_1)$ that indicates an edge can be labeled as $(s(v_0),s(v_1))$; for the tree at $V_1$, the same edge is labeled as $(s(v_1),s(v_0))$.
     Note that for the leaf swap operation in~\Cref{fig:tree-swap}, if we label $a,b,c,d$ as $00,01,10,11$ respectively, the leaf swap operation effectively works as swapping two neighboring positions in the binary string representation of the tree.
     To transform $(s(v_0),s(v_1))$ to $(s(v_1),s(v_0))$, it can be done in $O(\log N)$ levels of neighboring position swap as follows:
     imagine there is a gap between every two positions of $s(v_0)$ and $s(v_1)$, we move forward $s(v_1)$ in front of $s(v_0)$ step by step via inserting $s(v_1)$ into the gaps of $s(v_0)$, interleaving the two string together.
     Note that to move $s(v_1)$ one step forward, it can be achieved through a level of parallel neighboring position swap, thus the whole transformation takes $O(\log N)$ level.

    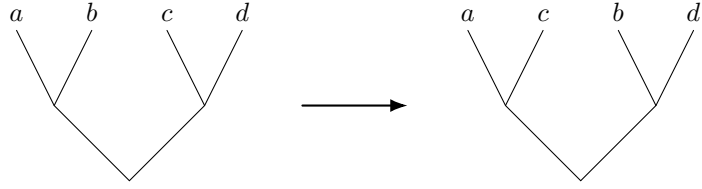
\begin{figure}
        \centering
        \begin{tikzpicture}[x=1cm,y=1cm,line cap=round,line join=round,>=Latex]

        \node[above] (a1) at (-2,2) {$a$};
        \node[above] (b1) at (-1,2) {$b$};
        \node[above] (c1) at ( 0,2) {$c$};
        \node[above] (d1) at ( 1,2) {$d$};

        \coordinate (rootL) at (-0.5,0);
        \coordinate (int1) at (-1.5,1);
        \coordinate (int2) at (0.5,1);

        \draw (rootL) -- (int1) -- (a1.south);
        \draw (int1) -- (b1.south);
        \draw (rootL) -- (int2) -- (c1.south);
        \draw (int2) -- (d1.south);

        \draw[->,thick] (1.8,1) -- (3.2,1);

        \node[above] (a2) at (4,2) {$a$};
        \node[above] (c2) at (5,2) {$c$};
        \node[above] (b2) at (6,2) {$b$};
        \node[above] (d2) at (7,2) {$d$};

        \coordinate (rootR) at (5.5,0);
        \coordinate (int3) at (4.5,1);
        \coordinate (int4) at (6.5,1);

        \draw (rootR) -- (int3) -- (a2.south);
        \draw (int3) -- (c2.south);
        \draw (rootR) -- (int4) -- (b2.south);
        \draw (int4) -- (d2.south);

        \end{tikzpicture}

        \caption{The leaf swap operation on trees.}
        \label{fig:tree-swap}
    \end{figure}

    The implementation of the operation in~\cite{berdnikov2022parsimonious} is by contracting the tree root and creating a node with four leaves, as shown in~\Cref{fig:parsi-tree-rotate}.
    After observing the construction, we find it can provide us a cone with
    \begin{itemize}
        \item Each vertex is incident to $\le 9$ edges.
        \item Each edge is incident to $\le 4$ faces.
        \item Each face has $\le 5$ edges.
    \end{itemize}
    Since each tree is of size $O(N)$, we can conclude that the size of the cone is $O(N\log N)$.
    \begin{figure}
    \centering
    \begin{tikzpicture}[x=1cm,y=1cm,line cap=round,line join=round,>=Latex]

    \node[above] (a1) at (-4,2) {$a$};
    \node[above] (b1) at (-3,2) {$b$};
    \node[above] (c1) at (-2,2) {$c$};
    \node[above] (d1) at (-1,2) {$d$};

    \coordinate (rootL) at (-2.5,0);
    \coordinate (int1) at (-3.5,1);
    \coordinate (int2) at (-1.5,1);

    \draw (rootL) -- (int1) -- (a1.south);
    \draw (int1) -- (b1.south);
    \draw (rootL) -- (int2) -- (c1.south);
    \draw (int2) -- (d1.south);


    \node[above] (aM) at (1,2) {$a$};
    \node[above] (bM) at (2,2) {$b$};
    \node[above] (cM) at (3,2) {$c$};
    \node[above] (dM) at (4,2) {$d$};

    \coordinate (rootM) at (2.5,0);
    \coordinate (intM) at (2.5,1);

    \draw (rootM) -- (intM);
    \draw (intM) -- (aM.south);
    \draw (intM) -- (bM.south);
    \draw (intM) -- (cM.south);
    \draw (intM) -- (dM.south);


    \node[above] (a2) at (6,2) {$a$};
    \node[above] (c2) at (7,2) {$c$};
    \node[above] (b2) at (8,2) {$b$};
    \node[above] (d2) at (9,2) {$d$};

    \coordinate (rootR) at (7.5,0);
    \coordinate (int3) at (6.5,1);
    \coordinate (int4) at (8.5,1);

    \draw (rootR) -- (int3) -- (a2.south);
    \draw (int3) -- (c2.south);
    \draw (rootR) -- (int4) -- (b2.south);
    \draw (int4) -- (d2.south);

    \draw[dashed] (rootL) -- (rootM) -- (rootR);
    \draw[dashed] (int1) .. controls (0,1.2) .. (intM) .. controls (5,1.2) .. (int3);
    \draw[loosely dashed] (int2) -- (intM) -- (int4);
    \foreach \L/\M/\R in {a1/aM/a2, b1/bM/b2, c1/cM/c2, d1/dM/d2}{
        \draw[dashed]
            (\L.south) .. controls + (0,0.5) and +(0,0.5)  .. (\M.south)
                        .. controls + (0,0.5) and +(0,0.5) .. (\R.south);
    }
    \end{tikzpicture}

    \caption{The local cone structure in the construction of~\cite{berdnikov2022parsimonious}.}
    \label{fig:parsi-tree-rotate}
\end{figure}
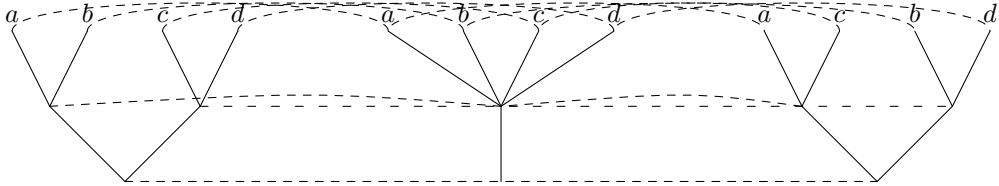

     Alternatively, we can achieve the same operation in~\Cref{fig:tree-swap} via local operations named tree rotations, as shown in~\Cref{fig:tree-rotate}. We can also count the constants in the current construction, which provides us with a cone:

    \begin{itemize}
        \item Each vertex is incident to $\le 5$ edges.
        \item Each edge is incident to $\le 3$ faces.
        \item Each face has $\le 5$ edges.
    \end{itemize}
    We now examine the boundary of the cone. By our construction, every edge is neighboring to one face extended from the tree, and every vertex is a node on the tree, thus having at most 3 neighbors in $X$. For the root, it will have one neighbor in the cone, satisfying our boundary conditions.











\end{proof}

\begin{figure}
    \centering
    \begin{tikzpicture}[x=1cm,y=1cm,line cap=round,line join=round,>=Latex]

    \node[above] (a1) at (-4,2) {$a$};
    \node[above] (b1) at (-3,2) {$b$};
    \node[above] (c1) at (-2,2) {$c$};

    \coordinate (rootL) at (-2.5,0);
    \coordinate (int1) at (-3.5,1);
    \coordinate (int2) at (-3,0.5);

    \draw (rootL) -- (int1) -- (a1.south);
    \draw (int1) -- (b1.south);
    \draw (rootL) -- (c1.south);


    \node[above] (aM) at (1,2) {$a$};
    \node[above] (bM) at (2,2) {$b$};
    \node[above] (cM) at (3,2) {$c$};

    \coordinate (rootM) at (2,0);
    \coordinate (intM1) at (2.25,1.5);
    \coordinate (intM2) at (2.5,1);

    \draw (rootM) -- (aM.south);
    \draw (rootM) -- (intM2);
    \draw (rootM) -- (cM.south);
    \draw (intM2) -- (bM.south);
    \draw [dashed](aM.south)--(intM1);

    \node[above] (a2) at (6,2) {$a$};
    \node[above] (b2) at (7,2) {$b$};
    \node[above] (c2) at (8,2) {$c$};

    \coordinate (rootR) at (7,0);
    \coordinate (int4) at (7.5,1);

    \draw (rootR) -- (a2.south);
    \draw (rootR) -- (int4) -- (b2.south);
    \draw (int4) -- (c2.south);

    \draw[dashed] (rootL) -- (rootM)--(rootR);
    \draw[dashed] (int1) .. controls (0,1.2) .. (intM1);
    \draw[loosely dashed](int2)--(intM2) .. controls (7,0.8) .. (int4);
    \foreach \L/\M/\R in {a1/aM/a2, b1/bM/b2, c1/cM/c2}{
        \draw[dashed]
            (\L.south) .. controls + (0,0.5) and +(0,0.5)  .. (\M.south)
                        .. controls + (0,0.5) and +(0,0.5) .. (\R.south);
    }
    \end{tikzpicture}

    \caption{The tree rotation gadget can optimize the constants in our construction.}
    \label{fig:tree-rotate}
\end{figure}
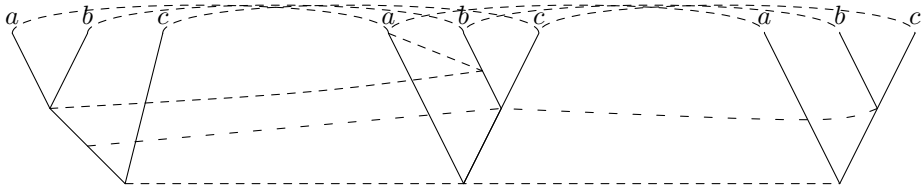

Utilizing \Cref{lemma:cone-simplicial-complex}, we obtain the following theorem,
  which was also briefly remarked in \cite[Section 3]{berdnikov2022parsimonious}.
\begin{theorem}
  \label{thm:weight-reduction-2D-simplicial-complex}
  Let $X^2$ be a 2-dimensional simplicial complex.
  Then there exists a homotopy equivalent simplicial complex $\cX$
    with bounded degree such that
  \begin{equation}
    |\cX| = O\left(\sum_{v \in X(0)} |X_{\ge v}(2)| \log |X_{\ge v}(2)|\right),
  \end{equation}
  where $X(0)$ is the set of vertices in $X$,
    and $|X_{\ge v}(2)|$ is the number of triangles containing the vertex $v$.

  In particular, if each vertex in $X$ is adjacent to $\le w$ many vertices,
    then $|X_{\ge v}(2)| \le w^2$,
    which implies
  \begin{equation}
    |\cX| = O(|X(2)| \cdot \log w).
  \end{equation}
\end{theorem}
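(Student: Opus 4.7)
The plan is to apply Lemma~\ref{lemma:cone-simplicial-complex} locally, one vertex at a time. For each $v \in X(0)$, the link $\mathrm{lk}(v)$ is a $1$-dimensional simplicial complex whose vertices are the edges of $X$ incident to $v$ and whose edges are the triangles of $X$ incident to $v$; in particular the number of edges of $\mathrm{lk}(v)$ equals $|X_{\ge v}(2)|$. The closed star of $v$ in $X$ is canonically the simplicial cone $v * \mathrm{lk}(v) \cong C(\mathrm{lk}(v))$, so replacing each such cone by a bounded-degree pair $(\cC^v, \mathrm{lk}(v))$ that is homotopy equivalent to $(C(\mathrm{lk}(v)), \mathrm{lk}(v))$ and then regluing along the shared boundaries will produce the target complex $\cX$.

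First I would pre-process $X$ so that every link is a degree-$3$ graph, which is the hypothesis of Lemma~\ref{lemma:cone-simplicial-complex}. A vertex of $\mathrm{lk}(v)$ corresponding to an edge $(u,v)$ has degree equal to the number of triangles containing $(u,v)$, so the normalization amounts to ensuring that no edge of $X$ is shared by more than three triangles. Whenever this fails, I would replace the offending edge $(u,v)$ by a short path and distribute its incident triangles among the new edges (equivalently, insert a small binary tree in place of that link-vertex inside both $\mathrm{lk}(u)$ and $\mathrm{lk}(v)$). This operation is homotopy-invariant and enlarges $|X(2)|$ by only a constant factor per triangle, so it is absorbed into the final bound.

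With this preparation, I apply Lemma~\ref{lemma:cone-simplicial-complex} independently at each vertex $v$ to obtain a bounded-degree pair $(\cC^v, \mathrm{lk}(v))$ with $|\cC^v| = O(|X_{\ge v}(2)| \log |X_{\ge v}(2)|)$. Because each lemma invocation preserves the boundary $\mathrm{lk}(v)$ pointwise, the various $\cC^v$ share consistent boundary data: each edge of $X$ appears as a vertex in exactly the two links of its endpoints, and each triangle appears as an edge in the three links of its vertices. I glue the $\cC^v$ along these matching boundaries to form $\cX$. Homotopy equivalence $\cX \simeq X$ follows because each star has been replaced by a homotopy equivalent subcomplex relative to its boundary, and local bounded degree in every $\cC^v$ gives global bounded degree for $\cX$. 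Summing the local size estimates over $v$ yields the stated bound. The ``in particular'' clause then follows because $|X_{\ge v}(2)| \le \binom{w}{2} = O(w^2)$ forces $\log|X_{\ge v}(2)| = O(\log w)$, while $\sum_v |X_{\ge v}(2)| = 3|X(2)|$.

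The main obstacle is the bookkeeping in the pre-processing step: splitting a high-multiplicity edge $(u,v)$ must be carried out consistently so that the refined link-vertex structure seen from $u$ agrees with the one seen from $v$, and the triangle refinements must be compatible with the third vertex's link as well. Once one commits to a canonical (e.g.\ balanced binary) tree for each overloaded edge and uses the induced refinement on incident triangles, these compatibilities hold automatically, and the rest of the argument is the local application of Lemma~\ref{lemma:cone-simplicial-complex} vertex by vertex.
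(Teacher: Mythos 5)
Your overall strategy---decompose $X$ into vertex-centered pieces, apply \Cref{lemma:cone-simplicial-complex} to each, and reglue---is the right instinct and matches the paper's approach in spirit, and your size bookkeeping (in particular $\sum_v |X_{\ge v}(2)| = 3|X(2)|$ and the degree-normalization step) is sound. However, the specific decomposition you chose does not work, and the error is not a bookkeeping issue but a topological one.

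The problem is that closed stars of adjacent vertices do not meet along a $1$-dimensional boundary; they overlap in a $2$-dimensional region. If $(u,v)$ is an edge of $X$, then $\overline{\mathrm{st}(u)} \cap \overline{\mathrm{st}(v)}$ contains every triangle on $(u,v)$, which is a $2$-complex, not a graph. Meanwhile the two ``boundaries'' $\mathrm{lk}(u)$ and $\mathrm{lk}(v)$ in your labeling share only the single $0$-cell labeled $(u,v)$: the edge labeled $(u,v,w)$ in $\mathrm{lk}(u)$ has endpoints $(u,v),(u,w)$, while the edge with the same label in $\mathrm{lk}(v)$ has endpoints $(u,v),(v,w)$, so these cannot be identified without collapsing vertices that must remain distinct. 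Gluing the $\cC^v$ along shared $0$-cells therefore drops the $2$-dimensional overlaps entirely, and the result is not homotopy equivalent to $X$. A minimal counterexample: take $X$ to be a single $2$-simplex $(u,v,w)$. Each $\cC^v$ is a triangle, the three links pairwise share one vertex each, and the glued complex has $6$ vertices, $9$ edges, and $3$ faces, hence Euler characteristic $0$---it is an annulus (a hollow central triangle with three flaps), not a disk.

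The paper avoids this by working inside the barycentric subdivision $X'$ and decomposing it into \emph{dual cells} $X'_x$ indexed by cells $x\in X$. The dual cell of a vertex is $2$-dimensional, the dual cell of an edge is a $1$-dimensional star graph, and the dual cell of a triangle is a single point; crucially, dual cells of adjacent vertices meet exactly along the $1$-dimensional dual cell of the shared edge, so the gluing really is along $1$-complexes and the homotopy type is preserved. Concretely the paper builds $\cX$ in stages: one vertex per triangle ($\cX_0$), then for each edge $y$ a sparsified star/comb $\cS_n$ attached to $\cX_0$ ($\cX_1$), then for each original vertex $z$ an application of \Cref{lemma:cone-simplicial-complex} to the boundary $\partial\cX_z = \bigcup_{y\succ z}\cX_y$ ($\cX_2 = \cX$). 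This also removes the need for your degree-$3$ preprocessing: because the $1$-dimensional dual cells are sparsified to combs first, the graph $\partial\cX_z$ fed into the coning lemma already has bounded degree. The size estimate comes out the same, since $\partial\cX_z$ has $O(|X_{\ge z}(2)|)$ edges, but the homotopy equivalence now actually holds.

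If you want to keep a ``one cone per vertex'' picture, the fix is to cone over the barycentric subdivision of $\mathrm{lk}(v)$ (equivalently, over $\partial\cX_z$), not over $\mathrm{lk}(v)$ itself, and to first introduce the intermediate $0$- and $1$-dimensional dual cells so that adjacent cones share genuine $1$-dimensional boundary pieces.
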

Notice that the factor $\log N$ in \Cref{thm:weight-reduction-simplicial-complex-dense},
  is replaced by $\log w$ in \Cref{thm:weight-reduction-2D-simplicial-complex}.

A similar bound holds in higher dimensions,
  $|\cX| = O_t(\sum_{v \in X(0)} |X_{\ge v}(t)| \log^{t-1} |X_{\ge v}(t)|)$.
However, the proof we are aware of is not a direct black-box application of the preceding theorems.
Instead, it requires the use of the ``priority string'' introduced in \cite{berdnikov2022parsimonious}.
We omit the details here, as this higher-dimensional generalization
  is not needed for our discussion of quantum codes.
The key observation is that the length of the ``priority string'' can be reduced to $\log |X_{\ge v}(t)|$,
  as opposed to the original $\log |X(t)|$.

\begin{proof}
  We begin by partitioning the original complex $X$ into subcomplexes,
    as illustrated in \Cref{fig:dual-structure}.
  These subcomplexes are analogous to dual cells in Poincaré duality,
    where each cell in the original complex corresponds bijectively to one of these subcomplexes.

  \begin{figure}[H]
    \centering
    \includegraphics[width=0.8\linewidth]{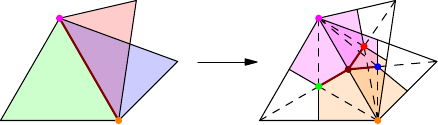}
    \caption{We perform a byracentric subdivision and
              associate each cells in the original complex
              with one of the newly formed subcomplexes.}
    \label{fig:dual-structure}
  \end{figure}

  Specifically, the subcomplexes are derived from the barycentric subdivision of $X$,
    denoted $X'$.
  In the barycentric subdivision,
    we create new vertices at the midpoint of edges
    and the centers of the triangles.
  New edges and faces are then formed as illustrated in \Cref{fig:dual-structure}.

  There is a natural bijection between the cells of $X$ and the vertices of $X'$,
    $\bigsqcup_{i=0}^2 X(i) \cong X'(0)$,
    where $\bigsqcup$ denotes disjoint union.
  Moreover, each cell in $X'$ corresponds to a sequence of nested cells in $X$,
    $(x_0, x_1, ..., x_m)$,
    where $x_0 \prec x_1 \prec ... \prec x_m$ and each $x_i \in X$. Here we use $\prec$ to show that $x_i$ is contained in the cell $x_{i+1}$.

  This nested structure induces a natural partition of $X'$ into subcomplexes indexed by the cells of $X$.
  For each cell $x \in X$,
    we define the associated subcomplex $X'_x$
    as the closure of the set of cells in $X'$ that have $x$ as their first element in the nesting sequence:
    \begin{equation}
      X'_x = \cl(\{(x=x_0, x_1, ..., x_m): x \prec x_1 \prec ... \prec x_m, x_i \in X, m \in \ZZ^{\ge 0}\}).
    \end{equation}
  The dimension of $X'_x$ is $2 - \dim x$.
  That means if $x$ is a vertex, then $X'_x$ is 2-dimensional,
    while if $x$ is a triangle, then $X'_x$ is 0-dimensional.
  We define the boundary of $X'_x$, denoted $\partial X'_x$,
    as the union of cells in its attaching region:
    \begin{equation}
      \partial X'_x = \{(x_1, ..., x_m): x \prec x_1 \prec ... \prec x_m, x_i \in X, m \in \ZZ^{> 0}\}.
    \end{equation}

  Note that each subcomplex $X'_x$ has a cone structure,
    satisfying $X'_x \cong C(\partial X'_x)$,
    where $C(\cdot)$ denotes the cone construction.
  In other words, the complex $X'$ can be viewed as being built inductively through a sequence of cone constructions:
  \begin{itemize}
    \item We begin with the 0-dimensional subcomplex by placing a vertex for each $2$-cell $x \in X(2)$.
    \item Then, For each $1$ cell $y \in X(1)$,
      the subcomplex $X'_y$ is obtained by coning the 0-dimensional complex
      $\{x: y \prec x\} \cong \partial X'_y$.
    \item Finally, For each $0$ cell $z \in X(0)$,
      the subcomplex $X'_z$ is obtained by coning the 1-dimensional complex
      $\overline{\{(y, x): z \prec y \prec x\}} \cong \partial X'_z$.
  \end{itemize}

  With the subcomplexes $X'_x$ defined,
    we now proceed to construct the weight-reduced complex $\cX$.
  This is achieved by applying the weight-reduction procedure to each subcomplex $X'_x$,
    in ascending order of their dimension.
  We build a sequence of intermediate complexes $\cX_0, \cX_1, \cX_2$,
    where the final complex is $\cX = \cX_2$.
  At each stage,
    we incrementally attach new structure to the complex constructed in the previous step.

  We begin with the 0-dimensional subcomplexes $X'_x$ for $x \in X(2)$.
  These requires no sparsification.
  We simply create one vertex for each such subcomplex.
  Thus, we define $\cX_0 \cong \{*\}^{X(2)}$,
    a disjoint union of $|X(2)|$ vertices.

  Next, we consider the 1-dimensional subcomplexes $X'_y$ for $y \in X(1)$.
  Each such subcomplex is a star graph $S_n$ with one internal node and $n$ leaves,
    where $n$ is the number of 2-cells incident to $y$ in $X$.
  We sparsify $S_n$ into a bounded-dgree graph $\cS_n$ on $2n$ vertices,
    as shown in \Cref{fig:star-graph}.
  We define $\cX_y = \cS_n$.
  The endpoints of $\cS_n$ correspond to the leaves of the original star and form the attaching region.
  These endpoints are then attached to the corresponding vertices in $\cX_0$,
    yielding the next complex $\cX_1$.

  \begin{figure}[H]
    \centering
    \begin{tikzpicture}
      \begin{scope}
        \filldraw (0,0) -- (2,2) circle (2pt);
        \filldraw (0,0) -- (1,2) circle (2pt);
        \filldraw (0,0) -- (0,2) circle (2pt);
        \filldraw (0,0) -- (-1,2) circle (2pt);
        \filldraw (0,0) -- (-2,2) circle (2pt);

        \fill (0,0) circle (2pt);
      \end{scope}

      \draw[->] (2.5,1) -- (5,1);

      \begin{scope}[shift={(8,0.5)}]
        \filldraw (2,0) circle (2pt) -- (2,1) circle (2pt);
        \filldraw (1,0) circle (2pt) -- (1,1) circle (2pt);
        \filldraw (0,0) circle (2pt) -- (0,1) circle (2pt);
        \filldraw (-1,0) circle (2pt) -- (-1,1) circle (2pt);
        \filldraw (-2,0) circle (2pt) -- (-2,1) circle (2pt);

        \draw (2,0) -- (-2,0);
      \end{scope}
    \end{tikzpicture}
    \caption{Sparsifying the star graph $S_n$ into $\cS_n$ with $2n$ vertices.}
    \label{fig:star-graph}
  \end{figure}
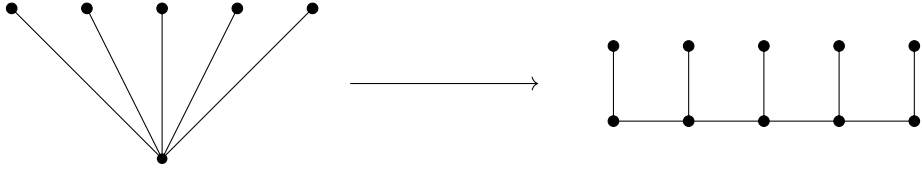

  Finally, we consider the 2-dimensional subcomplexes $X'_z$ for $z \in X(0)$.
  As discussed earlier, each $X'_z$ has a cone structure, satisfying $X'_z \cong C(\partial X'_z)$.
  In the complex $X'$, $\partial X'_z$ is the union of subcomplexes $X'_y$
    for all $y \in X(1)$ such that $y \succ z$,
  \begin{equation}
    \partial X'_z = \bigcup_{y \in X(1): y \succ z} X'_y.
  \end{equation}
  These subcomplexes $X'_y$ have been replaced by their weight-reduced counterparts $\cX_y$.
  This defines $\partial \cX_z \subset \cX(1)$ as
  \begin{equation}
    \partial \cX_z = \bigcup_{y \in X(1): y \succ z} \cX_y.
  \end{equation}
  We now construct $\cX_z$, which is intended to serve as the cone over $\partial \cX_z$.
  Of course, we do not take the cone directly,
    as this would introduce a high-degree vertex at the tip of the cone.
  Instead, we apply \Cref{lemma:cone-simplicial-complex}
    to obtain a bounded-degree complex $\cX_z$,
    such that the pair $(\cX_z, \partial \cX_z)$ is homotopy equivalent to $(C(\partial \cX_z), \partial \cX_z)$.
  This completes the construction of each $\cX_z$.
  By attaching all such components to $\cX_1$
    via their respective boundaries $\partial \cX_z$,
    we obtain the final complex $\cX = \cX_2$.

  To estimate the size of $\cX$,
    it suffices to count the number of the $2$-cells.
  These arise exclusively from the $2$-subcomplexes in $\cX'$,
    which are associated with the vertices of the original complex $X$.
  For each vertex $v \in X(0)$,
    the corresponding $2$-subcomplex
    is a cone over a graph with $|X_{\ge v}(2)|$ edges.
  Applying the bound from \Cref{lemma:cone-simplicial-complex},
    we obtain the estimate
  \begin{equation}
    |\cX| = O\left(\sum_{v \in X(0)} |X_{\ge v}(2)| \log |X_{\ge v}(2)|\right),
  \end{equation}
  as desired.
\end{proof}

\section{Weight reduction of a quantum code}
\label{sec:weight-reduction-quantum-code}

In this section, we present two approaches for reducing the weight of an arbitrary quantum code
  to at most $6$.
The first method applies to all codes, with particular emphasis on sparse codes.
The second method, specialized for dense codes,
  reduces the qubit blowup by a logarithmic factor.

\subsection{Weight reduction of a general quantum code}

The overall strategy for reducing the weight of a quantum code proceeds as follows.
First, from the discussion in \Cref{sec:square-complex},
  we can associate the quantum code with a square complex.
Next, as described in \Cref{sec:weight-reduction-simplicial-complex},
  we weight-reduce the cell complex
  so that each cell is incident to only a small number of other cells.
Finally, we place qubits and checks on the weight reduced cell complex
  to obtain the weight-reduced quantum code.
This is summarized in \Cref{fig:overview}.
\begin{figure}[H]
  \centering
  \begin{tikzpicture}
    \draw (0,0) node(TL) {chain complex};
    \draw (3,-1.2) node(TR) {cell complex};
    \draw (0,-4.5) node(BL)[align=center] {weight-reduced \\ chain complex};
    \draw (3,-3) node(BR)[align=center] {weight-reduced \\ cell complex};
    \draw[->] (TL) -- node[above right]{\Cref{sec:square-complex}} (TR);
    \draw[->] (TR) -- node[right]{part A} (BR);
    \draw[->] (BR) -- node[below right]{part B} (BL);
    \draw[->] (TL) -- node[left] {goal} (BL);
  \end{tikzpicture}
  \caption{The overall strategy for reducing the weight of a quantum code.}
  \label{fig:overview}
\end{figure}
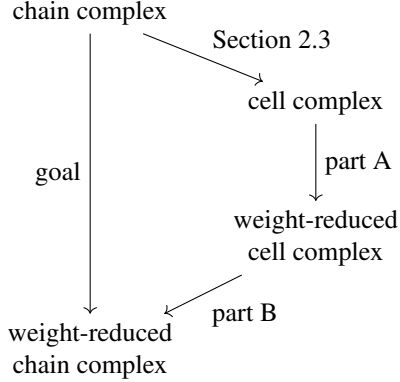

However, one technical challenge
  is that the ``geometric structure'' near a qubit must exhibit a product structure.
As a result, we cannot apply weight reduction for the cell complex in a purely black-box manner.
Instead, we must carefully account for the differences between the local structure around a qubit and that around a check.

The goal of part A is to obtain a weight-reduced cell complex.
The goal of part B is to place the qubits and checks appropriately to form the quantum code.

\subsubsection{Part A}

In \Cref{sec:square-complex}, we have arrived at a square complex
  induced from the original quantum code.
In this section, we want to reduce its weight through geometric operations
  similar to those described in \Cref{sec:sparse-simplex-reduce}.

As shown in \Cref{fig:prepare}, we begin by partitioning the square complex into regions
  by subdividing each square into 4 smaller squares.
For each vertex $v \in V$,
  we define the local 2-complex $X^v$ as the closure of all subsquares incident to $v$.
For each edge $e = (v,w) \in E$,
  we define the local 1-complex $X^e$ as the intersection of $X^v$ and $X^w$.
For each face $f \in F$,
  we associate a local 0-complex $X^f$, located at the center of the square.
This construction is analogous to the complexes $X'_x$ discussed in \Cref{sec:sparse-simplex-reduce}.

\begin{figure}[H]
  \centering
  \includegraphics[width=0.35\linewidth]{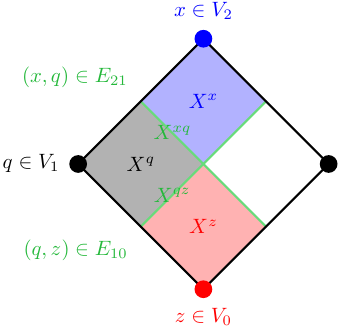}
    \caption{We divide the square complex $X$ into several local cell complexes,
            $X^x, X^q, X^z, X^{xq}, X^{qz}$.
            We only show the structure for a single square;
            in general, the construction involves many such squares.}
  \label{fig:prepare}
\end{figure}

We sparsify each local complex individually.
We begin with the subcomplexes $X^q$,
  corresponding to the vertices associated with qubits $q \in V_1$.
Recall that, by the construction of the square complex,
  $X^q \cong S_{n_x} \times S_{n_z}$,
  where $S_n$ denotes a star graph with one internal node and $n$ leaves.
  Here, $n_x$ and $n_z$ are the numbers of adjacent X-checks and Z-checks, respectively.
This product structure can be sparsified by treating each factor independently.
As shown in \Cref{fig:star-graph},
  each star graph $S_n$ can be sparsified
  into $\cS_n$ with $2n$ vertices,
  where every vertex has degree at most $3$.
We then define $\cX^q$ as the product $\cX^q = \cS_{n_x} \times \cS_{n_z}$.
It is straightforward to verify that $\cX^q$ satisfies:
\begin{itemize}
  \item Each vertex is incident to $\le 6$ edges.
  \item Each edge is incident to $\le 3$ faces.
  \item Each face is incident to $\le 4$ edges.
  \item $|\cX^q| = O(n_x n_z) = O(X^q(2))$.
\end{itemize}
(Recall that $X^q(2)$ is the number of faces in $X$ incident to the vertex $q$.)
This process also naturally induces a sparsification of $X^{xq}$ and $X^{qz}$
  for each edge $(x,q), (q,z) \in E$
  into $\cX^{xq}$ and $\cX^{qz}$.

What remains is to sparsify the subcomplexes $X^x$ and $X^z$,
  which correspond to the X-check and Z-check vertices, respectively.
While the overall sparsification strategy follows the approach described in \Cref{thm:weight-reduction-2D-simplicial-complex},
  we make use of \Cref{lemma:cone-cell-complex} to obtain tighter bounds on the resulting check weights.
We sketch the procedure.
Recall that $X^x$ has a cone structure, which is the cone over $\partial X^x$.
The earliear sparsification of $X^q$
  induces a sparsification of the green 1-complex (in \Cref{fig:prepare}),
  which in turn specifies how $\partial X^x$ is replaced by $\partial \cX^x$.
We then apply \Cref{lemma:cone-cell-complex}
  to construct the corresponding bounded-degree complex $\cX^x$.
An analogous process is used to construct $\cX^z$ for the Z-checks.

By \Cref{lemma:cone-cell-complex}, $\cX^x$ ($\cX^z$) satisfy:
\begin{itemize}
  \item Each vertex is incident to $\le 5$ edges.
  \item Each edge is incident to $\le 3$ faces.
  \item Each face is incident to $\le 5$ edges.
  \item $|\cX^x| = O(|X^x(2)| \log |X^x(2)|)$.
\end{itemize}
(Recall that $X^x(2)$ is the number of faces in $X$ incident to the vertex $x$.)
The last property follows from the fact that the graph $\partial \cX^x$
  contains $O(|X^x(2)|)$ edges.

Finally,
  by attaching the subcomplexes $\cX^x$ and $\cX^z$ to $\bigcup_{q \in V_1} \cX^q$
  via their respective boundaries $\partial \cX^x$ and $\partial \cX^z$,
  we obtain the weight-reduced cell complex $\cX$.

\subsubsection{Part B}

We now arrange the qubits and checks to the sparsified complexes
  to form the weight-reduced chain complex.
We employ the framework of quantum code embedding in \Cref{sec:quantum-code-embedding}.
In particular,
  the new chain complex $\cC$
  is constructed from a family of local chain complexes $\cC^x, \cC^q, \cC^z$,
  which satisfies
  \begin{itemize}
    \item $H_2(\cC^x) = \FF_2, H_1(\cC^x) = 0, H_0(\cC^x) = 0$.
    \item $H_2(\cC^q) = 0, H_1(\cC^q) = \FF_2, H_0(\cC^q) = 0$.
    \item $H_2(\cC^z) = 0, H_1(\cC^z) = 0, H_0(\cC^z) = \FF_2$.
  \end{itemize}
We further need to specify $g^{xq}$ and $g^{qz}$.
In our case, $p^{xz}$ is taken to be $0$.

We begin by describing the local chain complexes:
\begin{itemize}
  \item $\cC^x = C^T(\cX^x)$,
  \item $\cC^q = C^T(\cX^{xq}) \otimes C(\cX^{zq})$,
  \item $\cC^z = C(\cX^z)$,
\end{itemize}
where $C(\cX)$ is the cellular chain complex of $\cX$.
More explicitly,
\begin{gather*}
  C^T(\cX^x): C_0(\cX^x) \xrightarrow{\partial_1^T} C_1(\cX^x) \xrightarrow{\partial_2^T} C_2(\cX^x) \\
  C^T(\cX^{xq}): C_0(\cX^{xq}) \xrightarrow{\partial_1^T} C_1(\cX^{xq})
\end{gather*}
We also define
\begin{itemize}
  \item $\cC^{xq} = C^T(\cX^{xq})$,
  \item $\cC^{qz} = C(\cX^{qz})$.
\end{itemize}

Next, we describe the maps $g^{xq}$ and $g^{qz}$.
Recall that $\cX^{xq}$ is a subcomplex of $\cX^x$ and $\cX^q$.
This induces a one-to-one correspondence between certain cells of $\cX^x$ and $\cX^q$.
We define $g^{xq}$ to be the linear map induced by the identity on these corresponding cells.
Because the 1-cells of $\cX^x$ correspond to $\cC^x_1$,
  while the 1-cells of $\cX^q$ contained in the subcomplex $\cX^{qx}$
    arise from the product of
    the 1-cells of $\cX^{xq}$ and the 0-cells of $\cX^{zq}$
  which corresponds to the product $\cC^{xq}_0 \otimes \cC^{zq}_0 = \cC^q_0$.
This implies the map $g^{xq}$ sends $\cC^x_1$ to $\cC^q_0$, as desired.
Similarly, because the 0-cells of $\cX^x$ correspond to $\cC^x_2$,
  while the 0-cells of $\cX^q$
    arise from the product of
    the 0-cells of $\cX^{xq}$ and the 0-cells of $\cX^{zq}$
  which corresponds to the product $\cC^{xq}_1 \otimes \cC^{zq}_0 \subset \cC^q_1$.
This implies the map $g^{xq}$ sends $\cC^x_2$ to $\cC^q_1$, as desired.

Similarly, $\cX^{qz}$ is a subcomplex of $\cX^q$ and $\cX^z$.
We define $g^{qz}$ to be the linear map induced by the identity on the corresponding cells.
This completes the definition of $\cC$.

To verify that $\cC$ forms a chain complex,
  we check that $g^{xq} \partial^x = \partial^q g^{xq}$
  and $g^{qz} \partial^q = \partial^z g^{qz}$.
These equalities hold because $g^{xq}$ acts as the identity on the corresponding cells of $\cX^{xq}$,
  and the boundary maps $\partial^x$ and $\partial^q$ act identically on those cells.
The same argument applies to $g^{qz}$.

\subsubsection{Part B (informal)}

So far, our discussion has been somewhat abstract.
Let us now describe the structure of
  $\cC^x, \cC^q, \cC^z$ more explicitly
  and provide corresponding illustrations.

\begin{itemize}
  \item For $\cC^x$, we place X-checks on vertices, qubits on edges, and Z-checks on faces of $\cX^x$.
  \item For $\cC^q$, we place qubits on vertices and faces,
    X-checks on `horizontal' edges,
    and Z-checks on `vertical' edges
     of $\cX^q$.
  \item For $\cC^z$, we place Z-checks on vertices, qubits on edges, and X-checks on faces of $\cX^z$.
\end{itemize}
Recall that $\cX^q$ is the product $\cS_{n_x} \times \cS_{n_z}$.
By `horizontal' and `vertical' edges,
  we refer to edges parallel to the $xq$ and the $qz$ directions, respectively.
Equivalently,
  horizontal edges lie in the $\cS_{n_x}$ factor,
  and vertical edges lie in the $\cS_{n_z}$ factor.

In \Cref{fig:local-chain-complex-v}, we illustrate
  the relative positioning of the local chain complexes $\cC^x, \cC^q, \cC^z$
  and how they fit together.

\begin{figure}[H]
  \centering
  \includegraphics[width=0.4\linewidth]{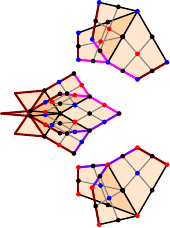}
  \caption{Illustration of three local chain complexes
            $\cC^x$, $\cC^q$, and $\cC^z$,
            arranged from top to bottom.
            The thicker lines indicate regions that will be attached to other pieces.
            In particular, the pink regions are attached to each other,
              and likewise, the purple regions are attached to each other.
            To reduce visual clutter, not all qubits and checks of $\cC^q$ are shown.}
  \label{fig:local-chain-complex-v}
\end{figure}

Here is an alternative perspective on $\cX^q$.
Recall that $\cX^q = \cS_{n_x} \times \cS_{n_z}$,
  where $\cS_n$ looks like a comb as shown in \Cref{fig:star-graph}.
The backbone of the comb multiplies into a structure that looks like a grid,
  while the teeth of the comb multiply into a structure that looks like a set of parallel lines
  as shown in \Cref{fig:qubit-layer}.

\begin{remark}
  Instead of using a comb $\cS_n$, one may simply take a line,
    which corresponds to choosing $\cC^q$ as an ordinary surface code.
  This structure is closely related to the layer code.
  This choice of $\cC^q$ can be viewed as offsetting the structure
    by shrinking the teeth of the comb down to its backbone.

  The boundary of $\cX^x$ can be constructed in a similar manner
    by taking the union of lines.
  In particular, consider all lines in $\{\cC^q\}$ associated with $x$.
  For each square $(x, q, q', z)$,
    identify the vertex in $\cC^q$ that corresponds to $(x,z)$
    with the vertex in $\cC^{q'}$ that corresponds to $(x,z)$.
  Once we have $\partial \cX^x$,
    the coning operation can then be performed as before.
\end{remark}

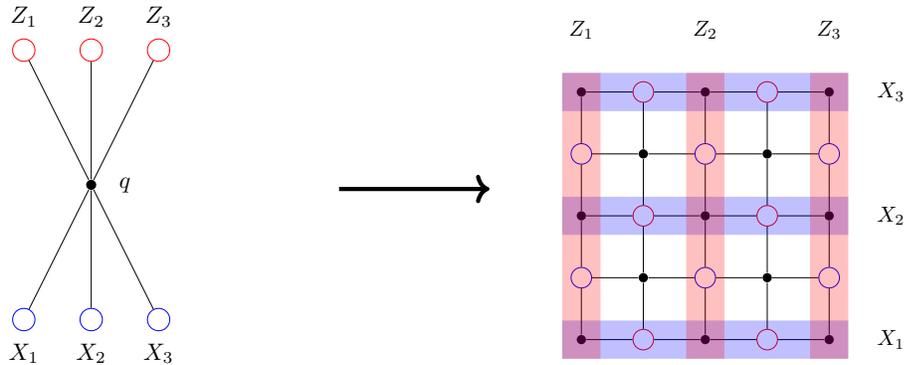
\begin{figure}[H]
  \centering
  \begin{subfigure}[b]{0.3\textwidth}
  \centering
  \resizebox{!}{\linewidth}
  {
  \begin{tikzpicture}
  \node at (0,0) [circle,fill,inner sep=1.5pt] (q){};
  \node at (0,-2) [circle, draw,blue] (z2){};
  \node at (-1,-2) [circle, draw,blue] (z1) {};
  \node at (1,-2) [circle, draw,blue] (z3) {};
  \node at (-1,2) [circle, draw,red] (x1) {};
  \node at (0,2) [circle, draw,red] (x2) {};
  \node at (1,2) [circle, draw,red] (x3) {};
  \draw (q)--(z1);
  \draw (q)--(z2);
  \draw (q)--(z3);
  \draw (q)--(x1);
  \draw (q)--(x2);
  \draw (q)--(x3);
  \node at (0,-2.5) {$X_2$};
  \node at (-1,-2.5) {$X_1$};
  \node at (1,-2.5) {$X_3$};
  \node at (0,2.5) {$Z_2$};
  \node at (-1,2.5) {$Z_1$};
  \node at (1,2.5) {$Z_3$};
  \node at (0.5,0) {$q$};
  \end{tikzpicture}
  }
  \end{subfigure}
  \qquad\tikz[baseline=-\baselineskip]\draw[ultra thick,->] (0,2) --  (2,2);\qquad
  \begin{subfigure}[b]{0.3\textwidth}
      \resizebox{!}{\linewidth}
      {
      \begin{tikzpicture}
          \foreach \row in {0,1,2}
      {
          \foreach \column in {0,1,2}
          {
              \pgfmathtruncatemacro{\rw}{2*\row}
              \pgfmathtruncatemacro{\cl}{2*\column}
              \node at (2*\row,2*\column)[circle,fill,inner sep=1.5pt] (v\rw\cl){};
              \pgfmathtruncatemacro{\cl}{2*\column+1}
              \node at (2*\row,2*\column+1)[circle, draw,blue] (v\rw\cl) {};
              \pgfmathtruncatemacro{\rw}{2*\row+1}
              \node at (2*\row+1,2*\column+1)[circle,fill,inner sep=1.5pt] (v\rw\cl){};
              \pgfmathtruncatemacro{\cl}{2*\column}
              \node at (2*\row+1,2*\column) [circle, draw,red](v\rw\cl) {};
          };
      };

      \foreach \row in {0,1,...,4}
      {
          \foreach \column in {0,1,...,3}
          {
              \pgfmathtruncatemacro{\cl}{\column+1}
              \draw (v\row\column)--(v\row\cl);
          }
      }

      \foreach \column in {0,1,...,4}
      {
          \foreach \row in {0,1,...,3}
          {
              \pgfmathtruncatemacro{\rw}{\row+1}
              \draw (v\row\column)--(v\rw\column);
          }
      }

      \filldraw [white] (-0.5,4.5) rectangle (5.5,5.5);
      \filldraw [white] (4.5,-0.5) rectangle (5.5,5.5);
      \node at (5,0) {$X_1$};
      \node at (5,2) {$X_2$};
      \node at (5,4) {$X_3$};
      \node at (0,5) {$Z_1$};
      \node at (2,5) {$Z_2$};
      \node at (4,5) {$Z_3$};
      \filldraw [blue,nearly transparent] (-0.3,-0.3) rectangle (4.3,0.3);
      \filldraw [blue,nearly transparent] (-0.3,1.7) rectangle (4.3,2.3);
      \filldraw [blue,nearly transparent] (-0.3,3.7) rectangle (4.3,4.3);
      \filldraw [red, nearly transparent] (-0.3,-0.3) rectangle (0.3,4.3);
      \filldraw [red, nearly transparent] (1.7,-0.3) rectangle (2.3,4.3);
      \filldraw [red, nearly transparent] (3.7,-0.3) rectangle (4.3,4.3);
      \end{tikzpicture}
      }
  \end{subfigure}
  \caption{The left shows a portion of the original chain complex $C$ around a qubit $q$.
            The right shows the corresponding grid region in the new chain complex $\cC$,
              which coincides with the structure of a surface code.
            The red regions are attached to $\cC^z$,
              while the blue regions are attached to $\cC^x$.
          }
  \label{fig:qubit-layer}
\end{figure}

In \Cref{fig:local-chain-complex-e}, we illustrate
  the attachment between $\cC^x$ and $\cC^q$,
  which corresponds to the map $g^{xq}$.
The attachment between $\cC^z$ and $\cC^q$ is similar.

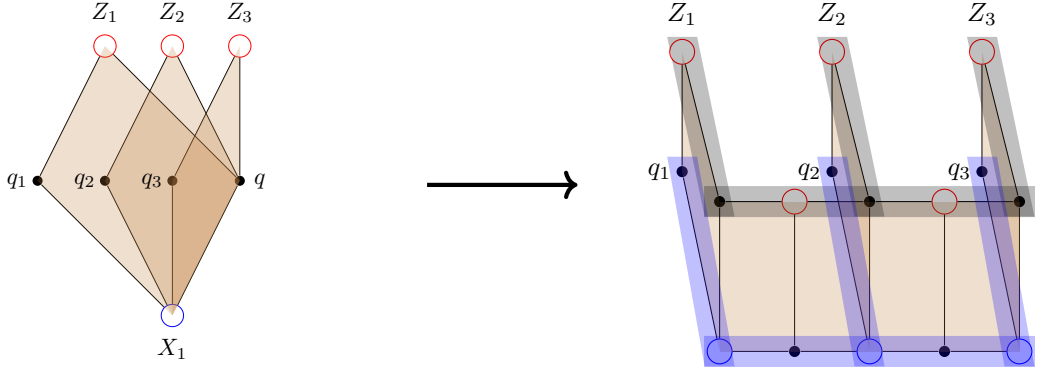
\begin{figure}[H]
    \centering
    \begin{subfigure}[b]{0.3\textwidth}
        \resizebox{!}{\linewidth}
        {
        \begin{tikzpicture}
            \node at (0,0) [circle,fill,inner sep=1.5pt] (q){};
            \node at (-3,0) [circle,fill,inner sep=1.5pt] (q1){};
            \node at (-2,0) [circle,fill,inner sep=1.5pt] (q2){};
            \node at (-1,0) [circle,fill,inner sep=1.5pt] (q3){};
            \node at (-1,-2) [circle, draw,blue] (z1) {};
            \node at (-2,2) [circle, draw,red] (x1) {};
            \node at (-1,2) [circle, draw,red] (x2) {};
            \node at (0,2) [circle, draw,red] (x3) {};
            \draw (q)--(x1);
            \draw (q)--(x2);
            \draw (q)--(x3);
            \draw (z1)--(q);
            \draw (z1)--(q1);
            \draw (z1)--(q2);
            \draw (z1)--(q3);
            \draw (q1)--(x1);
            \draw (q2)--(x2);
            \draw (q3)--(x3);
            \fill[brown,nearly transparent] (-1,-2)--(0,0)--(0,2)--(-1,0)--cycle;
            \fill[brown,nearly transparent] (-1,-2)--(0,0)--(-1,2)--(-2,0)--cycle;
            \fill[brown,nearly transparent] (-1,-2)--(0,0)--(-2,2)--(-3,0)--cycle;
            \node at (-1,-2.5) {$X_1$};
            \node at (-3.3,0) {$q_1$};
            \node at (-2.3,0) {$q_2$};
            \node at (-1.3,0) {$q_3$};
            \node at (0.3,0) {$q$};
            \node at (-2,2.5) {$Z_1$};
            \node at (-1,2.5) {$Z_2$};
            \node at (0,2.5) {$Z_3$};
        \end{tikzpicture}
        }
    \end{subfigure}
    \qquad\tikz[baseline=-\baselineskip]\draw[ultra thick,->] (0,2) --  (2,2);\qquad
    \begin{subfigure}[b]{0.3\textwidth}
        \resizebox{!}{\linewidth}
        {
        \begin{tikzpicture}
            \node at (-2,-2) [circle, draw,blue] (z11){};
            \node at (0,-2) [circle, draw,blue] (z12){};
            \node at (2,-2) [circle, draw,blue] (z13) {};
            \node at (-2,0) [circle,fill,inner sep=1.5pt] (qz1){};
            \node at (0,0) [circle,fill,inner sep=1.5pt] (qz2){};
            \node at (2,0) [circle,fill,inner sep=1.5pt] (qz3){};
            \node at (-2.5,2) [circle, draw,red] (x1){};
            \node at (-0.5,2) [circle, draw,red] (x2){};
            \node at (1.5,2) [circle, draw,red] (x3){};
            \node at (-2.5,0.4)  [circle,fill,inner sep=1.5pt] (q1){};
            \node at (-0.5,0.4)  [circle,fill,inner sep=1.5pt] (q2){};
            \node at (1.5,0.4)  [circle,fill,inner sep=1.5pt] (q3){};
           \node at (-1,0) [circle, draw,red] (xz1){};
           \node at (1,0) [circle, draw,red] (xz2){};
           \node at (-1,-2) [circle,fill,inner sep=1.5pt] (qc1){};
           \node at (1,-2) [circle,fill,inner sep=1.5pt] (qc2){};
           \draw (z11)--(qz1);
           \draw (qz1)--(x1);
           \draw (x1)--(q1);
           \draw (q1)--(z11);
           \draw (z12)--(qz2);
           \draw (qz2)--(x2);
           \draw (x2)--(q2);
           \draw (q2)--(z12);
           \draw (z13)--(qz3);
           \draw (qz3)--(x3);
           \draw (x3)--(q3);
           \draw (q3)--(z13);
           \draw (qz1)--(xz1);
           \draw (xz1)--(qz2);
           \draw (z11)--(qc1);
           \draw (qc1)--(z12);
           \draw (xz1)--(qc1);
           \draw (qz2)--(xz2);
           \draw (xz2)--(qz3);
           \draw (z12)--(qc2);
           \draw (qc2)--(z13);
           \draw (xz2)--(qc2);
           \fill[brown, nearly transparent] (-2,-2)--(-2,0)--(-2.5,2)--(-2.5,0.4)--cycle;
           \fill[brown, nearly transparent] (0,-2)--(0,0)--(-0.5,2)--(-0.5,0.4)--cycle;
           \fill[brown, nearly transparent] (2,-2)--(2,0)--(1.5,2)--(1.5,0.4)--cycle;
           \fill[brown, nearly transparent] (-2,-2)--(-2,0)--(2,0)--(2,-2)--cycle;
           \filldraw[black,nearly transparent] (-2.2,-0.2) rectangle (2.2,0.2);
           \fill[black,nearly transparent] (-2.2,-0.2)--(-1.8,-0.2)--(-2.3,2.2)--(-2.7,2.2)--cycle;
           \fill[black,nearly transparent] (-0.2,-0.2)--(0.2,-0.2)--(-0.3,2.2)--(-0.7,2.2)--cycle;
           \fill[black,nearly transparent] (1.8,-0.2)--(2.2,-0.2)--(1.7,2.2)--(1.3,2.2)--cycle;
            \filldraw[blue,nearly transparent] (-2.2,-2.2) rectangle (2.2,-1.8);
           \fill[blue,nearly transparent] (-2.2,-2.2)--(-1.8,-2.2)--(-2.3,0.6)--(-2.7,0.6)--cycle;
           \fill[blue,nearly transparent] (-0.2,-2.2)--(0.2,-2.2)--(-0.3,0.6)--(-0.7,0.6)--cycle;
           \fill[blue,nearly transparent] (1.8,-2.2)--(2.2,-2.2)--(1.7,0.6)--(1.3,0.6)--cycle;
           \node at (-2.5,2.5) {$Z_1$};
           \node at (-0.5,2.5) {$Z_2$};
           \node at (1.5,2.5) {$Z_3$};
           \node at (-2.8,0.4) {$q_1$};
           \node at (-0.8,0.4) {$q_2$};
           \node at (1.2,0.4) {$q_3$};
        \end{tikzpicture}
        }

    \end{subfigure}
    \caption{The left shows a portion of the original chain complex $C$
              around a $qx$ edge.
              The right shows the corresponding region in the new chain complex $\cC$,
                which two structures identical to $\cX^{qx}$ are matched together.
              The blue region is a part of $\cC^q$,
                while the grey region is a part of $\cC^x$.
            }
    \label{fig:local-chain-complex-e}
\end{figure}

\subsection{Uniformization and local expansion from dummy faces}

So far, we have obtained a weight-reduced quantum code $\cC$.
However, the distance of the resulting code is not guaranteed to be $\Omega(dw)$.
To obtain the desired distance $\Omega(dw)$,
  we must introduce dummy faces into the complexes $\cX^x, \cX^q, \cX^z$.
This technique has appeared in several prior works.
In \cite{hastingsQuantumWeightReduction2023,liTransformArbitraryGood2024},
  it was used to handle `unreasonable codes'
  by ensuring that the boundary graph becomes connected.
In \cite{williamson2024low},
  it was used to achieve a larger distance,
  by making the boundary graph an expander,
  which is similar to our purpose here.

As we will see in the analysis of distance \Cref{sec:distance},
  we need to extend $\cX^x$ and $\cX^z$,
  so that their boundaries, $\partial \cX^x$ and $\partial \cX^z$,
  viewed as graphs,
  are expanders.
In addition, $\cX^q$ must be extended
  so that its associated local code $\cC^q$ has distance $\Theta(w)$.
  Currently, it is only $\min(n_x, n_z)$,
    where $n_x, n_z$ denote the numbers of X and Z checks incident on the qubit.

We first extend $\cX^q$, then $\cX^x$ and $\cX^z$.
Recall that $\cX^q$ was originally defined as the product $\cS_{n_x} \times \cS_{n_z}$.
We now replace this with $\cS_{4w} \times \cS_{4w}$,
  so that its distance is $\Theta(w)$.
We call this procedure `uniformization'.
Each $\cS_{4w}$ is partitioned into $w$ groups of $4$ adjacent vertices.
In each group,
  one vertex plays the role of the original vertex in $\cS_{n_x}$ or $\cS_{n_z}$,
    that connects to other local complexes,
  while the remaining $3$ vertices are dummy vertices
  that will later be used to form expanders.

We now turn to extending $\cX^x$.
With the modified $\cX^q$,
  the associated subcomplexes $\cX^{qx}$ (and $\cX^{zq}$) are defined as before.
Again, for every face $(x,q,q',z)$,
  we identify the vertex in $\cX^q$ that corresponds to $(x,z)$
  with the vertex in $\cX^{q'}$ that corresponds to $(x,z)$.
After performing these identifications,
  we recover the previous construction of $\partial \cX^x$,
  illustrated by the gray region in \Cref{fig:dummy-expander}.

We now add edges between the dummy vertices
  so that the resulting graph becomes an expander with degree $\le 3$.
To see how this is done,
  first ignore the three dummy vertices in each group.
At this stage, for given $x$,
  we have a graph in which each vertex corresponding to $q$
  has been split into $w$ vertices.

Next, we superimpose an arbitrary expander of degree $\le 3$ onto this graph.
This temporarily increases the degree of some vertices beyond $3$.
To control the degree,
  we redistribute the excess edges among the three dummy vertices
  associated with each original vertex,
  as illustrated in \Cref{fig:dummy-expander}.
After redistribution, all vertices again have degree $\le 3$.
Since the intermediate graph before redistribution is already an expander,
  the final graph remains an expander,
  thereby achieving the desired structure.

Finally, we perform the coning procedure as before to obtain $\cX^x$.
The same procedure applies to $\cX^z$.

\begin{figure}[H]
  \centering
  \includegraphics[width=0.9\linewidth]{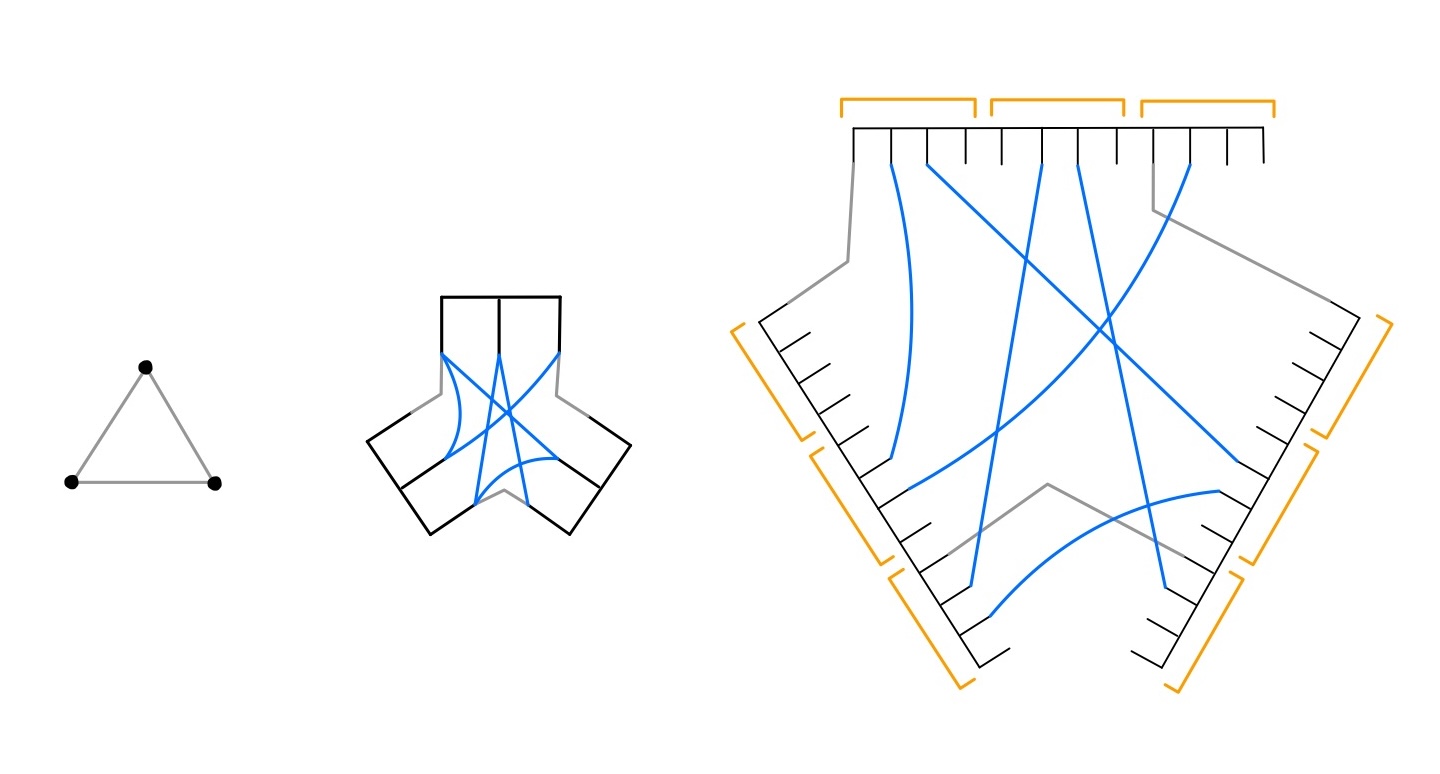}
  \caption{The gray part is the original structure of $\partial \cX^x$.
            The blue part is the expander structure imposed on the dummy vertices.
          }
  \label{fig:dummy-expander}
\end{figure}

\subsection{Weight reduction of a dense quantum codes using layer code construction}\label{sec:dense-code-reduce}


For a dense code, the weight $w$ is of order $\Theta(n)$.
This means given a code of size $n$,
  the resulting code has size $O(n w^2 \log w) = O(n^3 \log n)$.
In fact, we can do better with size $O(n^3)$.
This follows from the layer code construction \cite{williamsonLayerCodes2023},
  which automatically sparsify the code.

For quantum codes,
  the 2D simplicial complex is naturally 3-partite, $V_0, V_1, V_2$.
Following the construction in \Cref{thm:weight-reduction-simplicial-complex-dense-new}
  (i.e. the layer code ),
  we place the $X$ checks, qubits and $Z$ checks on $V_0,V_1,V_2$ respectively.
 The edges are naturally induced by the Tanner graph of the quantum code. By our construction in~\Cref{thm:weight-reduction-simplicial-complex-dense-new}, after applying the weight reduction process in the theorem, we are able to obtain a new complex $\cX$ that is homotopy equivalent to the original complex $X$. Now we place the checks and qubits as shown in~\Cref{fig:plane-grid}. Taking the $X$ check as an example, for the $V_0$ planes in $\cX$, we form a grid as follows: for the points with integer coordinates, we place an $X$ check; we place qubits in the middle of the grid lines that connect two $X$ checks; finally we use $Z$ checks to fill in the faces in the grid.

\begin{figure}[H]
    \centering
    \begin{subfigure}[b]{0.3\textwidth}
       \resizebox{!}{\linewidth}
       {
       \begin{tikzpicture}
           \foreach \row in {0,1,2}
        {
            \foreach \column in {0,1,2}
            {
                \pgfmathtruncatemacro{\rw}{2*\row}
                \pgfmathtruncatemacro{\cl}{2*\column}
                \node at (2*\row,2*\column)[circle, draw,blue] (v\rw\cl){};
                \pgfmathtruncatemacro{\cl}{2*\column+1}
                \node at (2*\row,2*\column+1)[circle,fill,inner sep=1.5pt] (v\rw\cl) {};
                \pgfmathtruncatemacro{\rw}{2*\row+1}
                \node at (2*\row+1,2*\column+1)[circle,draw,red] (v\rw\cl){};
                \pgfmathtruncatemacro{\cl}{2*\column}
                \node at (2*\row+1,2*\column) [circle,fill,inner sep=1.5pt](v\rw\cl) {};
            };
        };

        \foreach \row in {0,1,...,4}
        {
            \foreach \column in {0,1,...,3}
            {
                \pgfmathtruncatemacro{\cl}{\column+1}
                \draw (v\row\column)--(v\row\cl);
            }
        }

        \foreach \column in {0,1,...,4}
        {
            \foreach \row in {0,1,...,3}
            {
                \pgfmathtruncatemacro{\rw}{\row+1}
                \draw (v\row\column)--(v\rw\column);
            }
        }
        \filldraw [white] (-0.5,4.5) rectangle (5.5,5.5);
        \filldraw [white] (4.5,-0.5) rectangle (5.5,5.5);
       \end{tikzpicture}
       }
    \end{subfigure}
    \begin{subfigure}[b]{0.3\textwidth}
       \resizebox{!}{\linewidth}
       {
       \begin{tikzpicture}
           \foreach \row in {0,1,2}
        {
            \foreach \column in {0,1,2}
            {
                \pgfmathtruncatemacro{\rw}{2*\row}
                \pgfmathtruncatemacro{\cl}{2*\column}
                \node at (2*\row,2*\column)[circle,fill,inner sep=1.5pt] (v\rw\cl){};
                \pgfmathtruncatemacro{\cl}{2*\column+1}
                \node at (2*\row,2*\column+1)[circle, draw,blue] (v\rw\cl) {};
                \pgfmathtruncatemacro{\rw}{2*\row+1}
                \node at (2*\row+1,2*\column+1)[circle,fill,inner sep=1.5pt] (v\rw\cl){};
                \pgfmathtruncatemacro{\cl}{2*\column}
                \node at (2*\row+1,2*\column) [circle, draw,red](v\rw\cl) {};
            };
        };

        \foreach \row in {0,1,...,4}
        {
            \foreach \column in {0,1,...,3}
            {
                \pgfmathtruncatemacro{\cl}{\column+1}
                \draw (v\row\column)--(v\row\cl);
            }
        }

        \foreach \column in {0,1,...,4}
        {
            \foreach \row in {0,1,...,3}
            {
                \pgfmathtruncatemacro{\rw}{\row+1}
                \draw (v\row\column)--(v\rw\column);
            }
        }

        \filldraw [white] (-0.5,4.5) rectangle (5.5,5.5);
        \filldraw [white] (4.5,-0.5) rectangle (5.5,5.5);
       \end{tikzpicture}
       }
    \end{subfigure}
    \begin{subfigure}[b]{0.3\textwidth}
       \resizebox{!}{\linewidth}
       {
       \begin{tikzpicture}
           \foreach \row in {0,1,2}
        {
            \foreach \column in {0,1,2}
            {
                \pgfmathtruncatemacro{\rw}{2*\row}
                \pgfmathtruncatemacro{\cl}{2*\column}
                \node at (2*\row,2*\column)[circle, draw,red] (v\rw\cl){};
                \pgfmathtruncatemacro{\cl}{2*\column+1}
                \node at (2*\row,2*\column+1)[circle,fill,inner sep=1.5pt] (v\rw\cl) {};
                \pgfmathtruncatemacro{\rw}{2*\row+1}
                \node at (2*\row+1,2*\column+1)[circle,draw,blue] (v\rw\cl){};
                \pgfmathtruncatemacro{\cl}{2*\column}
                \node at (2*\row+1,2*\column) [circle,fill,inner sep=1.5pt](v\rw\cl) {};
            };
        };

        \foreach \row in {0,1,...,4}
        {
            \foreach \column in {0,1,...,3}
            {
                \pgfmathtruncatemacro{\cl}{\column+1}
                \draw (v\row\column)--(v\row\cl);
            }
        }

        \foreach \column in {0,1,...,4}
        {
            \foreach \row in {0,1,...,3}
            {
                \pgfmathtruncatemacro{\rw}{\row+1}
                \draw (v\row\column)--(v\rw\column);
            }
        }

        \filldraw [white] (-0.5,4.5) rectangle (5.5,5.5);
        \filldraw [white] (4.5,-0.5) rectangle (5.5,5.5);
       \end{tikzpicture}
       }
    \end{subfigure}
    \caption{From left to right, the graph describes the grid structure we place on the planes corresponding to $V_0,V_1,V_2$ respectively.}
    \label{fig:plane-grid}
\end{figure}
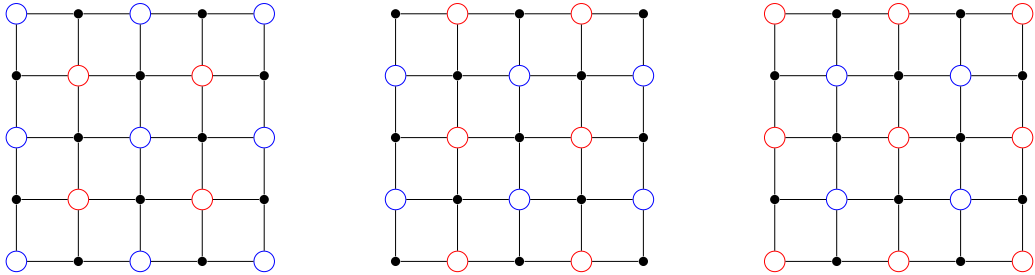

 Now we connect the different layers of our construction as follows: if there is an edge $e=(v_0,v_1)$ between $v_0\in V_0$ and $v_1\in V_1$, we will connect the corresponding two rows in the layer $v_0$ and $v_1$. Similarly, we will connect between $V_1$ and $V_2$. By our construction, the structure $v_0-v_1-v_2$ uniquely determines a point in $\RR^3$.
 In order to construct a valid quantum code, we need to guarantee that the new code we constructed respects the commutation relation between $X$ and $Z$ stabilizers, while our current construction at $v_0$ and $v_2$ will only share one qubit.
 To fix the issue, we follow the construction of layer codes~\cite{williamsonLayerCodes2023}.
 For every pair of neighboring $X$ and $Z$ stabilizers $(v_0,v_2)$,
 they will share a even number of qubits.
 After we created the layer structure, we will start from the left, pair the neighboring qubits together.
 As the plane $v_0$ and $v_2$ will intersect along a line, these pairings can be viewed as line segments on the line, as shown in green in the figures of layer codes~\cite{williamsonLayerCodes2023}, which they named as $y$ defect lines.
The explicit construction of the gadget along the pairing segments is shown in~\Cref{fig:defect-struct}. Note that in the construction of the gadget, we've introduced additional structures on $\cX$, but it is not hard to verify that the new gadget is homotopy equivalent with the original structure.

Our construction is essentially the layer code construction~\cite{williamsonLayerCodes2023}, but as we will show in~\Cref{sec:distance}, the viewpoint of homotopy equivalence can provide us a unified and better distance analysis for the layer code construction.

\begin{figure}
    \centering
    \begin{subfigure}[b]{0.4\textwidth}

        \begin{tikzpicture}
            \node at (-2,-2) [circle, draw,blue] (z11){};
            \node at (0,-2) [circle, draw,blue] (z12){};
            \node at (2,-2) [circle, draw,blue] (z13) {};
            \node at (2,0) [circle,fill,inner sep=1.5pt] (qz3){};
            \node at (-2.5,2) [circle, draw,red] (x1){};
            \node at (-0.5,2) [circle, draw,red] (x2){};
            \node at (1.5,2) [circle, draw,red] (x3){};
            \node at (-2.5,0)  [circle,fill,inner sep=1.5pt] (q1){};
           \node at (-1,-2) [circle,fill,inner sep=1.5pt] (qc1){};
           \node at (1,-2) [circle,fill,inner sep=1.5pt] (qc2){};
           \node at (-1.5,2) [circle,fill,inner sep=1.5pt] (qx1){};
           \node at (0.5,2) [circle,fill,inner sep=1.5pt] (qx2){};
          \draw (x1)--(qc1);
           \draw (x1)--(q1);
           \draw (q1)--(z11);
           \draw (qx1)--(z12);
           \draw (x2)--(qc2);
           \draw (qx2)--(z13);
           \draw (z13)--(qz3);
           \draw (qz3)--(x3);
           \draw (z11)--(qc1);
           \draw (qc1)--(z12);
           \draw (z12)--(qc2);
           \draw (qc2)--(z13);
           \draw (x1)--(qx1);
           \draw (x2)--(qx1);
           \draw (x2)--(qx2);
           \draw (x3)--(qx2);
        \end{tikzpicture}

    \end{subfigure}
    \caption{The explicit structure of the ``$y$ defect lines'' colored in green in~\cite{williamsonLayerCodes2023}. We will add the gadget on the pairing lines along the intersection of $X$ layers and $Z$ layer, as the top line is from the $Z$ layer and the bottom line is from the $X$ layer.}
    \label{fig:defect-struct}
\end{figure}
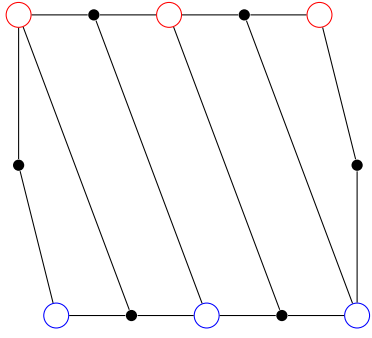

\begin{remark}
  If one attempts to apply this strategy to sparse CSS code with weight $w$,
    the blowup depends on the number of layers we need.
  If one revisits the construction,
    the place where we can save the blowup is that
  For every X-check and Z-check,
    the incident qubits should be on a different layer.
  We can use a graph to indicate this relation:
    the vertices are the qubits,
    and the two vertices have an edge if they share a X-check or a Z-check.
  We want to color the vertices of so that no two adjacent vertices
    have the same color.
  Since the graph has degree $O(w^2)$,
    we need $O(w^2)$ colors to label the qubits,
    i.e. the number of layers for qubits is $O(w^2)$.

  What about for the checks?
  Given a X-check,
    all the Z-checks that shares a qubit with the X-check
    should be on a different layer.
  Since there are $O(w^2)$ many such Z-checks,
    we need $O(w^2)$ layers for the X-checks.
  Similary, we need $O(w^2)$ layers for the Z-checks.

  That means overall, the blowup is $w^6$.
  Though it is not the optimal $w^2 \log w$.

  One can apply the same trick in \cite[IIB Choosing Heights]{hastingsQuantumWeightReduction2023}
    to reduce the layer of qubits to $O(w^{1+\epsilon})$ for any $\epsilon > 0$.
\end{remark}

\section{Code parameters}
\label{sec:code-parameters}

We now analyze the parameters of the weight-reduced quantum code $\cC$
  thereby establishing the main result stated in~\Cref{thm:main}.

\subsection{Check weight and qubit weight}
\label{sec:weight}

We claim that each qubit in $\cC$ is incident to at most $6$ checks
  and each check is incident to at most $5$ qubits.
This can be check by examining the local complexes $\cC^x, \cC^q, \cC^z$,
  and the attaching regions $\cC^{xq}, \cC^{qz}$.
We use the term node to refer to either a qubit or a check.
We call a node of $\cC^x$ internal
  if it is not located on the boundary of $\cX^x$.

Each internal node in $\cC^x$ and $\cC^z$,
  is incident to at most $5$ nodes.
In particular, in $\cC^x$,
  the X-checks (vertices) are incident to $\le 5$ qubits (edges).
The Z-checks (faces) are incident to $\le 5$ qubits (edges).
The qubits (edges) are incident to $2$ X-checks (vertices) and $\le 3$ Z-checks (faces).
The same holds for $\cC^z$.

For each internal node in $\cC^q$,
  it is incident to at most $6$ nodes.
In particular, in $\cC^q$,
  the checks (edges) are incident to $\le 2 + 3$ qubits
    originate from vertices and faces.
  The face qubits are incident to $2$ X-checks and $2$ Z-checks.
  While, the vertex qubits are incident to
    $\le 3$ X-checks and $\le 3$ Z-checks.

The qubits and checks in the attaching region $\cC^{xq}$ and $\cC^{qz}$ can be analyzed analogously.

\subsection{Code dimension}
\label{sec:dimension}

The goal is to show that $C$ is the effective chain complex of $\cC$
  as defined in \Cref{sec:quantum-code-embedding}.
As discussed in that section,
  this relationship implies that $C$ and $\cC$ are chain homotopy equivalent
  and therefore share the same homology groups.
At a fundamental level,
  the reason $C$ serves as the effective chain complex of $\cC$
  is that $X$ can be obtained from $\cX$ by collapsing each of its local complexes.

To establish this relationship,
  we need to verify the homology of $\cC_x, \cC_q, \cC_z$,
  as well as the maps $[g_1], [g_2]$.
We first compute the homology groups.
\begin{itemize}
  \item Since $\cX^x$ is contractable,
        $C(\cX^x) \simeq 0 \to 0 \to \FF_2$,
        which implies $\cC^x := C^T(\cX^x) \simeq \FF_2 \to 0 \to 0$.
  \item Since $\cX^{xq}$ and $\cX^{qz}$ are both contractable,
        $\cC^{xq} := C^T(\cX^{xq}) \simeq \FF_2 \to 0$ and $\cC^{qz} := C(\cX^{qz}) \simeq 0 \to \FF_2$.
        Therefore, $\cC^q := C^T(\cX^{xq}) \otimes C(\cX^{qz}) \simeq (\FF_2 \to 0) \otimes (0 \to \FF_2) = 0 \to \FF_2 \to 0$.
  \item Since $\cX^z$ is contractable,
        $\cC^z := C(\cX^z) \simeq 0 \to 0 \to \FF_2$.
\end{itemize}

Next, we compute the maps $[g^{xq}], [g^{qz}]$.
The representative of the $\FF_2$ in $\cC^x$
  is the 2-chain that includes all the faces of $\cX^x$.
If $x$ and $q$ that are not incident in $C$,
  the complexes $\cX^x$ and $\cX^q$ are disjoint,
  so $[g^{xq}] = 0$.
If $x$ and $q$ that are incident in $C$,
  the map $g^{xq}$ sends the 2-chain of $\cX^x$
  to the 1-chain on the shared boundary $\cX^{xq} \subset \cX^q$.
This 1-chain is precisely the representative of the $\FF_2$ in $\cC^q$.
Therefore, the induced map $[g^{xq}]$ reproduces the incident relation between the X check and qubits in $C$.
The same argument applies to $[g^{qz}]$.

Therefore, $C$ is the effective chain complex of $\cC$,
  which implies $H_1(\cC) \cong H_1(C)$;
  in other words, $\cC$ has the same code dimension as $C$.

\subsection{Code distance}
\label{sec:distance}

The strategy for showing distance follows the standard approach,
  which are variants of the cleaning lemma.
Similar arguments can be found in several prior works,
  including
  \cite[Sec. III A]{hastingsQuantumWeightReduction2023},
  \cite[Lemma 2]{williamson2024low},
  and \cite[Proposition 5.3]{yuan2025unified}.

Intuitively,
  the factor $w$ increase in distance arises
  because each qubit in the original code
  is replaced by a code of distance $\Theta(w)$.
This mechanism is analogous to the factor from thickening (and choosing heights)
  in \cite{hastingsQuantumWeightReduction2023}.
The role of the expander is to ensure that this increase in distance is preserved
  and does not degrade by the gluing of $\cC^x$ and $\cC^z$.

\begin{lemma}[Lemma 1.3 in \cite{yuan2025unified}]
  \label{lem:cleaning}
  Suppose that there exists $0 < h \le 1$ such that
    for any $x \in V_2$ and any $c_2^x\in \cC_2^x$,
    there exists $\hat{c}_2^x\in \cC_2^x$
    such that $\partial^x c_2^x =\partial^x \hat{c}_2^x$ and
  \begin{equation}
    \label{eq:isoperimetric}
    |\partial^x \hat{c}_2^x| \ge h \cdot |g^{xq} \hat{c}_2^x|.
  \end{equation}
  Then
  \begin{equation}
    \label{eq:distance}
    \min_{c_1 \in \cC_1: 0 \ne [c_1] \in H_1(\cC)} |c_1|
    \ge h \cdot \min_{c_1^Q \in \cC_1^Q: 0 \ne [[c_1^Q]] \in H_1(C)} |c_1^Q|.
  \end{equation}
\end{lemma}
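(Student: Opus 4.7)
The plan is to run the standard cleaning argument: given a nontrivial cycle $c_1 \in \cC_1$, add an appropriately chosen $X$-stabilizer (selected by the isoperimetric hypothesis) to kill the $\cC^X_1$-component, read off a nontrivial cycle in the effective complex $C$ from the residue, and compare weights. First decompose $c_1 = c_1^X + c_1^Q + c_1^Z$ according to $\cC_1 = \cC^X_1 \oplus \cC^Q_1 \oplus \cC^Z_1$. The cycle condition $\partial_1 c_1 = 0$, combined with $p^{XZ}=0$, splits into $\partial^X_1 c_1^X = 0$, $\;g^{XQ}_1 c_1^X + \partial^Q_1 c_1^Q = 0$, and $g^{QZ}_1 c_1^Q + \partial^Z_1 c_1^Z = 0$. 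Since $H_1(\cC^x) = 0$, each $c_1^x$ is a boundary, so the hypothesis furnishes $\hat{c}_2^x \in \cC^x_2$ with $\partial^x \hat{c}_2^x = c_1^x$ and $|c_1^x| \geq h\,|g^{xq}\hat{c}_2^x|$. Collect these into $\hat{c}_2^X \in \cC^X_2$.

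Set $\tilde{c}_1 := c_1 + \partial_2 \hat{c}_2^X$. Using $p^{XZ}_2=0$ one has $\partial_2 \hat{c}_2^X = (\partial^X_2 \hat{c}_2^X,\; g^{XQ}_2 \hat{c}_2^X,\; 0)$, so $\tilde{c}_1 = (0,\; \tilde{c}_1^Q,\; c_1^Z)$ with $\tilde{c}_1^Q := c_1^Q + g^{XQ}_2 \hat{c}_2^X$, and $\tilde{c}_1$ is homologous to the original nontrivial $c_1$. The vanishing of the $X$-component combined with the second splitting equation forces $\partial^Q_1 \tilde{c}_1^Q = 0$, so each $\tilde{c}_1^q$ is a cycle in $\cC^q$ and $[\tilde{c}_1^Q] \in \bigoplus_q H_1(\cC^q) = C_1$ is well-defined. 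The third equation then gives $g^{QZ}_1 \tilde{c}_1^Q = \partial^Z_1 c_1^Z$, which is a $\cC^Z_0$-boundary, so $[g^{QZ}]\,[\tilde{c}_1^Q] = 0$ and $[\tilde{c}_1^Q]$ is a cycle in $C$. Invoking the quasi-isomorphism $C \simeq \cC$ from \Cref{sec:quantum-code-embedding}, this cycle represents the nontrivial class $[c_1]$, so $[[\tilde{c}_1^Q]] \neq 0$ in $H_1(C)$.

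For the weight bound, summing the local isoperimetric inequalities gives $|c_1^X| = \sum_x |c_1^x| \geq h\sum_x |g^{xq}\hat{c}_2^x| \geq h\,|g^{XQ}_2 \hat{c}_2^X|$, hence
\[
|c_1| \;\geq\; |c_1^X| + |c_1^Q| \;\geq\; h\,|g^{XQ}_2 \hat{c}_2^X| + |c_1^Q| \;\geq\; h\bigl(|g^{XQ}_2 \hat{c}_2^X| + |c_1^Q|\bigr) \;\geq\; h\,|\tilde{c}_1^Q|,
\]
where the last two estimates use $h \leq 1$ and the $\FF_2$ triangle inequality; taking the minimum over admissible $\tilde{c}_1^Q$ yields \eqref{eq:distance}. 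The genuinely nontrivial step is the descent claim $[[\tilde{c}_1^Q]] \neq 0$: the rest is direct bookkeeping with the compatibility of $g^{xq}$ and $g^{qz}$, but this step is exactly where the embedding framework of \Cref{sec:quantum-code-embedding} must be invoked as a black box to ensure that cleaning away $c_1^X$ does not accidentally kill the codeword.
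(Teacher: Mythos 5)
Your proof is correct and follows the same cleaning strategy the paper sketches (the paper defers the full argument to the cited reference, giving only a three-sentence outline of the idea: clean the $X$-support, note the essential residue lies on $\cC^Q$, and use $H_1(\cC^z)=0$). You fill in that outline accurately: the decomposition into $X/Q/Z$ components, the use of $H_1(\cC^x)=0$ and the isoperimetric hypothesis to choose $\hat{c}_2^x$, the weight chain $|c_1| \ge h|\tilde{c}_1^Q|$, and the observation that $[\tilde{c}_1^Q]$ descends to a nontrivial cycle in $C$. You are also right that the one step the paper's embedding framework must supply is the nontriviality of $[[\tilde{c}_1^Q]]$; this can be made fully explicit without a black-box quasi-isomorphism by a direct computation (if $[\tilde{c}_1^Q]=[g^{XQ}]\beta$ with $\beta\in C_2$, lift $\beta$ to $b^X\in\cC^X_2$, adjust by a $\cC^Q_2$ chain, and use $g^{QZ}_1 g^{XQ}_2 = 0$ together with $H_1(\cC^z)=0$ to produce a $\cC^Z_2$ chain exhibiting $\tilde{c}_1$ as a boundary), but citing the homotopy equivalence as you do is an acceptable shortcut consistent with the paper's own exposition.
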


The idea of the lemma is that the assumption \eqref{eq:isoperimetric}
  allows us to clean up the support of any X-logical operator on $\cC^x$.
After cleaning,
  the remaining support lies on $\cC^q$ and $\cC^z$.
Since $H_1(\cC^z) = 0$,
  the essential part of the logical operator is its support on $\cC^q$.
This is why the right-hand side of \eqref{eq:distance} focuses only on $\cC^Q$.

\begin{lemma}
  Suppose every boundary $\partial \cX^x$ viewed as graphs
    has Cheeger constant (edge expansion) $\ge h$,
    i.e. for any subset $V \subset \partial \cX^x$,
    \begin{equation}
      \label{eq:edge-expansion}
      |\partial V| \ge h \cdot \min(|V|, |\partial \cX^x - V|)
    \end{equation}
    and every local code $\cC^q$ has X-distance $\ge d_x^q$.
  Then, the X-distance of $\cC$ satisfies
    $d_x(\cC) \ge h \cdot d_x^q \cdot d_x(C)$.
\end{lemma}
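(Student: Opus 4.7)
The plan is to verify the isoperimetric hypothesis of \Cref{lem:cleaning} using the Cheeger bound on $\partial \cX^x$, and then translate the conclusion of that lemma into the stated distance bound using the local X-distance of each $\cC^q$ together with the fact that $C$ is the effective chain complex of $\cC$ (as established in \Cref{sec:quantum-code-embedding} and verified for our construction in \Cref{sec:dimension}).

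For the first step, recall that $\cC^x = C^T(\cX^x)$, so $\cC^x_2$ is the space of vertex subsets of $\cX^x$, $\partial^x$ acts as the edge coboundary, and from the description of the attaching maps in Part B, $g^{xq}\hat c_2^x$ is the restriction of $\hat c_2^x$ to the boundary vertex set $V(\partial \cX^x)$, so $|g^{xq}\hat c_2^x|$ counts the number of such boundary vertices in $\hat c_2^x$. Since $\cX^x$ is contractible (and hence connected), the only $\hat c_2^x\in\cC^x_2$ with $\partial^x\hat c_2^x=\partial^x c_2^x$ are $c_2^x$ itself and its complement in $V(\cX^x)$; pick whichever makes $U:=\hat c_2^x\cap V(\partial \cX^x)$ the smaller half of $V(\partial \cX^x)$. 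Because $\partial \cX^x$ is a subcomplex of $\cX^x$, every edge of the graph $\partial \cX^x$ between $U$ and $V(\partial \cX^x)\setminus U$ is an edge of $\cX^x$ with exactly one endpoint in $\hat c_2^x$, and so contributes to $\partial^x\hat c_2^x$. The edge-expansion hypothesis then yields
\[
  |\partial^x\hat c_2^x|\;\ge\;h\cdot|U|\;=\;h\cdot|g^{xq}\hat c_2^x|,
\]
which is exactly \eqref{eq:isoperimetric}.

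Applying \Cref{lem:cleaning} gives $d_x(\cC)\ge h\cdot\min|c_1^Q|$, where the minimum is over cycles $c_1^Q\in\cC_1^Q$ whose induced class in $H_1(C)$ is nonzero. Such a cycle decomposes as $c_1^Q=\sum_{q\in V_1}c_1^q$ with each $c_1^q$ a cycle in $\cC^q$, and since $H_1(\cC^q)\cong\FF_2$ the induced class in $H_1(C)\cong\FF_2^{V_1}/\text{(boundaries)}$ is determined by the support set $S:=\{q:[c_1^q]\ne 0\}$. Nontriviality in $H_1(C)$ forces $S$ to represent a nontrivial X-logical class of $C$, so $|S|\ge d_x(C)$. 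For each $q\in S$, the local-distance assumption gives $|c_1^q|\ge d_x^q$. Summing yields $|c_1^Q|\ge|S|\cdot d_x^q\ge d_x(C)\cdot d_x^q$, and therefore $d_x(\cC)\ge h\cdot d_x^q\cdot d_x(C)$, as claimed.

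The main obstacle is the first step: one must carefully match the combinatorial edge-expansion hypothesis on the graph $\partial\cX^x$ against the chain-level inequality \eqref{eq:isoperimetric}. This relies on two identifications—that $g^{xq}$ restricted to $\cC^x_2$ is simply restriction to boundary vertices, and that edges of the graph $\partial\cX^x$ are genuinely edges of $\cX^x$ (automatic from $\partial\cX^x$ being a subcomplex). Once these are in place, the contractibility of $\cX^x$ supplies the complementation trick needed to put $U$ on the smaller side, and the rest of the argument is a direct combination of the cleaning lemma with the local distance bound.
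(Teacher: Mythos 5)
Your proof is correct and takes essentially the same route as the paper's: you verify the isoperimetric hypothesis of the cleaning lemma using the edge expansion of $\partial\cX^x$, and then bound the right-hand side via the local X-distance of $\cC^q$ and the effective-complex distance $d_x(C)$. You simply fill in details the paper leaves implicit—most notably the complementation trick for choosing $\hat c_2^x$ (which exploits the connectedness of $\cX^x$) and the decomposition of $c_1^Q$ into local cycles $c_1^q$—and those fills are accurate.
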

\begin{proof}
  To apply \Cref{lem:cleaning},
    we need to verify the assumption \eqref{eq:isoperimetric}.
  The idea is to focus on the boundary $\partial \cX^x$.
  The edge expansion of $\partial \cX^x$
    ensures that the Hamming weight of $\partial^x \hat{c}_2^x$ restricted to the boundary $\partial \cX^x$
    is already larger than the RHS of \eqref{eq:isoperimetric}.
  Consequently, \Cref{eq:distance} holds.

  Since each local code $\cC^q$ has X-distance $\ge d_x^q$,
    we immediately obtain
    \begin{equation}
      \min_{c_1^Q \in \cC_1^Q: 0 \ne [[c_1^Q]] \in H_1(C)} |c_1^Q| \ge d_x^q \cdot d_x(C),
    \end{equation}
  which implies the desired result.
\end{proof}

\section{Application: breaking square root barrier from the weight reduction of a dense quantum code}
\label{sec:application}

In this section, we will introduce one potential application of our weight reduction routine in weight reducing random quantum codes.
These construction can directly provide us with quantum codes that break the $O(\sqrt{n})$ distance barrier.
In Hastings's paper on quantum weight reduction~\cite{hastingsQuantumWeightReduction2023}, he discussed a similar application of his routine on random quantum codes that have sparse $X$ stabilizers.
In~\Cref{sec:hastings-reduce}, we apply our routine to the same type of code in~\cite{hastingsQuantumWeightReduction2023}, and discuss the improvements of our construction.
In~\Cref{sec:dense-reduce}, we discuss how to construct quantum LDPC codes by weight reducing random quantum codes that have dense stabilizers.

Besides breaking the $O(\sqrt{n})$ barrier, our second contruction is additionally geometrically local in 3D,
and matches the upper bounds of dimension and distance of geometrically local codes~\cite{bravyiNogoTheoremTwodimensional2009,bravyiTradeoffsReliableQuantum2010}.
Breaking down our construction for the dense codes, our results indicate that the layer code construction~\cite{williamsonLayerCodes2023} does not need to start from a quantum LDPC code with linear dimension and distance, as a random quantum code already suffices.

\subsection{Weight reducing the code from Hastings}\label{sec:hastings-reduce}

In Hastings' paper~\cite{hastingsQuantumWeightReduction2023}, he provided a construction of LDPC codes that breaks the distance square-root barrier through weight reduction. Following his routine, we start from the following $n$-qubit code through probabilistic construction: We first pick $n/2$ random $X$-stabilizers of average weight $\Delta=\beta\log(n)$, for some constant $\beta$. (In fact, $\Delta = \Theta(1)$ is enough. See \cite{hastingsQuantumWeightReduction2023} for more discussion.) Among the codewords that commute with the $X$ stabilizers, we choose $n/4$ of them to be $Z$-stabilizers. Hastings proved the following properties of the quantum code $Q$:

\begin{lemma}
    For large enough $\beta$, the code $Q$ has dimension $k=\Theta(n)$ and distance $d=\Theta(n)$. The code has weight $w_X,q_X=O(\log n),w_Z,q_Z=\Theta(n)$.
\end{lemma}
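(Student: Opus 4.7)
\medskip

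\noindent\textbf{Proof plan.} The lemma bundles four claims (dimension, distance, X-weights, Z-weights), and my plan is to dispatch the weight and dimension claims by elementary concentration arguments, then spend the bulk of the effort on the two distance bounds, which I expect to be the only real work. Throughout I will interpret ``$X$-stabilizers of average weight $\Delta$'' as drawing each row of $H_X$ by including each column independently with probability $\Delta/n$, so that weights are sums of independent Bernoullis and \Cref{lem:ineq-hoeffding} applies.

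\medskip

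\noindent\textbf{Weights and dimension.} For $w_X$ and $q_X$, each individual row and each individual column of $H_X$ has weight that is a sum of independent $\mathrm{Bernoulli}(\Delta/n)$ variables with mean $\Delta=\beta\log n$ (resp.\ $\Delta/2$). Hoeffding plus a union bound over the $O(n)$ rows and $n$ columns yields $w_X,q_X = O(\log n)$ with probability $1-o(1)$ provided $\beta$ is a large enough constant. For the dimension, I will argue (i) with high probability $H_X$ has full row rank $n/2$ by the standard fact that a random sparse matrix with logarithmic-weight rows is full-rank w.h.p., and (ii) the $n/4$ vectors sampled from $\ker H_X$ are linearly independent w.h.p.\ because $\dim\ker H_X=n/2$ so the probability of dependence at any step is at most $n/4\cdot 2^{-n/4}$. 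Hence $k=n-n/2-n/4=\Theta(n)$. For $w_Z,q_Z=\Theta(n)$, note that $\ker H_X$ is (w.h.p.) a random LDPC code of rate $1/2$: the number of vectors in $\F_2^n$ of weight $\le \alpha n$ is at most $2^{H(\alpha)n}$, so among $2^{n/2}$ codewords the fraction of weight $\le \alpha n$ is at most $2^{(H(\alpha)-1/2)n}$, which is exponentially small for $\alpha$ small enough that $H(\alpha)<1/2$. Union-bounding over the $n/4$ sampled rows of $H_Z$ gives $w_Z=\Theta(n)$. For $q_Z$, symmetry of the random ensemble gives $\Pr[z_i=1]$ close to $1/2$ for a uniform $z\in\ker H_X$ (unless $\ker H_X$ lies in a coordinate hyperplane, which fails w.h.p.), so Hoeffding over the $n/4$ Z-rows yields $q_Z=\Theta(n)$ for each qubit.

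\medskip

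\noindent\textbf{Z-distance.} I need $\min|c|$ over $c\in \ker H_X\setminus \mathrm{rowspace}(H_Z)$. I drop the latter restriction and just show $\ker H_X$ itself has minimum distance $\Theta(n)$, which suffices. This is the classical Gallager random-LDPC analysis: for a fixed $c$ of weight $k$, the probability that a single random row $h$ of $H_X$ satisfies $h\cdot c=0$ equals $\tfrac12(1+(1-2\Delta/n)^k)$. For $k\ge \alpha n$ this is bounded above by $\tfrac12(1+e^{-2\Delta\alpha})$, and independence over the $n/2$ rows gives $\Pr[H_X c=0]\le \bigl(\tfrac12(1+e^{-2\Delta\alpha})\bigr)^{n/2}$. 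A union bound over the $\le 2^{H(\alpha)n}$ candidates of weight in $[\alpha n,n]$ succeeds as soon as $\beta=\Delta/\log n$ is a large enough constant, and low-weight $c$ (weight below $\alpha n$) are ruled out by a separate, even easier union bound using the fact that each row of $H_X$ has weight $\Theta(\log n)$ and is unlikely to be contained in the support of $c$.

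\medskip

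\noindent\textbf{X-distance (the main obstacle).} Here I need $\min|c|$ over $c\in\ker H_Z\setminus\mathrm{rowspace}(H_X)$, and the subtlety is that $H_Z$ is \emph{not} iid random but sampled from $\ker H_X$, which is itself random. The key observation that makes the analysis tractable is the following: if $c\notin \mathrm{rowspace}(H_X)=(\ker H_X)^\perp$, then there exists some $z\in\ker H_X$ with $z\cdot c=1$, and therefore for a uniformly random $z\in\ker H_X$ we have $\Pr[z\cdot c=0]=\tfrac12$ exactly (the map $z\mapsto z+e$ for such an $e$ is a weight-preserving involution exchanging the two sides). Since the $n/4$ rows of $H_Z$ are independent, $\Pr[H_Z c=0]=2^{-n/4}$. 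Union-bounding over all $c$ of weight $\le \alpha n$ with $c\notin\mathrm{rowspace}(H_X)$ gives expected count at most $2^{H(\alpha)n-n/4}$, which is $o(1)$ as soon as $H(\alpha)<1/4$, i.e.\ $\alpha$ below an explicit constant. Therefore every $c$ of weight below this constant fraction of $n$ either lies in $\mathrm{rowspace}(H_X)$ or fails to be killed by $H_Z$, giving $d_x=\Theta(n)$. Combining with the Z-distance bound yields $d=\Theta(n)$, completing the proof. The delicate part to check is that the two random choices (of $H_X$ then of $H_Z$) interact cleanly: the analysis above conditions on $H_X$ being ``typical'' in the sense of having the expected weight profile and giving $\ker H_X$ of dimension $n/2$ with linear distance, all of which hold with probability $1-o(1)$ from the earlier steps.
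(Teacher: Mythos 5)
The paper does not prove this lemma itself; it cites it from Hastings' weight-reduction paper, so there is no in-paper proof to compare against. Your reconstruction has the right high-level architecture and, in particular, your $d_X$ argument (using that $c\notin(\ker H_X)^{\perp}$ implies $z\mapsto z\cdot c$ is an unbiased linear functional on $\ker H_X$, hence $\Pr[H_Zc=0]=2^{-n/4}$ exactly, then a union bound over weight $\le\alpha n$ with $H(\alpha)<1/4$) is clean and correct. The dimension count is also fine.

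However, the $d_Z$ argument is wrong in a way that cannot be patched by relabeling. To lower-bound the distance of $\ker H_X$ you must show no \emph{low}-weight vector is a codeword, i.e.\ union-bound over $c$ of weight in $(0,\alpha n)$, whose count is $\le 2^{H(\alpha)n}$. You instead compute $\Pr[H_Xc=0]$ for $k\ge\alpha n$ (where it is $\approx 2^{-n/2}$) and union-bound over ``the $\le 2^{H(\alpha)n}$ candidates of weight in $[\alpha n,n]$''---but the number of vectors of weight $\ge\alpha n$ is $\approx 2^{n}$, not $2^{H(\alpha)n}$, and ruling them all out of $\ker H_X$ is impossible (and contradicts your own $w_Z=\Theta(n)$ step, which says typical codewords of $\ker H_X$ have weight $\approx n/2$). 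The actual content of the Gallager analysis is the low-weight case you defer to a one-sentence hand-wave: for $c$ of weight $k\ll n/\log n$, each check satisfies $\Pr[h\cdot c=0]\approx 1-\Delta k/n$, so $\Pr[H_Xc=0]\approx e^{-\Delta k/2}$, and one must verify $\binom{n}{k}e^{-\Delta k/2}\to 0$ uniformly over $1\le k\le \alpha n$---this is precisely where the ``large enough $\beta$'' hypothesis is used, and it is the hard part, not an easier side case.

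A secondary issue: the paper's \Cref{lem:ineq-hoeffding} is an additive tail $2e^{-2t^2/(n(b-a)^2)}$ over $n$ bounded summands. For a row of $H_X$ with mean weight $\Delta=\beta\log n$, this gives nothing useful until $t=\Omega(\sqrt{n})$, far above the mean. To conclude $w_X,q_X=O(\log n)$ you need a multiplicative (Chernoff/Poisson) tail, e.g.\ $\Pr[X>(1+\delta)\mu]<e^{-\delta^2\mu/3}$, which for $\mu=\Theta(\log n)$ and $\beta$ large yields a polynomially small failure probability that survives the union bound over $O(n)$ rows and columns.
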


Starting from such codes,
  the weight reduction process in his paper
  induces a quantum code with the following parameters.

\begin{theorem}
  There is an $N=\Tilde{\Theta}(n^{2})$ qubit quantum code with dimension $\Theta(n)$ and distance $d_X=\tilde{\Omega}(n^{2}),d_Z=\tilde{\Omega}(n)$. By the distance balancing trick in~\cite{hastingsWeightReductionQuantum2017}, we can obtain an LDPC code with $N=\Tilde{\Theta}(n^3)$, $K=\Theta(n)$, and distance $d=\tilde{\Omega}(n^2)$.
\end{theorem}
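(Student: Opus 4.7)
The plan is to invoke Theorem 1.1 (the main weight reduction theorem) on Hastings' random code $[[n,\Theta(n),\Theta(n)]]$ with $w_X,q_X = O(\log n)$ and $w_Z,q_Z = \Theta(n)$, but exploit the strong X/Z asymmetry to beat the generic $O(nw^2 \log w)$ qubit bound that would give $\tilde O(n^3)$. The key is to use an \emph{asymmetric} uniformization: instead of replacing each qubit-local complex by $\cS_{4w} \times \cS_{4w}$ with $w = \max(w_X, w_Z) = \Theta(n)$, I would replace it by $\cS_{4w_X}\times \cS_{4w_Z}$ so that the local surface code around a qubit has X-distance $\Theta(w_Z) = \Theta(n)$ and Z-distance $\Theta(w_X) = \tilde\Theta(1)$.

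First I would bound the qubit count contribution from each piece. For each X-check $x$, the square complex has $|X^x(2)| \le w_X \cdot q_Z = \tilde O(n)$ incident faces, so Lemma 3.4 yields a cone $\cX^x$ of size $\tilde O(n)$; summing over the $n/2$ X-checks contributes $\tilde O(n^2)$. By symmetry (in roles, not in weights), each Z-check $z$ has $|X^z(2)| \le w_Z \cdot q_X = \tilde O(n)$ and the Z-cone contributes $\tilde O(n^2)$. Each qubit-local complex, after the asymmetric uniformization, has size $O(w_X \cdot w_Z) = \tilde O(n)$, summing to $\tilde O(n^2)$ over the $n$ qubits. Hence $N = \tilde \Theta(n^2)$.

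Second, I would run the cleaning-based distance analysis of Lemma 4.3 twice, once for each Pauli type. For the X-distance, cleaning through the X-check cones uses the expansion $h = \Omega(1)$ of $\partial \cX^x$ (guaranteed by the dummy expander construction) and the local X-distance $\Theta(w_Z) = \Theta(n)$ of $\cC^q$; this gives $d_X \ge h \cdot w_Z \cdot d_X(C) = \tilde \Omega(n^2)$. For the Z-distance, the roles of $w_X$ and $w_Z$ are swapped and the local Z-distance of $\cC^q$ is only $\Theta(w_X) = \tilde\Theta(1)$, yielding $d_Z \ge h \cdot w_X \cdot d_Z(C) = \tilde\Omega(n)$. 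The dimension analysis in Section 4.2 is unaffected and preserves $K = \Theta(n)$. This yields the unbalanced code $[[\tilde\Theta(n^2),\Theta(n),d_X = \tilde\Omega(n^2),\, d_Z = \tilde\Omega(n)]]$.

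Finally I would apply the distance balancing trick from \cite{hastingsWeightReductionQuantum2017} with a classical repetition code of length $r = n$, which multiplies the qubit count and the Z-distance by $r$ while preserving $K$ and $d_X$. The resulting balanced LDPC code has $N = \tilde\Theta(n^3)$, $K = \Theta(n)$, and $d = \tilde\Omega(n^2)$. The main obstacle I anticipate is bookkeeping for the asymmetric uniformization: I would have to verify that the $\cS_{4w_X}\times \cS_{4w_Z}$ choice still supports the dummy-vertex expander extension on both $\partial \cX^x$ (whose underlying vertex-splitting now has different multiplicities than $\partial \cX^z$) with a uniform Cheeger constant $h = \Omega(1)$, and that the distance of $\cC^q$ factorizes cleanly into X- and Z-contributions coming from the two different comb factors. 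A secondary check is that the cone sizes in Lemma 3.4 applied to the possibly-very-unbalanced boundary graphs still yield only a $\log$-factor overhead rather than something depending on the imbalance.
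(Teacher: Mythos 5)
Your proposal takes a genuinely different route from the paper. The paper itself does not derive this theorem from its own machinery: it introduces the theorem as a restatement of what Hastings' four-step weight reduction does to his random code, and then adds only the one-line remark ``It is clear that our main theorem reproduces the result.'' Reading that remark charitably, the intended point is that Theorem~\ref{thm:main} applied \emph{directly} to the Hastings code (or, more simply, to the fully dense random CSS code as in the next subsection) with $w = \Theta(n)$ already yields $[[\tilde O(n^3), \Theta(n), \Omega(n^2)]]$, matching the post-balancing parameters without ever passing through the intermediate $\tilde\Theta(n^2)$-qubit code. Your attempt instead reproduces Hastings' \emph{intermediate} asymmetric code from the paper's construction, which requires the asymmetric uniformization $\cS_{4w_X}\times\cS_{4w_Z}$ that the paper never spells out. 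This is a real contribution beyond what the paper does, and your observation that the main theorem alone overshoots the qubit count by a factor of $n$ when one wants the pre-balancing code is correct and worth noting.

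Two cautions on the distance bookkeeping. First, the paper's construction sets $\cC^q = C^T(\cX^{xq})\otimes C(\cX^{qz})$, so the nontrivial class in $H_1(\cC^q)$ has a representative of the form (all-ones cocycle in $\cS_{n_x}$) $\otimes$ (single vertex in $\cS_{n_z}$), which has weight $\Theta(n_x)$, not $\Theta(n_z)$; the cocycle representative by duality has weight $\Theta(n_z)$. Since $n_x = q_X = O(\log n)$ and $n_z = q_Z = \Theta(n)$ for the Hastings code, the local distances of $\cC^q$ in the two Pauli types are $\tilde\Theta(1)$ and $\Theta(n)$, but your assignment of ``$X$-distance $= \Theta(w_Z)$'' may have the labels flipped relative to the paper's convention; you should recheck which of the two factors in the tensor product contributes to each distance. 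This labeling issue does not affect the final balanced distance $\tilde\Omega(n^2)$, but it does determine which way the balancing repetition code is applied. Second, you correctly flag (but do not resolve) that the dummy-vertex expander extension must still achieve a uniform Cheeger constant $h = \Omega(1)$ when the two factors of $\cX^q$ are padded by very different amounts; in the paper this is only asserted for the symmetric $\cS_{4w}\times\cS_{4w}$ case, and some argument is needed for the asymmetric variant. Modulo those two checks, your qubit-count decomposition ($\tilde O(n^2)$ from each of the $X$-check cones, $Z$-check cones, and qubit patches), the two invocations of the cleaning lemma, and the final repetition-code balancing with $r = \tilde\Theta(n)$ are all sound.
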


\begin{remark}
In the original paper by Hastings, he also showed how to construct an LDPC code family with dimension $k=\Tilde{\Theta}(N^{2/3})$ and distance $d=\Tilde{\Theta}(N^{2/3})$ by applying the distance balancing technique in~\cite{evraDecodableQuantumLDPC2024}. We can also combine the technique with our weight reduction scheme, and obtain similar parameters as Hastings's result.
\end{remark}

It is clear that our main theorem reproduces the result.

\subsection{Weight reducing random dense CSS code}\label{sec:dense-reduce}

Our method can also be directly applied to random dense CSS codes.

\begin{theorem}
  There exists a family of quantum LDPC codes on $\Theta(n^3 \log n)$ qubits
  with dimension $k = \Theta(n)$, distance $d = \Omega(n^2)$,
    qubit weight $\le 6$, and check weight $\le 5$.
\end{theorem}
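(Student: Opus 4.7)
The plan is to apply Theorem~\ref{thm:main} (the main weight reduction theorem) to a random dense CSS code of parameters $[[n, \Theta(n), \Theta(n)]]$ with check weight $w=\Theta(n)$. Substituting $w=\Theta(n)$ into the blowup $O(n w^2 \log w)$ and distance $\Omega(d w)$ guaranteed by Theorem~\ref{thm:main} yields a code on $N = O(n^3\log n)$ qubits with dimension $\Theta(n)$ and distance $\Omega(n^2)$, with qubit weight $\le 6$ and check weight $\le 5$, exactly as claimed. So the only real content of this theorem is producing a suitable dense CSS seed code.

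To construct the seed, I would sample $H_X \in \FF_2^{m_X \times n}$ with $m_X = n/4$ and i.i.d.\ $\mathrm{Bernoulli}(1/2)$ entries, and then sample $H_Z \in \FF_2^{m_Z \times n}$ with $m_Z = n/4$ uniformly among matrices whose rows lie in $\ker H_X$ (equivalently, take the $n/4$ Z-checks to be uniformly random among those commuting with all X-checks). Three properties must be verified with high probability:
\begin{enumerate}
    \item \emph{Check weight $w = \Theta(n)$.} Each row of $H_X$ and $H_Z$ is a fixed vector of length $n$; by Hoeffding's inequality (Lemma~\ref{lem:ineq-hoeffding}) applied to the indicator entries, each row has Hamming weight in $[n/2 - n^{2/3}, n/2 + n^{2/3}]$ except with probability $e^{-\Omega(n^{1/3})}$. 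A union bound over $O(n)$ rows gives $w = \Theta(n)$ with high probability. A symmetric argument applies on the qubit side.
    \item \emph{Dimension $k = \Theta(n)$.} Standard random-matrix arguments over $\FF_2$ show $\rk H_X, \rk H_Z = (1-o(1))\,n/4$ w.h.p., and so $k = n - \rk H_X - \rk H_Z = \Theta(n)$.
    \item \emph{Distance $d = \Omega(n)$.} This is the main quantitative step: for some absolute $\delta > 0$, no nontrivial $X$- or $Z$-logical has weight $\le \delta n$.
\end{enumerate}

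The hard part is step~(3). My plan is a Gilbert--Varshamov style union bound, done separately for the $X$- and $Z$-distances. For the $Z$-distance, I would fix any candidate low-weight vector $v \in \FF_2^n$ of weight $\le \delta n$ and bound $\Pr[v \in \ker H_X \setminus \im H_Z^T]$. For a random $H_X$ with i.i.d.\ Bernoulli$(1/2)$ entries, $\Pr[H_X v = 0] = 2^{-m_X} = 2^{-n/4}$ for $v \ne 0$. Taking a union bound over the $\binom{n}{\le \delta n} \le 2^{n H_2(\delta)}$ candidate $v$'s, the expected number of low-weight cocycles of $H_X$ is $2^{n(H_2(\delta) - 1/4)}$, which is $o(1)$ for small enough $\delta$. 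It remains to rule out that such a $v$ is itself a stabilizer, i.e.\ lies in the row span of $H_Z$; this follows because $H_Z$ is chosen uniformly from a subspace of dimension $\approx 3n/4$ (namely $\ker H_X$), so conditioned on $H_X$ any nonzero $v \in \ker H_X$ lies in $\im H_Z^T$ with probability $\le 2^{-\Omega(n)}$. A symmetric argument, swapping the roles of $X$ and $Z$, controls the $X$-distance. Combining the two bounds gives $d = \Omega(n)$ w.h.p., completing the existence of the seed code. Theorem~\ref{thm:main} then finishes the proof as described in the first paragraph.
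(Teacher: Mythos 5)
Your plan follows essentially the same route as the paper: take a random dense CSS code with parameters $[[n,\Theta(n),\Theta(n)]]$ and run the main weight-reduction procedure (Theorem~\ref{thm:main}), reading off $N=O(n\cdot w^2\log w)=O(n^3\log n)$, $k=\Theta(n)$, $d=\Omega(dw)=\Omega(n^2)$, and the stated weights. The paper's proof is two sentences (using $0.1n$ checks per side rather than $n/4$) and simply asserts the seed code has those parameters w.h.p.; you have filled in the probabilistic verification. Two fine points worth flagging: (i) once your union bound shows that w.h.p.\ no nonzero $v$ of weight $\le \delta n$ lies in $\ker H_X$, the subsequent step of ruling out $v\in\im H_Z^T$ is redundant for $d_z$; and (ii) the ``symmetric argument'' for $d_x$ and the concentration of the row weights of $H_Z$ need slightly more care than the $H_X$ case, since a row of $H_Z$ is uniform on $\ker H_X$ rather than i.i.d.\ Bernoulli --- one should observe that the marginal of such a row is nearly uniform on $\FF_2^n\setminus\{0\}$ (because $\Pr[H_X u=0]=2^{-m_X}$ while $|\ker H_X|\approx 2^{n-m_X}$ w.h.p.), after which both claims go through. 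These are details of execution, not gaps in the strategy.
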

\begin{proof}
  Sample random dense parity-check matrices $H_x, H_z$
    with $n$ qubits, $0.1n$ X checks, $0.1n$ Z checks,
    satisfying $H_x H_z^T = 0$,
    the resulting CSS code has parameters $[[n, \Theta(n), \Theta(n)]]$ with high probability.
  Applying our weight reduction procedure to this code
    yields a new quantum code with the desired parameters.
\end{proof}

\subsection{Weight reducing random dense CSS code using layer code construction}
\label{sec:dense-reduce-layer-code}

In this section, we briefly introduce a specified analysis for random dense CSS codes using the layer code construction in~\Cref{sec:dense-code-reduce}.
The original analysis in~\cite{williamsonLayerCodes2023} majorly follows the cleaning approach, and obtained the distance lower bound of $\Omega(n_Xn_Z/w)$.
Applying their lower bound directly to our random dense code will only provide us with a distance of $\Omega(n)$.
We will improve the analysis for layer codes in the random dense code setting and provide a lower boun of order $\Omega(n^2/\log n)$.

Taking a $Z$-logical codeword as an example.
Our proof will first contract each $X$ layers to a point, and each qubit layer into a line that has the structure of a repetition code.
The connections between the new structures will be naturally induced by the original layer code construction.
If there is a connection between the original $Z$ layer and a qubit layer at some point in the layer code, the new code will also connect the point in the layer and the corresponding qubit at the same height. Similarly, the connection betwen the $X$ check point and the qubit line follows the same rule.

This will provides us with a new code.
We claim that the distance of the new code $d'$ is no lager than the distance of the layer code $d$.
On a high level, this can be seen as the contraction process will only compress the strings of the logical operators in the layer code, thus the distance will not increase.
To be more concrete, the contraction process can be viewed as performing cleaning on the qubit layer by flipping the $X$ stabilizers in the qubit layer. Note that the flipping process will not affect the structure in the $Z$-check layer.
In this case, we can always clean the logical operator to the form that its vertical support is always along one of the qubit lines, and other connections are in the horizontal direction.
Thus we can contract the qubit layer to the corresponding line, and take the vertical string as the support.
The contraction in the $X$ layer is more direct, as the logical operator do not have any open boundary in the layer, we can directly contract them into one point. The distance argument will follow immediately.

Interestingly, the new code $\cC'$ has an alternative view: it can be viewed as the result of applying thickening and choosing heights procedure in~\cite{hastingsQuantumWeightReduction2023} on the splitted $Z$ check.
The code $\cC_{\text{plain}}$ can be viewed as splitting the $Z$ checks of the dense random code to a line, while keeping the $X$ checks and qubits untouched. \footnote{ The commutation relation between the new $Z$ and $X$ stabilizers are guaranteed by the structure in~\Cref{fig:defect-struct}.} By the result of thickening and choosing heights in~\cite{hastingsQuantumWeightReduction2023}, we have that $d'\geq d_{\text{plain}}n$.

To analyze the distance of code $\cC_{\text{plain}}$, we will utilize the fact of random construction of the code.
Note that the only difference between $\cC_{\text{plain}}$ and the dense code is that the $Z$ stabilizer is splitted to a line.
Thus we only have to clean the $Z$ stabilizer line to obtain a canonical codeword in the dense code.
Note that the weight loss will happen if and only if there is a consecutive line of qubits that is assigned to 1 attaching to the $Z$ stabilizer line.
Note that if we consider the string with consecutive 1s as weight 1, we will have $O(n^2)$ possible strings instead of $O(n)$ weight 1 strings at each check. Thus the probability of the distance is smaller than $d$ can be roughly estimated as $\binom{n^2}{d}/2^n=O(n^{(2d)}/2^n)$. Taking $d=0.1n/\log n$, the probability that the distance is smaller than $d$ is $\exp(-O(n))$. Thus we have that with probability $1-\exp(-O(n))$, $d_{\text{plain}}\geq 0.1n/\log n$.
Combining the above arguments, we have that with probability $1-\exp(-O(n))$, the distance of the layer code $d\geq d'n\geq \Omega(n^2/\log n)$.

\section{Discussion}

\subsection{On optimality}
\label{sec:optimal}

We believe that several of our parameters are optimal.

\subsubsection{Qubit and check weight}

In our construction,
  we obtain qubit weight $6$ and check weight $5$.
(For comparison, Hastings' construction yields qubit weight $8$ and check weight $5$.)
We believe that these bounds are the best possible for generic quantum codes,
  and we now explain the intuition behind this belief.

Our reasoning comes from a geometric perspective on the code.
In our construction,
  the problem frequently reduces to sparsifying a cellular complex.
For a 2D cellular complex,
  one can show that the optimal bound
  on the degree of a vertex
  (the number of edges incident to it) is $5$.
This observation suggests why the optimal bound for check weight is $5$.

The situation for qubit weight is more subtle.
We believe the bound is governed by the properties of
  1-cycles and 1-cocycles,
  which correspond to the X and Z logical operators.
For a 2D cellular complex,
  one can show that the optimal bound
  on the number of faces incident to an edge is $3$.
Consequently, the logical operators contain degree $3$ branches.

Because these logical operators can be deformed,
  it follows that some branch point of an X-logical operator
  must coincide with some branch point of a Z-logical operator.
This in turn implies that some qubit must be incident to
  $3$ X-checks and $3$ Z-checks,
  which means that qubit weight is $6$.
We therefore believe that the optimal bound for qubit weight is $6$.
See \Cref{fig:optimality-qubit-weight} for an illustration.

\begin{figure}[H]
  \centering
  \begin{tikzpicture}[scale=1.3]

    \node[circle,fill=black,inner sep=2pt] (O) at (0,0) {};
    \node[circle,fill=blue,inner sep=2pt] (A) at (-2,0.5) {};
    \node[circle,fill=red,inner sep=2pt] (B) at (2,0.5) {};

    \draw[thick] (O)--(A);
    \draw[thick] (O)--(B);

    \foreach \ang in {90,-30,210}{
      \draw[blue,thick] (A) -- ++(\ang:1.2)
        node[pos=0.8,circle,fill=blue,inner sep=1.5pt] {};
    }

    \foreach \ang in {90,210,-30}{
      \draw[red,thick] (B) -- ++(\ang:1.2)
        node[pos=0.8,circle,fill=red,inner sep=1.5pt] {};
    }

    \foreach \ang in {80,200,-40}{
      \draw[red,thick] (O) -- ++(\ang:1.0)
        node[pos=0.5,circle,fill=red,inner sep=1.5pt] {};
    }
    \foreach \ang in {100,220,-20}{
      \draw[blue,thick] (O) -- ++(\ang:1.0)
        node[pos=0.5,circle,fill=blue,inner sep=1.5pt] {};
    }

    \end{tikzpicture}
  \caption{
    Blue branches represent an $X$-logical operator
      and red branches represent a $Z$-logical operator.
    The two operators slide and meet at a qubit with weight $6$.}
  \label{fig:optimality-qubit-weight}
\end{figure}
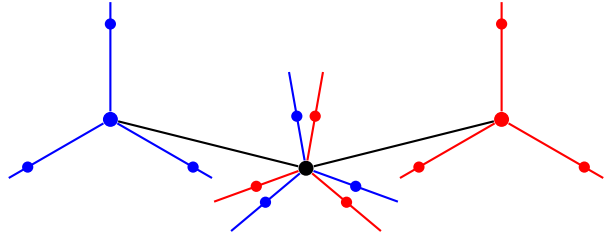

\subsubsection{Qubit blowup}

In our construction,
  the resulting code uses $O(n w^2 \log w)$ qubits.
(For comparison, Hastings' construction requires a polynomial in $w$ of higher degree.)
We believe that the qubit blowup is optimal when $w = o(n)$,
  and we now explain the intuition behind this belief.
(When $w = \Theta(n)$, we can apply the dense code construction \Cref{sec:dense-code-reduce},
  where the blowup is $O(n^3)$.)

The reasoning again comes from a geometric perspective on the code.
Essentially, we believe $w^2 \log w$ is optimal
  for sparsifying a 2D cellular complex.
The factor $w^2$ is natural,
  since it comes from the fact that there are $O(w^2)$ squares
  incident to each qubit and check
  in the process of building the underlying square complex \Cref{sec:square-complex}.
The factor $\log w$ is also essential.
Ultimately, sparsifying a 2D cellular complex
  requires performing a process like sorting.
This is believed to be both necessary and sufficient.
The construction \cite{berdnikov2022parsimonious}
  relies on a sorting process
  which labels each vertex with a $O(\log w)$-bit string.

A more concrete fact is that the coning result \Cref{lemma:cone-simplicial-complex,lemma:cone-cell-complex}
  do incur a $O(\log w)$ factor.
This can be lower bounded by the fact that
  there are $w^{\Theta(w)}$ many graphs with degree $3$ and $O(w)$ vertices.
However, it is known in graph grammar that
  \cite{sleator1992short}
Each 2-cell can be thought of as a transformation.
The goal of coning is to derive the graph from a point.
By performing a careful counting,
  it is shown in \cite{sleator1992short} that
  the number of graph that can be derived from $k$ transformations
  is at most $2^{O(k)}$.
This means to reach all $w^{\Theta(w)}$ graphs,
  we need at least $k = \Omega(w \log w)$ transformations.
This suggests that the $\log w$ factor is necessary,
  whenever we apply coning like process.

\begin{remark}
  Note that similar log factors appears in other contexts.
  For example, the paper on building manifolds with codes
    contains a result on the decongestion lemma \cite[App A]{freedmanBuildingManifoldsQuantum2021}.
  The result incurs a $\log^2 w$ factor,
    can likely be improved to $\log w$.

  The decongestion lemma was later used in several studies
  \cite{saboWeightReducedStabilizerCodes2024,cross2024improved,williamson2024low,ide2025fault,cowtanParallelLogicalMeasurements2025}.
  This implies, the overhead can be reduced from $\log^3 w$ to $\log w$
    in these studies.
\end{remark}

\subsubsection{Distance}

The distance of the weight-reduced code $\Omega(dw)$ is also believed to be optimal
  under the current construction framework.
The intuition is that the factor $w$ arises from
  the distance of the local code $\cC^q$,
  which imposes an inherent limitation.
One might hope to achieve a distance $O(w^2)$,
  since each local code $\cC^q$ has $O(w^2)$ qubits.
However, this seems difficult to achieve,
  since $\cC^q$ is not just an arbitrary code:
  it plays an important role in the sparsification process
  by allowing the same logical operator to be accessed by $w$ distinct X checks simultaneously.
This requirement suggests an inequality of the form
  $n^q \ge d^q w$,
  where $n^q$ is the number of qubits in $\cC^q$,
  and $d^q$ is its distance.
This inequality implies that $d^q \le O(w)$
  when $n^q = \Theta(w^2)$,
  which suggests that the distance of the weight-reduced code
  is at most $O(dw)$.

\subsection{On decoder}

We comment on how one can decode the weight-reduced code.
The strategy has been outlined in
  \cite{hastingsFiberBundleCodes2021}
  and worked out in a specific example
  in \cite{eggerickx2025almost}.
The key point is that
  whenever two codes are related by a (sparse) chain homotopy equivalence,
  the decoding problem for the new code can be reduced
  to the decoding problem for the original code.
Since our construction is based on homotopy operations,
  which naturally induces a chain homotopy equivalence,
  the decoding problem for our weight-reduced code
  can therefore be solved by reducing it to decoding the original code.

\subsection{On logical operator measurement}

There has been a series of recent works on measuring logical operators
  \cite{williamson2024low,ide2025fault,cowtanParallelLogicalMeasurements2025}.
We can improve these results
  by reducing the problem of measuring logical operators
  to a weight-reduction problem.

In \cite{williamson2024low,ide2025fault},
  the task is to measure a single logical operator with support of size $W$
  in an arbitrary quantum LDPC code.
We may view the logical operator as introducing a new X-check or Z-check.
Applying our weight-reduction routine
  reduces the check weights to constants,
  allowing them to be measured fault-tolerantly.
In this case,
  there is only one check with a large weight $W$,
  so the procedure simplifies to applying coning, \Cref{lemma:cone-cell-complex},
  at that check.
This yields a code with constant qubit and check weight,
  and the resulting code has $n + O(W \log W)$ qubits,
  improving on the previous bound of $n + O(W \log^3 W)$.

In \cite{cowtanParallelLogicalMeasurements2025},
  the goal is to measure $t$ logical operators in parallel,
  each with support $\le W$.
Without loss of generality,
  we may assume the logical operators are all of X-type.
This introduces $t$ X-checks of weight $\le W$,
  and the qubits incur an additional $\le tW$ check incidences.
Our weight-reduction procedure then produces a code with
  $n + O(t W \log W) + O(t W) + O(t W \log t)
  = n + O(t W (\log t + \log W))$ qubits.
\begin{itemize}
  \item $O(t W \log W)$ comes from coning at the X-checks,
        because there are $t$ checks of weight $\le W$.
  \item $O(t W)$ comes from product structure at the qubits,
        each check incidence increases some $\cS_{n_x}$ by $1$,
        which contributes $O(1)$ to the qubit count.
  \item $O(t W \log t)$ comes from coning at the Z-checks,
        because each Z-check is incident to $O(1)$ qubits,
        and each qubit is incident to $\le t$ X-checks.
        This means the graph $\partial \cX^z$ has degree $\le t$,
        so coning at the Z-checks incurs a $O(\log t)$ factor.
\end{itemize}
This improves on the earlier result $n + O(t W (\log t + \log^3 W))$.

Finally, to preserve distance,
  the link graph of the checks must satisfy the expansion condition.
As discussed in \Cref{sec:distance},
  this can be achieved by imposing an expander graph,

\section*{Acknowledgements}
The authors thank Adam Wills for discussions in the early stage of the project.
TCL thanks Dominic Williamson and Sunny Zhiyang He
for helpful discussions on logical operator measurements.
TCL was supported in part by funds provided by the U.S. Department of Energy (D.O.E.) under the cooperative research agreement DE-SC0009919 and by the Simons Collaboration on Ultra-Quantum Matter, which is a grant from the Simons Foundation (652264 JM).


\bibliographystyle{unsrt}
\bibliography{references.bib}


\end{document}